\newcommand\reallywidehat[1]{%
\savestack{\tmpbox}{\stretchto{%
  \scaleto{%
    \scalerel*[\widthof{\ensuremath{#1}}]{\kern-.6pt\bigwedge\kern-.6pt}%
    {\rule[-\textheight/2]{1ex}{\textheight}}
  }{\textheight}%
}{0.5ex}}%
\stackon[1pt]{#1}{\tmpbox}%
}
\newcommand{\zo}{\{0,1\}}
\newcommand{\N}{{\mathbb{N}}}
\newcommand{\R}{{\mathbb{R}}}
\newcommand{\eps}{\epsilon}
\newcommand{\tO}{\tilde{O}}
\newcommand{\polylog}{\mathop\mathrm{polylog}\nolimits}
\newcommand{\agl}[0]{{\cA_{\text{GL}}}}
\newcommand{\abl}[0]{{\cA_{\text{BL}}}}
\newtheorem{theorem}{Theorem}[section]
\newtheorem*{theorem*}{Theorem}
\newtheorem{corollary}[theorem]{Corollary}
\newtheorem{lemma}[theorem]{Lemma}
\newtheorem{definition}[theorem]{Definition}
\newtheorem{claim}[theorem]{Claim}
\newtheorem{fact}[theorem]{Fact}
\newtheorem*{oquestion*}{Open Question}
\newtheorem*{rep@theorem}{\rep@title}
\newcommand{\newreptheorem}[2]{%
\newenvironment{rep#1}[1]{%
 \def\rep@title{#2 \ref{##1}}%
 \begin{rep@theorem}}%
 {\end{rep@theorem}}}
\DeclareMathOperator{\E}{{\mathbb E}}
\DeclareMathOperator*{\argmax}{\mathop{\text{argmax}}}
\newcommand{\norm}[1]{\left\lVert#1\right\rVert}
\crefname{ineq}{inequality}{inequalities}
\newcommandx{\unsure}[2][1=]{\todo[linecolor=red,backgroundcolor=red!25,bordercolor=red,#1]{#2}}
\newcommandx{\change}[2][1=]{\todo[linecolor=blue,backgroundcolor=blue!25,bordercolor=blue,#1]{#2}}
\newcommandx{\info}[2][1=]{\todo[linecolor=OliveGreen,backgroundcolor=OliveGreen!25,bordercolor=OliveGreen,#1]{#2}}
\newcommandx{\improvement}[2][1=]{\todo[linecolor=Plum,backgroundcolor=Plum!25,bordercolor=Plum,#1]{#2}}
\newcommandx{\thiswillnotshow}[2][1=]{\todo[disable,#1]{#2}}
\newcommand{\ns}[1]{{\color{red}$\ll$\textsf{#1 --Negev}$\gg$\marginpar{\tiny\textbf NS}}}
\newcommand{\csnote}[1]{{\color{purple}$\ll$\textsf{#1 --Cliff}$\gg$\marginpar{\tiny\textbf CS}}}
\newcommand{\aanote}[1]{{\color{blue}$\ll$\textsf{#1 --Alex}$\gg$\marginpar{\tiny\textbf AA}}}
\newcommand{\ssnote}[1]{{\color{violet}$\ll$\textsf{#1 --Sandip}$\gg$\marginpar{\tiny\textbf SS}}}
\renewcommand{\aanote}[1]{}
\renewcommand{\ns}[1]{}
\renewcommand{\csnote}[1]{}
\renewcommand{\ssnote}[1]{}
\newcommand{\remove}[1]{}
\DeclareMathOperator{\parent}{parent}
\def\poly{\operatorname{poly}}
\def\polylog{\operatorname{polylog}}
\renewcommand{\bf}{\normalfont \bfseries}
\renewcommand{\paragraph}[1]{\vspace{0.7ex}\noindent{\bf #1}}
\newcommand{\IC}{\mathcal{I}}
\newcommand{\LCS}[0]{\textsc{LCS}\xspace}
\newcommand{\LIS}[0]{\textsc{LIS}\xspace}
\newcommand{\OPT}[0]{\textsf{OPT}\xspace}
\DeclareMathOperator{\uniform}{\mathtt{Uniform}}
\newcommand{\1}[0]{\mathbb{1}}
\newcommand{\bX}{\mathbf{X}}
\newcommand{\bY}{\mathbf{Y}}
\newcommand{\bZ}{\mathbf{Z}}
\DeclareMathOperator{\Ber}{\mathtt{Ber}}
\DeclareMathOperator{\Bin}{\mathtt{Bin}}
\def\1{{\mathbf{1}}}
\newcommand{\cA}{\mathcal{A}}
\newcommand{\cD}{\mathcal{D}}
\newcommand{\cE}{\mathcal{E}}
\newcommand{\cF}{\mathcal{F}}
\newcommand{\cG}{\mathcal{G}}
\newcommand{\cI}{\mathcal{I}}
\newcommand{\cJ}{\mathcal{J}}
\newcommand{\cP}{\mathcal{P}}
\newcommand{\cR}{\mathcal{R}}
\newcommand{\cS}{\mathcal{S}}
\newcommand{\cY}{\mathcal{Y}}
\newcommand{\Pn}[0]{{\mathcal{P}}}
\def\colorful{1}
\newcommand{\blue}[1]{{{\color{blue}#1}}}
\newcommand{\blue}[1]{{{#1}}}
\newcommand{\ignore}[1]{{}}
\newcommand{\cor}[0]{1}
\newcommand{\uncor}[0]{0}
\newcommand{\reslis}[0]{\textsf{Block-LIS}\xspace}
\newcommand{\genlis}[0]{\textsf{Genuine-LIS}\xspace}
\newcommand{\blocklis}[0]{{\textsf{Block-LIS}}}
\newcommand{\prectree}[0]{Precision-Tree\xspace}
\newcommand{\prectrees}[0]{Precision-Tree\xspace}
\newcommand{\estreslis}[0]{\textsc{EstimateBlockLIS}\xspace}
\newcommand{\estgenlis}[0]{\textsc{EstimateGenuineLIS}\xspace}
\title{
Estimating the Longest Increasing Subsequence in Nearly Optimal Time
}
\author{
Alexandr Andoni\thanks{Research supported in part by NSF grants (CCF-1617955, CCF-1740833, CCF-2008733), and Simons Foundation (\#491119).}\\Columbia University\\\texttt{andoni@cs.columbia.edu} \and 
Negev Shekel Nosatzki\thanks{Research supported in part by NSF grants (CCF-1617955 and CCF-1740833), and Simons Foundation (\#491119).}\\Columbia University\\\texttt{ns3049@columbia.edu}
\and
Sandip Sinha\thanks{Supported by NSF grants  CCF-1714818, CCF-1822809, IIS-1838154, CCF-1617955, CCF-1740833, and by the Simons Collaboration on Algorithms and Geometry.}
\\
Columbia University\\
\texttt{sandip@cs.columbia.edu}
\and
Clifford Stein\thanks{Research partly supported by NSF Grants  CCF-1714818 and CCF-1822809.}
\\
Columbia University\\
\texttt{cliff@cs.columbia.edu}
}
\date{}
\begin{document} 

\maketitle

\begin{abstract}

Longest Increasing Subsequence (\LIS) is a fundamental statistic 
of a sequence, and has been studied for decades. While the \LIS of a sequence of length $n$ can be computed exactly in time $O(n\log n)$, the complexity of estimating the (length of the) \LIS in sublinear time, especially when \LIS $\ll n$, is still open.

We show that for any $n \in \N$ and $\lambda = o(1)$, there exists a (randomized) \emph{non-adaptive} algorithm that, given a sequence of length $n$ with \LIS $\ge \lambda n$, approximates the \LIS up to a factor of $1/\lambda^{o(1)}$ in $n^{o(1)} / \lambda$ 
time.
Our algorithm improves upon 
prior work substantially in terms of both approximation and run-time:
(i) we provide the first sub-polynomial approximation for $\LIS$ in sub-linear time; and (ii) our \emph{run-time complexity} essentially matches the trivial \emph{sample complexity} lower bound of $\Omega(1/\lambda)$, which is required to obtain any non-trivial approximation 
of the $\LIS$. 

As part of our solution, we develop two novel ideas which may be of independent interest. First, we define a new \genlis problem, in which each sequence element may be either genuine or corrupted. In this model, the user receives unrestricted access to the actual sequence, but does not know a priori which elements are genuine. 
The goal is to estimate the $\LIS$ using genuine elements only, with the minimal number of tests for genuineness. 
The second idea,
{\em Precision Tree}, enables accurate estimations for composition of general functions from ``coarse'' (sub-)estimates.
{\em Precision Tree} essentially generalizes classical precision sampling, which 
works only for summations.
As a central tool, the {\em Precision Tree} 
is 
pre-processed on a set of samples, which thereafter is repeatedly 
used by multiple components 
of the algorithm, improving their
amortized complexity. 
\end{abstract}















\ns{Address comments in Section 6, in particular: update language}


\ns{Address other small comments in section 7, Alex to consider moving Candidate intervals intuition to earlier on.}

\ns{finish changing $y \cap Y$ to $y(X,Y)$ throughout paper (in particular section 7 complexity analysis and invariants)}

\ns{formalize better proofs of section 8}


\ns{explain how to choose $\beta = \Theta(\log n)$ and $\tau$ with $\tau, n$ are powers of $\beta$}

\ns{make $\zeta$ a global parameter, since right now each w.h.p is $1-n^{-10}$, where $n$ is the instance size, would be nicer to have all w.h.p the same. For now I use $\beta$ for that.}






\newpage

{
  \hypersetup{linkcolor=black}
  \tableofcontents
}
\thispagestyle{empty}

\newpage

\setcounter{page}{1}

\section{Introduction} \label{intro}

Longest Increasing Subsequence (LIS) is a fundamental measure of a sequence, and has been studied for decades. Near linear-time algorithms have been known for a long time, for example,  the Patience Sorting algorithm \cite{Hammersley72,Mallows73} finds a LIS of a sequence of length $n$ in time $O(n\log n)$. The celebrated Ulam's problem asks for the length of a LIS in a random permutation; see discussion and results in \cite{AldousD99}.
 \LIS is also an important special case of the problem of finding a Longest Common Subsequence (LCS) between two strings, as \LIS is \LCS when one of the strings is monotonically increasing. Recently, there has been significant progress in approximation algorithms for LCS of two or more strings \cite{rubinstein2019approximation, DS21, GKLS21, BD21}. Moreover, \LIS is often a subroutine in LCS algorithms: for example, when strings are only mildly repetitive \cite[Chapter 12]{Gus-book}, or more recently in approximation algorithms \cite{HSSS19,Nosatzki2021ApproximatingTL}. Longest increasing subsequences have multiple applications in areas such as  random matrix theory, representation theory, and physics \cite{AldousD99}, and the related \LCS problem also has multiple applications in  bioinformatics, and is used for data comparisons such as in the \texttt{diff} command. 

In the quest for faster algorithms, researchers started studying whether
 we can estimate the length of a \LIS (denoted $\LIS$ as well) in {\em sublinear time}. An early version of this question underpins one of the first sublinear-time algorithms: to test whether an array is sorted, or monotonically increasing \cite{EKKRV00} --- i.e., whether the length of the \LIS is $n$ or is at most $(1-\eps)n$.  \cite{EKKRV00} gave a $O(\log n/\eps)$ 
 time  algorithm, and this running time was later shown to be  tight \cite{ACCL04, Fischer_comparisons_testing}. Since then, there have been numerous influential results on testing monotonicity and other similar properties; see, e.g., \cite{DodisGLRRS99, PRR06, Fischer01theart, CS17,BCS20,BS19ulam,PRW20, BCLW19, AN-ulam, SunWoodruff_LCS_LIS} and the book \cite{goldreich2017introduction}.

While monotonicity results focus on the case when $\LIS\approx n$, 
the case when 
$\LIS \ll n$ has seen much less progress.
The first result for this regime \cite{SaksS17} shows how to
$(1+\epsilon)$-approximate the length when the $\LIS$ is still large:
they distinguish the case when $\LIS \geq \lambda n$ from the case when $\LIS \leq (\lambda - \eps)n$ (for any $\lambda > 0$) 
using 
$(1/\eps)^{O(1/\eps)}\log^{O(1)}n$ time, which only gives a $(1 + \eps)$-factor approximation in truly sublinear time if $\lambda = \Omega(\log \log n / \log n)$. 
For arbitrary $\lambda<1$, assuming that $\LIS\ge \lambda n$ 
\cite{rubinstein2019approximation} gave an algorithm that achieves $O(1/\lambda^3)$-factor
approximation of $\LIS$ in $\tilde{O}(\sqrt{n}/\lambda^7)$ time.\footnote{The formal definition of an approximation algorithm here is one that 
has to return $\widehat{\LIS}$ such that $\widehat{\LIS} \leq \LIS$, and the approximation factor is $\LIS/\widehat{\LIS}$. 
Equivalently, one can conceptualize an $\alpha$-approximation algorithm (for $\alpha \geq 1$) as one that can distinguish the case when $\LIS \ge \lambda n$ from the case when $\LIS< \lambda n/\alpha$.
}
Recently, \cite{MS21} improved upon this result by presenting an algorithm with approximation $O(1/\lambda^\eps)$ 
and runtime $O(n^{1 - \Omega(\eps)} (\log n / \lambda)^{O(1/\eps)})$. Independently, 
\cite{NV20} 
obtained a non-adaptive $O(1/\lambda)$-approximation algorithm using 
$\tilde{O}(\sqrt{r}/\lambda^{2})$ queries, where $r$ is the number of distinct values. The authors of \cite{NV20} also proved a lower bound, showing that any \emph{non-adaptive} algorithm that estimates
the $\LIS$ to within an additive error of $\eps n$ 
requires $(\log n)^{\Omega(\log (1/\eps))}$ queries.

To put the above results into context, contrast \LIS  with the problem of estimating the weight of a binary vector of length $n$: when the weight is at least $\lambda n$, 
we can 
approximate the weight up to a factor of $1 + \eps$ by sampling $O_\eps(1/\lambda)$ positions, which is optimal. So far, one cannot rule out that a similar performance is achievable for estimating $\LIS$ too, in fact for $\lambda$ as large as $1/\log n$. 
The aforementioned prior results not only have approximation factors that are polynomial in $1/\lambda$ but also time / sample complexities that are worse by a polynomial factor in $n$ or $1/\lambda$. 
Hence the following guiding question remains open:

\begin{center}{\em
    Can we estimate the length of \LIS in essentially the time needed to estimate the weight of a binary vector?}
\end{center}

\subsection{Our contributions}
In this paper we come close to answering the above question in the affirmative, 
by obtaining an algorithm that runs in time 
near-linear in $1/\lambda$, 
and achieves 
sub-polynomial in $1/\lambda$ approximation. 
We show the following:

\begin{theorem}[Main theorem]\label{thm::lis_main}
	For $n \in \N$, and any $\lambda  = o(1)$, 
	there exists a (randomized) non-adaptive 
	algorithm that, given a sequence of length $n$ with $\LIS\ge \lambda n$, 
	approximates the length of $\LIS$ up to a
	$1/\lambda^{o(1)}$
	factor in $O\left(\tfrac{1}{\lambda}\cdot n^{o(1)}\right)$ time with high probability. 
\end{theorem}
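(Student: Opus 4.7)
The plan is a recursive divide-and-conquer whose sample budget is amortized across subproblems by a shared data structure. Fix a branching factor $\beta = n^{o(1)}$ so the recursion depth $d = \log_\beta n$ is $o(\log n)$, and at each level partition the current block into $\beta$ consecutive sub-blocks. On each sub-block $B_i$ the algorithm recursively returns an estimate $\widehat{\ell}_i$ of $\LIS(B_i)$ together with an approximate value range $I_i \subseteq \R$ in which the witnessing increasing subsequence lies. The estimates at a level are then combined by solving a \blocklis instance: pick indices $i_1 < \dots < i_k$ whose ranges satisfy $\sup I_{i_j} \le \inf I_{i_{j+1}}$ and which maximize $\sum_j \widehat{\ell}_{i_j}$; this aggregate approximates the LIS at the current level.

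The naive recursion incurs cost $\beta \cdot T(n/\beta)$ and hence total work $\beta^d/\lambda = n/\lambda$, which is far too expensive. The central idea is to amortize sampling via a \prectree: draw a single pool of $n^{o(1)}/\lambda$ positional samples from $X$, organize them as a tree aligned with the recursive partition, and let each subproblem draw its samples from this shared pool rather than re-sampling. Since LIS does not decompose as a sum, classical precision sampling does not apply directly; instead one must show that the composition of coarse child-estimates into a block-level estimate is stable under a multi-level \emph{precision schedule}, so that only the children whose coarse estimate suggests a possible contribution to the optimal block-LIS path are refined further. A second subtlety is that when recursing on $B_i$, its ``useful'' elements depend on $I_i$ and on partial block-LIS choices fixed by ancestor levels; the algorithm has free access to $B_i$ but certifying whether a particular element is compatible with those choices requires expensive queries elsewhere. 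This is precisely a \genlis instance, for which one designs an algorithm using $O(1/\lambda)$ genuineness tests while losing only a $1/\lambda^{o(1/d)}$ factor in approximation per level.

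The main obstacle is controlling the multiplicative error compounding across $d = o(\log n)$ recursive levels: both the \blocklis aggregation loss and the sampling noise introduced by the \prectree must combine to at most $1/\lambda^{o(1)}$ overall, and the total work must stay $n^{o(1)}/\lambda$ rather than blow up by a factor of $\beta$ at each level. This forces the per-level approximation slack to be $1/\lambda^{o(1/d)}$, which in turn dictates the precision schedule on the tree and the tolerance of the \genlis sub-algorithm; the two must be co-designed so that a ``coarse'' estimate at one node of the tree is enough to drive the ancestor's precision choice. Setting $\beta = \Theta(\log n)$ and $d = \Theta(\log n / \log \log n)$, and combining the recursive \blocklis decomposition, the amortizing \prectree, and the \genlis reduction yields runtime $O(n^{o(1)}/\lambda)$ with approximation factor $1/\lambda^{o(1)}$, as claimed.
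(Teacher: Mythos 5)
Your proposal captures the right high-level architecture --- a branching recursion amortized by a precision tree, with a \genlis abstraction at the recombination step --- but two ideas essential to closing the recursion are missing. First, the aggregation cannot be a weighted maximization $\max \sum_j \widehat{\ell}_{i_j}$: evaluating $\widehat{\ell}_i$ for every child is exactly the $\beta \cdot T(n/\beta)$ work you correctly flag as too expensive, and the precision tree on its own does not let you skip children in a weighted combination. The paper instead \emph{thresholds}: for a discretized $\rho \in E_2(1/\lambda)$ it declares a candidate $y$-interval genuine iff its local \blocklis estimate exceeds $\kappa_\rho = r/\rho$, solves an \emph{unweighted} \genlis over the candidate intervals (via \Cref{lm::tree_decomposition}), multiplies the answer by $\kappa_\rho$, and maximizes over $\rho$. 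Thresholding turns the combination into a pure-LIS problem, and the \genlis algorithm is then able to certify a long monotone subsequence with only $\tO(1/\lambda)$ genuineness tests rather than recursing into all children. Second, you omit the dense/sparse dichotomy inside the \genlis algorithm, which is where the sub-polynomial approximation actually comes from: the dense estimator (sampling in the union of pseudo-solutions to estimate the genuine fraction, as in \cite{rubinstein2019approximation}) alone caps out near a $1/\lambda$ factor; the sparse estimator re-instantiates a \blocklis recursion restricted to genuine elements, which sharply improves the density parameter $\lambda|X|/|y(X,Y)|$ and hence the recursive approximation. The algorithm must trade these off via the free parameter $\gamma$, chosen separately at each level of the recursion.

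Separately, the parameter mismatch is not cosmetic: in the paper $\beta = \Theta(\log n)$ is the arity of the \prectree only, while the \reslis recursion branches with $\tau = n^\eps$ for $\eps = 1/\log\log(1/\lambda)$, giving a much shallower recursion. The compounding of per-level slack is governed by an explicit system of mutual recurrences on the approximation and complexity functions $a_s, a_g, c_s, c_g$ (\Cref{thm::gl_rec_tree}, \Cref{thm::sl_rec_tree}, and \Cref{cl::recurse}), with the constraint $\lambda_1\lambda_2 = \Omega(\lambda k')$ tying the two sub-calls together at each level; a uniform ``$(1/\lambda)^{o(1/d)}$ loss per level over $\Theta(\log n/\log\log n)$ levels'' does not reproduce this analysis and would not obviously give the stated time bound.
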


We find this result quite surprising, as one may guess that 
when $\lambda\approx 1/\sqrt{n}$ (which is the \LIS length of a random permutation), one would need to read essentially the entire sequence (implying a $\Omega(1/\lambda^2)$ lower bound on the number of queries needed). 
On the contrary, when $\lambda \approx 1/\sqrt{n}$, our run-time (and hence sample complexity) nearly matches the {\em streaming complexity} from \cite{GJKK07,EJ08} (albeit with worse approximation).


\paragraph{Technical Contributions.} Our main technical contributions are three-fold (see technical overview in \Cref{technical_overview}):
\begin{itemize}
\item
We define (the \LIS problem in) a new sublinear computational model,
which we call $\genlis$ and which may be of independent interest. In the $\genlis$ problem, we are given a sequence $y\in \N^n$, with a caveat 
that only some of the elements of $y$ are ``genuine'' and the others are ``corrupted'', a property we can test an element for. The goal is to estimate the length of $\LIS$ {\em among the genuine} elements of $y$,
using as few tests as possible, while the values of $y$ are known to the algorithm ``for free". 
\item
We show how one can efficiently reduce the \LIS problem to \genlis and vice versa. These reductions together constitute the backbone of our recursive algorithm.
\item
To obtain the promised query complexity, we develop a new data structure, that we call a \prectree. This data structure samples non-adaptively (but non-uniformly) elements of the input $y$ and preprocesses them for efficient operations on the sampled elements. This part is the only part of the algorithm that samples the input string --- the aforementioned recursive calls of \LIS and \genlis problems access this data structure only. Furthermore, the data structure allows one to compute accurate estimations for composition of general functions from ``coarse'' (sub-)estimates --- a requirement for our recursive approach.
\end{itemize}

While this paper makes progress in understanding the complexity of estimating the $\LIS$, the following important question remains open:

\begin{oquestion*}[($1+\eps$)-approximation]
	Does there exist an
	algorithm that, given a sequence of length $n$ with $\LIS\ge \lambda n$, estimates the $\LIS$ up to a
	$1+\eps$
	factor in $O\left(\tfrac{1}{\lambda}\cdot n^{o(1)}\right)$ time with probability 2/3?
\end{oquestion*}

We believe that finding an improved approximation algorithm for the $\genlis$ problem above may lead to a $(1+\eps)$ approximation algorithm for \LIS.

We give a technical overview of our algorithm in  \Cref{technical_overview}, 
after setting up preliminaries 
in \Cref{prelims}. 
\Cref{sec:technical:main} contains the proof of the main theorem, assuming results proved in subsequent sections. In \Cref{sec::precision}, we develop the \prectrees data structure that is used to improve the sample and time complexity 
of the main algorithm. 
The guarantees of the two primary subroutines 
are proved in \Cref{sec::genlis} and \Cref{sec::reslis}.  
In \Cref{sec:genlis_ext}, we extend these algorithms in certain ways critical to our final application.

\subsection{Related work}

Computing the length of \LIS has also been studied in the streaming model, where settling its complexity is a major open problem \cite{P44}. In this setting, the main question is to determine 
the minimum space required to estimate the length of \LIS by an algorithm reading the sequence left-to-right. 
\cite{GJKK07,EJ08} gave  deterministic one-pass algorithms to $(1+\eps)$-approximate the \LIS using $O(\sqrt{n})$ space, and matching lower bounds against deterministic algorithms were given by \cite{EJ08, GG07}. These lower bounds are derived using deterministic communication complexity lower bounds and provably fail to extend to randomized algorithms \cite{Chakrabarti2012_LIS}. However, no randomized algorithm requiring $o(\sqrt{n})$ space is known for this problem either. We note that our algorithm can be used in the streaming setting, yielding streaming complexity $O(n^{1/2+\eps})$ for approximation $n^{o(1)}$.

There has been much more success on the ``complement'' problem of estimating the distance to monotonicity, i.e., $d_m:= n-\LIS$, in both the sublinear-time and the streaming settings. In the random-access setting, the problem was first studied in \cite{ACCL04}, and later in \cite{SaksS17}, who gave an algorithm that achieves $(1+\eps)$-approximation in time $\poly(1/d_m, \log n)$ for any constant $\eps>0$. These algorithms have been also used, indirectly, for faster algorithms for estimating Ulam distance \cite{AIK-product, NSS17} and smoothed edit distance \cite{AK-smoothed}. 

In the streaming setting, several results were obtained which achieve 
$O(1)$-approximation of $d_m$ using $\polylog(n)$ space \cite{GJKK07, EJ08}. This culminated in a randomized $(1+\eps)$-approximation algorithm using $\polylog(n)$ space by Saks and Seshadhri \cite{SS13_streaming_LIS}, and a deterministic $(1+\eps)$-approximation algorithm using $\polylog(n)$ space by Naumovitz and Saks \cite{NS15}. \cite{NS15} also showed space lower bounds against $(1+\eps)$-approximation streaming algorithms of $\Omega(\log^2 n/\eps)$ (deterministic) and $\tilde{\Omega}(\log^2 n/\eps)$ (randomized).

The \LIS problem has also been studied recently in other settings, such as the MPC and the fully dynamic settings. 
\cite{IMS17_MPC} gave a $(1+\eps)$-approximation, $O(1/\eps^2)$-round MPC algorithm for $\LIS$ whenever the space per machine is $n^{3/4+\Omega(1)}$. In the dynamic setting, a sequence of works \cite{MS20_dynamic_LIS, GJ21_dynamic_LIS} culminating in \cite{KS21}, gave the first exact dynamic \LIS algorithm with sublinear update time, and also gave a deterministic algorithm with update time $n^{o(1)}$ and approximation factor $1-o(1)$. \cite{GJ21_lower_bound_dynamic_LIS} showed conditional lower bounds on update time for certain variants of the dynamic $\LIS$ problem.

\paragraph{Subsequent Work.}
The main result from this paper was very recently used for estimating the Longest Common Subsequence (\LCS) problem. In particular, \cite{Nosatzki2021ApproximatingTL} gave the first sub-polynomial approximation for \LCS in linear time, invoking our algorithm as a sub-routine. 

\section{Preliminaries} \label{prelims}


	
\paragraph{Sequences and intervals.}
Given a set $\Omega$ and $n \in \N$, a sequence $y = (y_1, y_2, \cdots, y_n) \in \Omega^n$ is an ordered collection of elements in $\Omega$.
A block sequence $y \in \Omega^{n \times k}$ is a partially ordered collection of elements in $\Omega$.  Abusing notation, we will also allow for some elements in a block to be ``null'' --- for example, we write $y\in \N^{n\times k}$ to also mean $y\in \{\N,\bot\}^{n\times k}$, where $\bot$ is ``null''.

For $\ell \in [n]$, we say that $z = (z_1, \cdots, z_\ell)$ is a subsequence of $y$ of length $\ell$, and denote it by $\{y_{i_j}\}_j$, if there exist integers $1 \leq i_1 < i_2 < \cdots < i_\ell \leq n$ such that $z_j = y_{i_j}$ for all $j \in [\ell]$. We refer to the indices $i_j$ as coordinates and the values $y_{i_j}$ as element values. 
For a block sequence $y$, we say that $z = (z_1, \cdots, z_\ell)$ is a subsequence of $y$ of length $\ell$, and denote it by $\{y_{w_j}\}_j$, if there exist integers $1 \leq i_1 < i_2 < \cdots < i_\ell \leq n$ such that $z_j \in y_{i_j,*}$ for all $j \in [\ell]$. 


Define the interval space $\cI \triangleq \{[a,b] \mid a,b \in \N, a \leq b \} \cup \{[a,b) \mid a \in \N, b \in [1,\infty], a < b \}$. 
For $I \in \cI$, we use $|I|$ to denote 
$|I \cap \N|$, i.e., the number of natural numbers contained in interval $I$. We often 
refer to an $x$-interval $X \in \cI \cap 2^{[n]}$ as an interval of coordinates, and a $y$-interval $Y \in \cI$ as an interval of element values.

For a block sequence $y \in \N^{n \times k}$ and a $y$-interval $Y \in \cI$, we write $y \cap Y$ to denote the \emph{multi-set} of elements in $y$ that are also in $Y$. Also, for $X\subseteq[n]$, $y(X)$ is the sequence with first coordinates restricted to $X$. We also define $y(X,Y):=y(X) \cap Y$.



\medskip

\paragraph{Monotonicity.} 
We 
 define monotone sets as follows:
 
 \begin{definition}[Monotone sets]
		Fix a (potentially partial) ordered set $(\Omega, <)$. We say that a set 
	$P \subseteq \N \times \Omega$ is \emph{monotone} if 
	for all $(i,u),(j,v) \in P \times P$, we have (i) $i = j \Leftrightarrow u=v$; and (ii) $i < j \Leftrightarrow u < v$.
\end{definition}

%
%

Note that this definition captures the notion of an increasing subsequence. In particular, for the standard notion of an increasing subsequence over natural numbers, we take $\Omega = \N$ and $<$ as the usual ``less than" relation over $\N$ (a total order). However, we will need this more general definition to consider increasing subsequences over other partially ordered sets, like the space of intervals $\cI$.

For a finite set $P \subset \N \times \Omega$, a longest increasing subsequence ($\LIS$) of $P$ is a monotone set $Q \subset P$ of maximum cardinality. 
We often use $\OPT$ to refer to a particular $\LIS$, and use $|\OPT|$ to denote its length.

\medskip

\paragraph{Distributions.} 
For $p \in [0,1]$, we use $\Ber(p)$ to denote the Bernoulli distribution with success probability $p$, and $\Bin(n,p)$ to denote the binomial distribution with parameters $n$ and $p$. 
By convention, we project $p$ to the range $[0,1]$ whenever $p > 1$ or $p < 0$.

We use ``i.i.d. random variables" to mean that a collection of random variables is independent and identically distributed, and use ``sub-sampling i.i.d. with probability $p$" 
to mean that each element is sampled independently with equal probability $p$.

\medskip

\paragraph{Operations on Vectors, Sets, Functions.}
For a set $A \subset \R$ and a number $\alpha$, we define $A + \alpha := \{a + \alpha : a \in A\}$ and $\alpha A := \{\alpha \,a  : a \in A\}$.
We write 
$\log$ to denote binary logarithm.

The notation $\circ$ is used for function composition (i.e., $g \circ f(x) = g(f(x))$, and 
$\oplus$ is used for direct sum.
We use the notation $*$ as argument of a function, by which we
mean a vector of all possible entries. For example, $f(*)$ is a vector of
$f(i)$ for $i$ ranging over the domain of $f$ (usually clear from the
context).

We use $E_b(k)$ for $b,k \in \N \cup \{\infty\}$ to denote the set of powers of $b$ bounded by $k$, 
i.e.,
$E_b(k) \triangleq \{b^i \mid i \in \N\} \cap [1,k]$. We also define $E_b \triangleq E_b(\infty)$.
We use $\R_+$ to denote the set of non-negative real numbers.

\medskip

\paragraph{Approximations.}
Our constructions generate approximations which contain both multiplicative and additive terms. 
We use the following definition:
\begin{definition}[$(\alpha,\beta)$-Approximation]
	For $\alpha \geq 1$ 
	and $\beta>0$, an $(\alpha,\beta)$-approximation $\hat{q}$ of a quantity $q$ is an $\alpha$-multiplicative and $\beta$ additive estimation of $q$, i.e., $\hat{q} \in [q/\alpha - \beta, q]$.
\end{definition}

The following fact shows that 
to obtain a multiplicative approximation 
which is a function of the $\LIS$, it suffices to show $(\alpha,\beta)$-approximation.

\begin{fact}\label{ft::approx}
    Suppose we have an algorithm $\cA$ that for any $n \in \N, \lambda \in (0,1)$, for some 
    $q \in (0, 1)$, 
    outputs a $(1/\lambda^q,\lambda n)$-approximation for some unknown quantity $\ell \in [0,n]$ in time $t$. Then there exists an algorithm 
    that outputs 
    $\widehat{\ell} \in [\Omega(\delta^p),1] \cdot \ell$ in time $t + O(1)$, where $p \triangleq \tfrac{q}{1-q}$, as long as $\ell \geq  2\delta n$.
\end{fact}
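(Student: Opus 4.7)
The plan is to choose $\lambda$ to balance the multiplicative and additive terms in the guarantee of $\cA$. Since $\cA$ returns $\hat{q} \in [\lambda^q \ell - \lambda n,\, \ell]$, we want the additive error $\lambda n$ to be dominated by, say, half of the multiplicative lower bound $\lambda^q \ell$. Solving $\lambda n \leq \tfrac{1}{2}\lambda^q \ell$ yields $\lambda^{1-q} \leq \ell/(2n)$, and the hypothesis $\ell \geq 2\delta n$ ensures this is satisfied by the choice $\lambda := \delta^{1/(1-q)}$.

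Concretely, first I would invoke $\cA$ with $\lambda := \delta^{1/(1-q)}$ and output $\hat{\ell} := \hat{q}$. The preprocessing is $O(1)$, so the total running time is $t+O(1)$. The precondition $\lambda \in (0,1)$ of $\cA$ is met because $\ell \leq n$ together with $\ell \geq 2\delta n$ forces $\delta \leq 1/2$, and $1/(1-q)>0$, so $\lambda \leq (1/2)^{1/(1-q)} < 1$.

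Next, using $\lambda^q = \delta^{q/(1-q)} = \delta^p$ and $\lambda^{1-q} = \delta$, the additive error satisfies
\[
\lambda n \;=\; \lambda^q \cdot \lambda^{1-q} \cdot n \;=\; \delta^p \cdot \delta n \;\leq\; \delta^p \cdot \tfrac{\ell}{2} \;=\; \tfrac{1}{2}\,\lambda^q \ell,
\]
where the inequality uses the hypothesis $\delta n \leq \ell/2$. Therefore the lower endpoint of $\cA$'s guarantee becomes
\[
\hat{\ell} \;\geq\; \lambda^q \ell - \lambda n \;\geq\; \tfrac{1}{2}\lambda^q \ell \;=\; \tfrac{1}{2}\delta^p\,\ell,
\]
while the upper endpoint $\hat{\ell} \leq \ell$ is immediate. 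This gives $\hat{\ell} \in [\tfrac{1}{2}\delta^p,\, 1]\cdot \ell = [\Omega(\delta^p),1]\cdot \ell$, as required.

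There is essentially no obstacle beyond correctly identifying the scaling $\lambda = \delta^{1/(1-q)}$ that makes the additive term a constant fraction of the multiplicative one; the exponent $p = q/(1-q)$ in the final approximation factor arises exactly from this balance, and no iteration or search over $\lambda$ is needed since the condition $\ell \geq 2\delta n$ is given as a single known threshold.
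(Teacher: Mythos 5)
Your proof is correct and follows essentially the same approach as the paper: you invoke $\cA$ with $\lambda = \delta^{1/(1-q)}$ (the paper writes this as $\delta^{1+p}$, which is the same since $1+p = 1/(1-q)$) and use $\ell \geq 2\delta n$ to show the additive term is at most half the multiplicative lower bound, yielding $\hat{\ell} \geq \tfrac{1}{2}\delta^p\ell$. Your extra check that $\lambda \in (0,1)$ is a small refinement the paper omits, but otherwise the two arguments coincide.
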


\begin{proof}

The algorithm calls $\cA$ with parameter $\lambda = \delta^{1 + p}$, and returns the same output. Then, the upper bound is immediate by definition of $(\alpha,\beta)$-approximation. For the lower bound, as long as $\ell \geq 2\delta n$, we have that:
\[
\widehat{\ell} \geq \delta^{q(1+p)} \cdot \ell
- \delta^{1 + p} \cdot n
\geq
\delta^{p} \cdot \ell - \tfrac{1}{2}\delta^{p} \cdot \ell
\geq \tfrac{1}{2}\delta^{p}\ell \tag{$q = p/(1+p)$}
\]
\end{proof}

\paragraph{Other notation.} Notation $\tO(\cdot)$ hides $\polylog(n)$ factors, while $O^*(\cdot)$ hides a factor of $n^{o(1)}$. 

\section{Technical overview} 
\label{technical_overview}


Our starting point is the algorithm of \cite{rubinstein2019approximation} which achieves $O(1 / \lambda^3)$ approximation in time $\tO(\sqrt{n} / \lambda^7)$.
Let $\OPT$ be (the coordinates of) an optimum solution (LIS)
with length $|\OPT| \geq \lambda n$. Their algorithm consists of 
2 main
steps:

\ignore{
\begin{enumerate}
    \item 
    The input sequence $y$ 
    is divided into $\sqrt{n}$ contiguous blocks $y^{(1)}, \cdots, y^{(\sqrt{n})}$ of length $\sqrt{n}$ each. For each block $y^{(i)}$, it ``guesses" the interval $[s_i, \ell_i]$, where 
    $s_i$ and $\ell_i$ are the minimum and maximum values in $y^{(i)} \cap \OPT$ 
    respectively. This is done by sampling $O(1/\lambda)$ elements in $y^{(i)}$, and considering the $O(1/\lambda^2)$ intervals bounded by all possible pairs of sampled elements, called \emph{candidate intervals}.
    \item It generates a set of mutually disjoint \emph{pseudo-solutions}, which are sequences of $\Omega(\sqrt{n} \lambda)$ candidate intervals that are monotone, i.e., all values in a candidate interval in block $i$ are less than all 
    values in a candidate interval in block $j$ for all $i < j$. It estimates the quality of each pseudo-solution (the sum of $\LIS$ of the candidate intervals in it) using sub-sampling, and 
    outputs the largest quality.
\end{enumerate}
}

\begin{enumerate}
    \item 
    Partition $[n]$ into $\sqrt{n}$ disjoint contiguous x-intervals $X_1,X_2,\ldots,X_{\sqrt{n}}$ of length $\sqrt{n}$ each. 
    For $i \in [\sqrt{n}]$, let $y(X_i)$ 
    be the restriction of the input sequence $y$ to $x$-interval $X_i$. 
    The goal is to approximate the $y$-interval $Y_i \triangleq [s_i, \ell_i]$, where $s_i$ and $\ell_i$ are, respectively, the minimum and maximum values in $y(X_i \cap \OPT)$. To do this, 
    one samples $O(1/\lambda)$ elements in $X_i$, 
    and generates $O(1/\lambda^2)$ \emph{candidate $y$-intervals} (using pairs of sampled element values).
     
    \item Generate a set of mutually disjoint \emph{pseudo-solutions}, each of which is a sequence of $\Omega(\lambda \sqrt{n})$ candidate $y$-intervals that are monotone, i.e., all values in a candidate interval in x-interval $X_i$ are less than all 
    values in a candidate interval in x-interval $X_j$ for all $i < j$. Estimate the quality of each pseudo-solution (the sum of $\LIS$ of the candidate intervals in it) by simply 
    sampling $O(\log^{O(1)}n/\lambda^4)$ x-intervals 
    and computing the \LIS within relevant candidate intervals. Output 
    the largest quality.
\end{enumerate}

Their analysis proceeds as follows. For each x-interval $X_i$, 
some candidate $y$-interval is a good approximation of the interval $Y_i$ 
w.h.p., so the union of the \LIS in all pseudo-solutions covers a large fraction of $\OPT$. Since there are $O(1/\lambda^3)$ pseudo-solutions, the output is a $O(1/\lambda^3)$-approximation 
of $\OPT$. The runtime of $\tilde{O}(\sqrt{n}/\lambda^7)$ is dominated by the time required to evaluate the quality of pseudo-solutions to sufficient accuracy. One can also apply this technique 
recursively to improve the runtime, at the cost of worse approximation.

In \cite{MS21}, this result is improved by giving an algorithm for $\LIS$ with approximation factor $O(1/\lambda^\eps)$ 
and runtime $O(n^{1 - \Omega(\eps)} \cdot (\log n / \lambda)^{1 / \eps})$ for any constant $\eps > 0$. 
The polynomial dependence on $1/\lambda$ seems essential since the algorithm in \cite{rubinstein2019approximation} is used as a sub-routine. 
In \cite{NV20}, a \emph{non-adaptive} algorithm is presented with sample complexity $\tilde{O}(\sqrt{r}/\lambda^2)$ and approximation $O(1/\lambda)$, 
where $r$ is the number of distinct values. 
Again, the polynomial dependence of the sample complexity on $r$ and $1/\lambda$ seems intrinsic.

There are several fundamental obstacles in improving these bounds, and in particular, getting the runtime down all the way to $1/\lambda$ while improving the approximation. One such obstacle is that the straight-forward approach (as in prior work) of independent recursions cannot obtain better than $1/\lambda^2$ run-time in the worst-case. 
The issue is that even if one is able to isolate all possible $y$-interval ranges efficiently, 
there must still be at least $1/\lambda$ such ranges (optimally), 
and recusing of each such $y$-interval 
without knowing a priori which coordinates contain the relevant $y$-values 
would require $1/\lambda$ samples just to find a single sample of relevance in each (sub-)instance (i.e. for each $y$-interval). We will return to this obstacle later and show how we improve the {\em amortized complexity} for it.

\subsection{Our approach} \label{tech_overview_our_approach}
Similarly to \cite{rubinstein2019approximation}, we generate
candidate $y$-intervals based on random samples, and generally look for increasing sequences of intervals, each with a large local $\LIS$.
Beyond this general similarity, our algorithm develops a few new ideas, in particular in how we work with these candidate intervals, and especially how we find \LIS's among these candidate intervals.
In particular, our contribution can be seen through three main components.

{\bf First}, we introduce a new problem, termed \genlis, which captures the problem of estimating the longest (sub-)sequence of increasing intervals, each with a large ``local \LIS''. This problem is a new model for sublinear-time algorithms, which has not appeared in prior literature to the best of our knowledge.

The \genlis problem
is defined as follows. 
The input is a sequence $g$ where each element 
is associated with 
an additional flag signifying whether it is \emph{genuine} or not, i.e., $g \in (\N \times \{\uncor,\cor\})^{n}$. 
One receives unrestricted access to the elements (i.e., the first coordinates) which is denoted by $g(*)_1 \in \N^{n}$. However, one must ``pay'' to test whether an element is genuine; 
this is determined by the second coordinates $g(*)_2 \in \{0,1\}^{n}$, referred to as genuineness flags. The goal is to compute the length of the longest increasing subsequence of $g(*)_1$ 
restricted to genuine elements, i.e., of $g((g(*)_2)^{-1}(1))_1$, 
using as few 
{\em tests for genuineness} as possible. 

{\bf Second}, we show how to efficiently reduce the \LIS problem to a (smaller) 
instance of the \genlis problem, where the ``genuineness'' flag of an element corresponds to \LIS of an $x$-interval being large enough. In particular, this is where we generate the aforementioned candidate $y$-invervals. 
The algorithm reduces the problem to finding a (global) \LIS amongst a set of candidate intervals, each with a large (local) \LIS. We solve this via a composition of 2 sub-problems: the $\genlis$ problem takes care of the ``global \LIS'' of candidate-intervals, while the standard \LIS problem recursively estimates the ``local \LIS'' (these ideas and terms will be made precise later). The main challenge here is to generate candidate intervals in a {\em succinct} manner: we manage to reduce the number of sampled anchor needed to $\approx 1/\lambda$, as well as limiting the number of candidate intervals to be {\em near-linear} in the number of sampled points. 

{\bf Third}, to optimize the sampling of the string, we use a novel sampling scheme endowed with a data structure, termed {\em \prectree}. At preprocessing, the structure samples a number of positions in the input string (non-adaptively), which are not uniform but rather correspond to a tree with non-uniform leaf levels (hence the name). We then build a global data structure on these samples for efficient access. 
In particular, we develop a {\em tree decomposition procedure} (to be discussed later), and repeatedly use 
{\em sub-trees} across different sub-instances we generate, improving the overall sample complexity. 
We highlight the fact that we {\em reuse} both the samples (and hence randomness) across different (sub-)instances for both of the \LIS and \genlis problems.
In addition, the \prectree data structure provides a quick way to narrow down on all samples in some $x$-interval whose value is in given $y$-interval $Y$.

Before continuing this overview, we formally define a 
slightly generalized version of the $\LIS$ problem, termed $\reslis$, as well as the aforementioned \genlis problem. Following that, we 
provide an overview of the algorithms to solve these two problems. The high-level flow of our algorithm is described in \Cref{fig::main_flow}.


\begin{figure}[h!]
\centering
\includegraphics[width=0.98\textwidth]{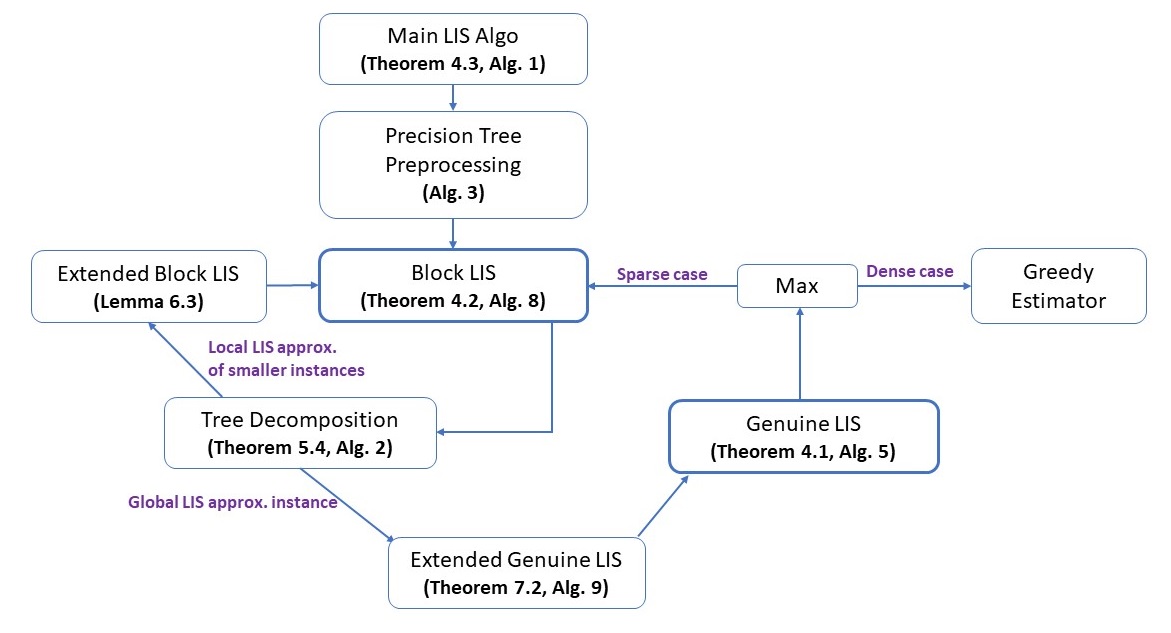}
\vspace{-4mm}
\caption{High level flow of the main algorithm (\Cref{thm::lis_main}). Each arrow represents one or more 
calls to the corresponding algorithm.}
\label{fig::main_flow}
\end{figure}


\paragraph{The \reslis problem.}
This problem extends standard \LIS problem in two ways.
First, the main input is a sequence of $n$ blocks, each containing (at most) $k$ elements,  
and 
at most one element in each block 
may participate in a sequence. Second, we are also given a range of values $Y \in \cI$, such that each element of a subsequence must be in $Y$.

Formally, we define the $\reslis$ as follows:
given a sequence 
$y \in \N^{n \times k}$ and a range of values 
$Y \in \cI$, compute the length of a maximal sub-sequence $\OPT \subseteq [n] \times [k]$,
such that $\{(w_1,y_w)\}_{w \in \OPT}$ is monotone and $\{y_w\}_{w \in \OPT} \subseteq Y$. 
Sometimes, we also restrict the set of blocks to an interval $X \subseteq [n]$, and define the quantity $\blocklis(y,X,Y)$ as the longest 
increasing subsequence of $y \in \N^{X \times [k]}$ using elements in $Y$.

The standard 
$\LIS$ problem can be seen as the \reslis problem with 
$k=1$ and $Y = \N$ --- which is how we instantiate the original input sequence $y \in \N^n$ in our $\reslis$ algorithm. The main advantage of this
generalization appears when the sequence is {\em sparse in $Y$}, i.e. most elements cannot participate in any \LIS. Then, we show 
the multiplicative approximation factor for $\reslis$ is not only a function of the additive error $\lambda n$, 
but also of the total number of elements of $y$ 
in range $Y$ (i.e., $|y(X, Y)|$). In particular, we show that this 
approximation is a function of $\tfrac{\lambda n}{|y(X, Y)|}$.

While the generalization to blocks is not a core necessity of the algorithm (in fact, one can consider the values in each block in descending manner, yielding an equivalent problem with no blocks required), its main use comes from our need to instantiate several overlapping instances using the same \prectree data structure (to be discussed later).

Finally, we note that we similarly generalize the \genlis to work over blocks: input $g\in  (\N \times \{\uncor,\cor\})^{n\times k}$ and the \LIS is allowed to use only one (genuine) element per block.

\subsection{Main Algorithm}

The main algorithm for estimating the $\LIS$ is merely the following (see \Cref{alg::ALG_lis_main}):
\begin{enumerate}
	\item Preprocess \prectree $T=T_\delta(y)$ 
	of the input sequence with parameter $\delta \triangleq \lambda / n^{o(1)}$ (the exact $o(1)$ parameter will be made precise later).
	\item Run the \reslis algorithm that has access only to the data structure $T$. (we note that this algorithm itself recursively runs the algorithms for \genlis and \reslis.)
\end{enumerate}

\subsection{\genlis problem: overview of the algorithm \estgenlis}


Recall that in the \genlis problem with input $g \in (\N \times \zo)^n$, one receives unrestricted access to the elements (i.e., the first coordinates), but must ``pay'' to test whether an element is genuine (determined by the second coordinates). The goal is to compute the length of the longest increasing subsequence of $g(*)_1$ restricted to genuine elements, while minimizing the number of genuiness tests.

We derive two complementary solutions to the \genlis problem: one direct (without further recursion), and the other by reducing the problem to  standard sublinear-time $\LIS$ estimation. To highlight our novel contribution, we contrast it with the framework of  \cite{rubinstein2019approximation}, that implicitly gives a solution to the $\genlis$ problem, albeit with sub-optimal approximation. In particular, the following algorithm for \genlis is analogous to the algorithm in \cite{rubinstein2019approximation} based on generating pseudo-solutions (maximal sequences of candidate intervals).
\begin{enumerate}
    \item
    Using the first coordinates only, look for a  maximal {\em increasing subsequence} $P_1$ containing $\gtrapprox \lambda n$ elements and remove $P_1$ from the sequence.
    \item Repeat step (1) until there are no more sequences of length $\gtrapprox \lambda n$. This generates (disjoint) solutions $P_1,\ldots,P_t$ for some $t \lessapprox 1/\lambda$.
    \item Sub-sample $\approx 1 / \lambda$
    elements 
    from the union $P = \cup_{i \in t} P_i$ and check if each one is genuine. 
    Let $\kappa$ be the number of Genuine elements in this sampled set.
    \item Output $\lambda^2 n\cdot \kappa$.
\end{enumerate}

It is easy to see that this algorithm yields, with constant probability, a $(2/\lambda, \lambda^2 n)$-approximation with approximately 
$1/\lambda$ 
tests for genuineness. First, if the $\genlis$ 
is at most $2\lambda n$, then the output (being greater than $0$) is already a $(2/\lambda, \lambda^2 n)$-approximation, as 
\[
\tfrac{\genlis}{2/\lambda} - \lambda^2 n \leq \tfrac{2\lambda n}{2/\lambda} - \lambda^2 n = 0 \leq \genlis.
\]

Otherwise, 
$P$ contains most of the \genlis 
elements,
and since $t \lessapprox 1/\lambda$, 
the 
$\genlis$ 
is at least $\lambda/2$ 
times the number of genuine elements in $P$.

Thus, by the sampling mechanism above, we can estimate the proportion of genuine elements to sufficient accuracy with constant probability.
Observe that the approximation is in fact proportional to $1/\lambda \kappa$, and that the worst approximation happens when $P$ is ``sparse'', i.e.,  
when only approximately 
$\lambda n$ elements in $P$ are genuine. For such a {\em sparse case}, however, one can instantiate the $\genlis$ algorithm as a $\reslis$ one, restricted to the genuine elements. In fact, this ``sparse'' case is precisely where the approximation factor is minimal, i.e., $\tfrac{\lambda n}{|y|} =1$ where $y$ is the genuine-elements-only subsequence. Our improved approximation stems, in part,  from balancing between the dense and sparse cases as above.

\paragraph{Speeding up $\LIS$ extraction using dynamic $\LIS$.}
The $\genlis$ instances we generate are of size $\lessapprox 1/\lambda$, and our goal is to obtain overall run-time that is near-linear in $1/\lambda$
as well.
The standard, dynamic-programming solution for finding and extracting the optimal $\LIS$ each time for step 
(1) above can potentially incur an overhead that is quadratic in the instance size (each $\LIS$ extraction would take linear time, and we need to greedily extract up to a near-linear number of solutions), leading to a polynomially-worse time.
To ensure near-linear time, we adopt a fast dynamic $\LIS$ data structure from \cite{GJ21_dynamic_LIS}, and use it to iteratively extract near-maximal pseudo-solutions. 

\paragraph{Extensions to the \genlis algorithm.} 
Our complete algorithm needs three further extensions to the algorithm for the \genlis problem, as follows.

\begin{itemize}
    \item {\em Block version:} just like for \reslis, we define the block version of \genlis, where the input is a sequence $g\in (\N\times \{0,1\})^{n\times k}$, and we need to find the LIS among the genuine elements, while using at most one element per size-$k$ block.
    \item {\em Sparse (unbalanced) instances:} some encountered \genlis instances are sparse, with many ``null" elements, i.e., some blocks have less than $k$ elements (note that a ``null'' element is different from a ``non-genuine'' one, in that it does not need to be tested). When the instance $g$ consists of $m \ll nk$ non-null elements, we would like the approximation and runtime bounds to be a function of $m/n$ (average block size) instead of $k$ (maximum block size). To obtain the improved bound, we partition the blocks based on an exponential discretization of the number of non-null elements, and output the maximum over all instances, where each instance consists only of blocks containing a similar number
    of non-null elements.
    \item {\em \genlis over intervals:} the first coordinates of the \genlis instances we generate apriori consist of intervals rather than integers. The challenge is that we then need to find \LIS over the {\em partial order} of intervals $\cI$. To illustrate why this poses an additional challenge, note that the dynamic $\LIS$ data-structures from \cite{MS20_dynamic_LIS, GJ21_dynamic_LIS,KS21} (used to speed up our algorithm) cannot immediately handle partial order sequences. An ideal 
solution here would be a mapping $\varphi : \IC \rightarrow \N$ that approximately preserves the overall $\LIS$ over all subsequences, and therefore also preserves the overall $\LIS$ over the genuine elements. While we are unable to show a single mapping that works for all intervals, we partition the space of intervals into $\log (k/\lambda)$ sets $\cI_\ell$ --- intervals in $\cI$ whose length is in $[\ell, 2\ell)$ --- 
based on an exponential discretization of the interval length $\ell$, 
and provide a 
mapping $\varphi_\ell : \cI_\ell \rightarrow \N$ for each set of intervals $\cI_\ell$. 
We eventually output the maximal $\genlis$ result over all such 
maps. 
This costs us merely another $\log (k/\lambda)$-factor approximation, and a small additive error.
\end{itemize}



The formal statement for the algorithm to solve $\genlis$, named \estgenlis, is presented in \Cref{sec:technical:main}, and its 
description and analysis 
are in \Cref{sec::genlis}. The pseudo-code is presented in \Cref{fig:ALG_genlis}. The \genlis extensions are presented and analyzed in 
\Cref{sec:genlis_ext}.

\subsection{\reslis problem: overview of the algorithm \estreslis}
For the problem $\reslis(y,X,Y)$, we develop the algorithm
\estreslis. Fundamentally, we reduce the instance to a \genlis instance, where each 
genuineness flag is set to 1 if and only if the \LIS of a certain (smaller) sequence (which is itself a \reslis problem) is large enough.

Our reduction starts by partitioning $X$ into consecutive intervals $X_1,\ldots,X_\tau$ of equal length, 
where $\tau$ is a dynamic, carefully chosen branching factor, and is usually 
sub-polynomial in the instance size (here think $\tau=n^\eps$). Next, we sample approximately 
$\tau/\lambda$ blocks $w_j\in X$ (termed \emph{anchors}), and, for each $i\in[\tau]$, we collect all the $y$-values across all the blocks $w_j\in X_i$ that are also in $Y$, and generate a set of $y$-values $\cS_i \subseteq Y$.

Using each 
     set of $y$-values $\cS_i$, we construct \emph{candidate intervals} $\cY_i$. 
     While this part is similar to the construction from~\cite{rubinstein2019approximation}, we note that we need a more efficient construction to obtain the near-optimal sample complexity. In particular, we require the number of candidate intervals to be near-linear in $|\cS_i|$, and hence we 
     construct a {\em small} candidate interval set $|\cY_{i}| \approx |\cS_i|$, while still ensuring that this set covers all ``relevant options'' with only an extra logarithmic factor loss in approximation. In particular, instead of looking for all possible pairs of endpoints, 
     we choose the endpoints in dyadic fashion (in particular, their distance in the set $\cS$ must be a power of 2).
     
     Finally, we reduce the $\reslis$ problem to a $\genlis$ instance over $\tau$ blocks, where each block $i$ contains all the candidates intevals $\cY_i$ --- this \genlis instance thus captures ``global'' \LIS (over the candidate intervals). The first coordinates of the $\genlis$ instance are the candidate intervals $\cY_i$ themselves, while the second coordinates (i.e., the genuineness flags) indicate whether 
     the corresponding ``local'' $\blocklis(y,X_i,Y')$ estimated values 
     are above a certain threshold $\kappa$, for each $Y' \in \cY_i$.

     This threshold $\kappa$ itself depends on a parameter $\rho>1$, which characterizes the relation between the ``global" $\genlis$ instance and the ``local" \reslis instances. Intuitively, $\rho$ is such that (roughly)
     $1/\rho$ fraction of the $X_i$ intervals each have a $\lambda \rho$ fraction of them participating in the 
     \LIS. Consider two extreme cases, one where the $\LIS$ is uniformly distributed among all $X_i$ ($\rho=1$), and one where the $\LIS$ is maximally concentrated among a small subset of the $x$-intervals ($\rho=1/\lambda$). Then, it is more difficult to certify an increasing subsequence in the latter case, when the $\LIS$ is sparse. But in this case testing the genuineness of a local \LIS (inside a block $X_i$) is much easier, since we only need to establish that it is $\Omega(|X_i|)$. In general, there is a precise trade-off between the complexity of certifying the global $\LIS$ versus the complexity of certifying the local $\reslis$. A priori, we do not know $\rho$, so
     we simply 
     iterate over
     all possible magnitudes (again, by exponential discretization).
     
     Once we formulate the overall \reslis problem as a composition of a ``global'' $\genlis$ over multiple ``local'' $\reslis$ instances, 
     we decompose the problem through a procedure called {\em \prectree decomposition} which will be described next.

The formal statement for the \estreslis algorithm is presented in \Cref{sec:technical:main}, and the algorithm description and analysis are in 
\Cref{sec::reslis}. The pseudo-code (including sub-routines) are presented in \Cref{fig::ALG_reslis_sample,fig::ALG_reslis_decompose,fig::ALG_reslis_amended}. 

\subsection{\prectree data structure}

We note that a direct instantiation of the above algorithm yields the right approximation, but will require at least $\approx 1/\lambda^2$ queries into the input string $y$ (and hence run-time), and, moreover, require the algorithm to be {\em adaptive}. The source of this inefficiency is precisely the ``small slopes'' obstacle discussed earlier, which would lead to ``wasting samples'' in any na\"ive rejection sampling mechanism.
To improve the complexity, and to allow our sampling algorithm to be non-adaptive, we introduce the {\em \prectrees} data structure (\Cref{sec::precision}). While based on the {\em precision sampling} tool introduced in \cite{AKO-edit,AKO-precision}, the main development here is that we augment the resulting tree for special operations described henceforth.


Intuitively, the \prectree data structure can be thought of as a global data structure holding enough information for simulating random samples for multiple non-uniform (sub-)instances, specifically ones arising from recursion.
In particular, the data structure is an (incomplete) tree, where (lowest-level) leafs correspond to samples of some input $e \in \cE^n$. 
The initial tree is defined over the input string (i.e., $\cE=\N$),
but we also consider Precision-Trees over other domains $\cE^n$ as  described later (which are simply a recasting the original single tree). The initial tree for the string $y \in \N^n$ is constructed at random, using precision sampling, and this is the only time when we access the string $y$ (in the entire \LIS algorithm). 
Overall, the \prectree
data-structure (the initial one as well as the other trees we generate) supports several important properties we need:
\begin{itemize}
    \item 
Since our algorithm is recursive (composition of $\genlis$ and $\reslis$ algorithms), we need to provide samples for potentially very small contiguous subsequences of $y$ (depending on what was sampled higher up in the recursion). In other words, we need to be able to ``zoom into'' different location of the input string with the right precision. 

Our precision tree is sampled such that any sub-tree under any ``sampled'' node, is another \prectree instantiation (with potentially different ``precision'' parameter).

 \item
In contrast to the original precision sampling tool that was designed for the simple addition function (over elements), our 
application at hand requires more complicated functions (indeed, the
sub-linear \reslis and \genlis algorithms).

We show how we can {\em decompose} the tree in a van Emde Boas-like layout: by cutting it to obtain a number of different ``lower sub-trees'', together with a ``top tree'' whose domain are the lower sub-trees (as mentioned, all lower/top sub-trees are themselves a precision tree). Then we can run algorithms on each sub-tree, as well as a final algorithm recomposing the results from the sub-trees, using the top tree.

\item
Another crucial challenge is the following: as our overall algorithm recurses, we may be ``zooming in'' on the same part of $y$ multiple times, but focusing on different $y$-range intervals $Y$. Na\"ively, one would use rejection sampling here (rejecting samples with value not in $Y$), which however would yield a polynomially worse sample complexity. In addition, our recursion happens not only down the tree (i.e., on the lower sub-trees) but also on the top portion of the tree (top tree above).

We augment our precision tree to allow us to {\em reuse} the randomness over independent recursive calls into the same $x$-intervals, each focusing on a different $y$-value ranges. In particular, for such an $x$-interval we will have processed all the sampled entries to directly report samples with values in a desired $y$-interval $Y$ (note that this is essentially a 2D range reporting). In addition, the randomness is reused when computing the global function on the top portion of the tree as described above.


\end{itemize}


\subsubsection{Construction of the \prectree data structure} 
\label{sec:constPT}
We now describe the formal construction of \prectree.
Original precision sampling from \cite{AKO-edit, AKO-precision} is designed to estimate a {\em summation} function $a = \sum_{i=1}^n a_i$, for unknown $a_i\in[0,1]$, from ``coarse''
estimates for $a_i$.
In our application, we need to {\em generalize} precision sampling from a simple addition to {\em general functions}, allowing one to approximate $g \circ f$ where $g$ is a general function over $m$ coordinates and $f=(f_1,\ldots,f_n)$ consists
of $n$ {\em independent} functions on different parts of the input, sharing the same co-domain.

We define \prectree with parameter $\delta<1$, for a given vector of elements $e = (e_1,\ldots,e_n) \in \cE^n$, denoted $T_{\delta}(e)$, as follows. The tree $T$ is a trimmed version of the full $\beta$-ary tree of $\ell=\log_\beta n + 1$ levels, with $n=\beta^\ell$ leafs, where parameter $\beta>1$ is fixed; in particular, all nodes of $T$ have fan-out of either $\beta$ or 0.  Each leaf at level $\ell$
\footnote{By convention, the level of the root is 0.} is associated with an integer representing its location in the input string.
Each internal node $v$ is associated with an interval representing the unions of its leaf descendants (in the full tree); for example,
the root of $T$ is associated with the entire interval $[1,n]$, and its children with $[1,n/\beta]$, $[n/\beta+1,2n/\beta]$, etc). For a node $v$ in the tree, we use $\cE_v$ or $\cE(v)$ 
to denote the set of elements under $v$.



Given $n, \lambda$, and $\beta$, we construct the trimmed  \prectree $T_{\lambda}(e)$ as follows.  
We assign a (random) precision score $\Pn_v$ to each node $v$ by the following recursive procedure.
 We set 
 $\Pn_{\text{root}(T)} = \delta$.   Recursively we define the precision score of a node $v$ by 
\begin{equation}
    \Pn_v = \bZ_v \cdot \Pn_{\parent(v)} \text{ where } \bZ_v \sim_{i.i.d.} \uniform(\{1,2,3,\ldots,\beta/4\})
    \ .
    \label{eq:precision-tree}
\end{equation}

Most importantly, for a node $v$, we recurse into its children only if $\Pn_v \le 1$; otherwise we stop the recursion for $v$ with $\Pn_v>1$ ($v$ has 0 children). If $v$ is a leaf (at level $\ell$), with the associated integer $i\in [n]$, then we sample the corresponding element $e_i$ of the input string $e$. One should think of the ``precision score'' $\Pn_v$ of some internal node $v$ as the degree to which the sub-tree rooted at $v$ is sampled: the lower $\Pn_v$, the higher precision and the amount at which the sub-tree is sampled. One can prove that the sample complexity (into $y$) is bounded by $1/\delta\cdot\log^{O(\log_\beta n)}(\beta)$ (see \Cref{lm::sample_bound}).

At the beginning, we generate one such tree $T_{1/c}(y)$ 
for the input string $y \in \N^n$, where $\delta=1/c$ for some $c = c(n,\lambda)\ge1$ that we refer to as the ``precision tree complexity''.  After obtaining these samples, the rest of the algorithm has no further access to $y$, and hence the sampling is non-adaptive.
Most importantly, using this single tree $T_{1/c}(y)$, we will simulate \prectree access for different trees over different inputs and parameters in our algorithms using a certain procedure which we call the Trim-Tree algorithm (see \Cref{fig:trimtree}).

We preprocess the sampled positions so that we can quickly retrieve all samples in some $x$-interval $X \subseteq [n]$ 
and some $y$-interval $Y$ through an ancillary data-structure which we call SODS.
The SODS is used for a ``rejection sampling''-type operations where we need a number of samples from a substring of input $y$ (or later on for some intervals of blocks), but care only about values in a certain range $Y$. 
Below, we abuse notation to use $T$ to denote both the conceptual \prectree and the data structure implementing it. The resulting preprocessing time and space is $O^*(1/\lambda)$.\footnote{The notation $O^*(\cdot)$ hides a $n^{o(1)}$ factor.}

\ignore{
The high level flow of the entire algorithm is shown in \Cref{fig::main_flow}.

\begin{figure}[h!]
\centering
\includegraphics[width=0.98\textwidth]{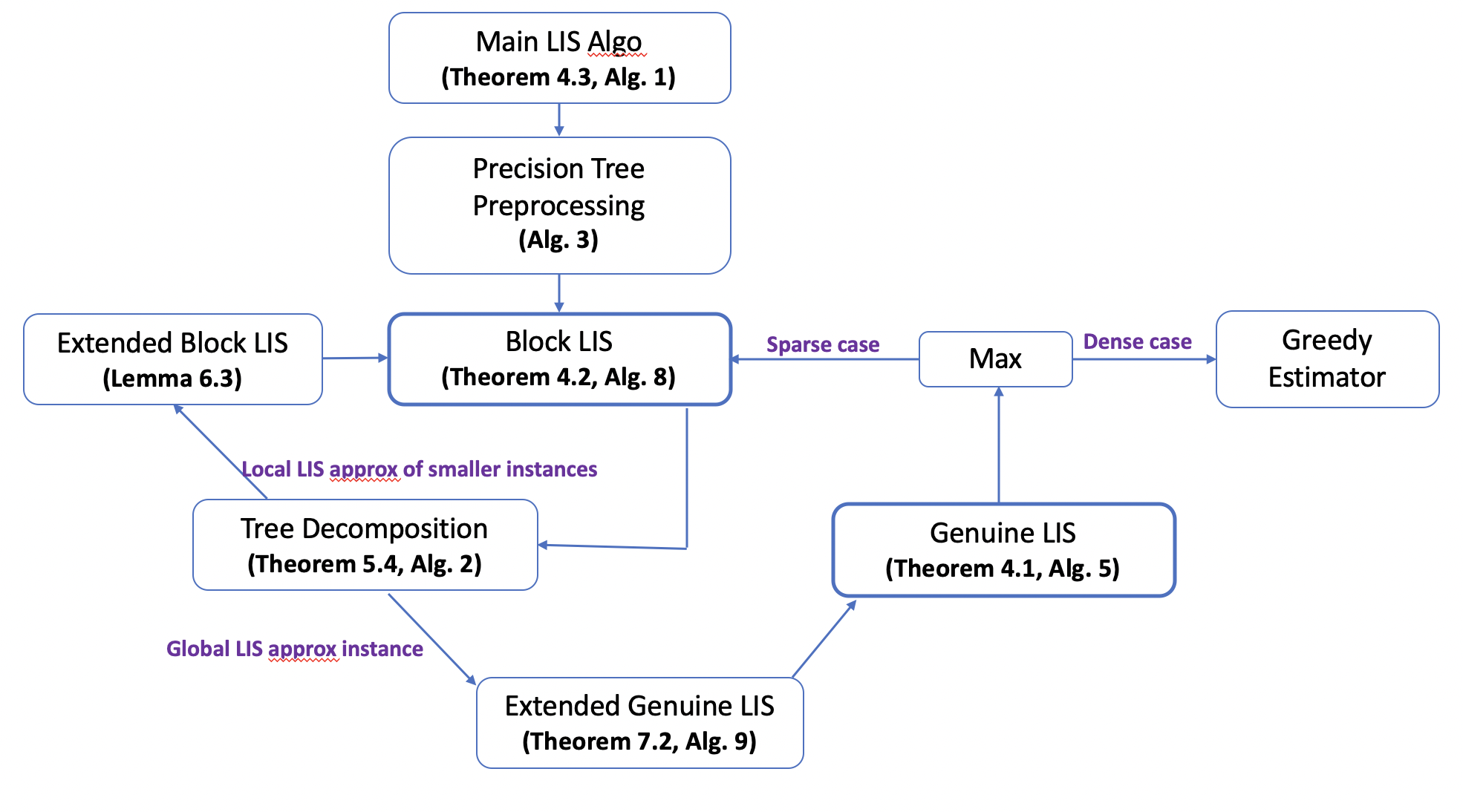}
\vspace{-4mm}
\caption{High level flow of the main algorithm (\Cref{thm::lis_main})}
\label{fig::main_flow}
\end{figure}
}

\section{Main Algorithm for Estimating \LIS} \label{sec:technical:main}

Our main algorithm is composed of two algorithms, for solving $\blocklis$ and $\genlis$. These algorithms recursively call each other, with access to a \prectree data structure. This data structure queries the input sequence at the beginning (non-adaptively and non-uniformly), and our algorithms access the sequence only via this data structure. We describe the details of \prectree in \Cref{sec::precision}, and for now refer to it as a tree $T_\lambda(y)$ for some parameter $\lambda< 1$ and an input string $y$.

We now state the main guarantees of the two algorithms, whose descriptions and proofs will appear in later sections. 
The statements below are built to support mutual recursion and might be challenging to parse at first read. At high-level, the idea is to create mutual inductive hypotheses which can be instantiated with different parameters, in a way that enables concurrent optimization of the approximation, sample complexity and run-time complexity.
We then show how careful recursive instantiations of these two algorithms yield our main algorithm for estimating $\LIS$. 

Below, for an instance $g \in (\N \times \zo)^{n \times k}$, let $g$ restricted to first/second coordinate be $g(*)_1\in\N^{n\times k}$ and $g(*)_2\in\{0,1\}^{n\times k}$ respectively. 



\begin{theorem}[$\estgenlis$ algorithm; 
\Cref{sec::genlis}]
\label{thm::gl_rec_tree}
	Fix integers $n,k$ and $\lambda \in (1/n,1)$. Fix an instance $g \in (\N \times \zo)^{n \times k}$. For some monotone functions $a_s : \R^3_+ \rightarrow [1,\infty)$ and $c_s : \R_+^4 \rightarrow [1,\infty)$, 
	 suppose there exists a randomized algorithm $\cA_{BL}$ that, given $X\subseteq[n],Y \subseteq \N$, parameter $\tau_s$, and a \prectree $T_{1/c_s\left(|X|,\tau_s,\lambda,\tfrac{\lambda\cdot |X|}{|y(X, Y)|}\right)}(y)$ for some $y \in \N^{n \times k}$, can produce an $(\alpha_s, \lambda |X|)$-approximation for $\blocklis(y,X,Y)$ where 
	$\alpha_s= 
	a_s\left(|X|,\tau_s,\tfrac{\lambda\cdot |X|}{|y(X, Y)|}\right)$ w.h.p. in time $t_s$. \ns{$\polylog$ parameter can be improved if needed}

	Fix any parameter $\gamma, \tau \geq 1$, and let $c \geq \tO(1/\lambda)
	+ c_s\left(n,\tau,\lambda,\gamma \lambda / k\right)$. Then, there exists an algorithm $\cA$, that, given free access to $g(*)_1$ 
	and \prectree access to $g(*)_2$, 
	$T_{1/c}(g(*)_2)$,
	produces a $\left(\alpha_g, \lambda n\right)$-approximation for $\genlis(g)$ w.h.p., where
	$\alpha_g=\log(k/\lambda) \cdot \max\{\gamma,a_s(n,\tau,\gamma \lambda / k)\}$. The algorithm $\cA$  
	 runs in time 
	$\tO(n k) + O(t_s)$.

\end{theorem}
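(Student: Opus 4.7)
The plan is to build algorithm $\cA$ as the maximum of two complementary estimators---a ``dense'' estimator using dynamic $\LIS$ on the first coordinates and a ``sparse'' estimator reducing to $\cA_{BL}$---further maximized over an exponential bucketing of per-block non-null counts that contributes the extra $\log(k/\lambda)$ approximation factor. The design is chosen so that, for any instance, at least one of the two estimators certifies a lower bound of the form $\genlis(g)/\max\{\gamma, a_s(n,\tau,\gamma\lambda/k)\} - O(\lambda n)$, and taking the maximum then provides the claimed $(\alpha_g, \lambda n)$-approximation.

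For the dense estimator, I would read $g(*)_1$ in full (free access, $\tO(nk)$ time), feed it into a block-aware dynamic $\LIS$ structure based on \cite{GJ21_dynamic_LIS}, and greedily extract disjoint increasing block-respecting subsequences $P_1,\ldots,P_t$ of length at least $\kappa = \lambda n / \gamma$ until none remains; since these subsequences are disjoint and block-respecting, $t \le \gamma/\lambda$, and any genuine $\LIS$ outside $P = \bigcup_i P_i$ has length at most $\kappa$ by maximality. Using the \prectree $T_{1/c}(g(*)_2)$ with $c \ge \tO(1/\lambda)$ (and the SODS ancillary layer) I retrieve $\tO(1/\lambda)$ nearly-uniform samples of genuineness flags restricted to $P$, compute the empirical genuine fraction, and scale by $|P|$. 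A standard Chernoff concentration combined with the bound $t \le \gamma/\lambda$ shows this estimate is a $(\gamma, \lambda n)$-approximation of $\genlis(g)$ in the regime where the genuine elements of the optimum distribute evenly across the pseudo-solutions.

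For the sparse estimator, I would instantiate a virtual block sequence $y \in \N^{n \times k}$ by masking: set $y_{i,j} = g(*)_{1}(i,j)$ if $g(*)_{2}(i,j) = 1$, and null otherwise, so that $\genlis(g) = \blocklis(y, [n], \N)$. Crucially, any leaf sample of $T_{1/c}(g(*)_2)$, combined with the free access to $g(*)_1$, fully determines the value of $y$ at that position, so $T_{1/c}(g(*)_2)$ can be reused as $T_{1/c_s(\cdots)}(y)$ demanded by $\cA_{BL}$ provided $c \ge c_s(n,\tau,\lambda,\gamma\lambda/k)$. I then invoke $\cA_{BL}$ with $X=[n]$, $Y=\N$, $\tau_s = \tau$; in the regime where the number of genuine elements is at most $nk/\gamma$, the sparsity parameter $\lambda|X|/|y(X,Y)|$ evaluates to at least $\gamma \lambda / k$, so $\cA_{BL}$ yields an $(a_s(n,\tau,\gamma\lambda/k), \lambda n)$-approximation. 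To handle blocks of widely varying non-null counts, I would partition blocks into $O(\log(k/\lambda))$ classes by exponential bucketing of non-null counts, run both estimators on each class, and return the overall maximum; the dominant class contains a constant fraction of $\genlis$, contributing the final $\log(k/\lambda)$ factor.

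The main obstacle is proving that for \emph{every} instance at least one of the two estimators achieves its claimed bound simultaneously: the dense analysis requires showing that after greedy extraction no long genuine monotone subsequence survives outside $P$ and that empirical genuineness on $P$ faithfully proxies the optimum fraction, while the sparse analysis requires showing that the masked sequence $y$ can be sampled through $T_{1/c}(g(*)_2)$ without breaking the i.i.d.\ structure that the \prectree-based $\cA_{BL}$ sub-routine relies on. A secondary subtlety is correctly summing the two precision budgets into $c = \tO(1/\lambda) + c_s(n,\tau,\lambda,\gamma\lambda/k)$ without double-counting the random samples, and verifying that the runtime $\tO(nk) + O(t_s)$ is dominated by the one-time read of $g(*)_1$ plus dynamic-$\LIS$ maintenance and the single $\cA_{BL}$ call.
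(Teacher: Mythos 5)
Your high-level architecture---greedy pseudo-solution extraction via dynamic $\LIS$, a dense estimator by sampling genuineness flags on the union $P$, a sparse estimator by masking to genuine elements and reducing to $\cA_{BL}$, and a final max---is the right skeleton, and coincides with the paper. But there are two substantive gaps.

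First, the paper's key structural move is to \emph{bucket the pseudo-solutions by length}: the extracted $P_i$ are partitioned into $\cP_\ell$ for $\ell \in E_2(4/\lambda)$ where every $P \in \cP_\ell$ has $|P| \in (\lambda n \ell/2, \lambda n \ell]$, and both estimators are run per bucket on $U_\ell = \bigcup_{P\in\cP_\ell} P$. This bucketing is what makes the dense estimator sound. Inside $\cP_\ell$, there are at most $4k/(\lambda\ell)$ pseudo-solutions, so one of them carries $\gtrsim \lambda\ell/k$ of the genuine elements of $U_\ell$ and yields an increasing sequence of length $\gtrsim \lambda n\ell q^{(\ell)}$; simultaneously, the \emph{greedy} extraction guarantees $|\OPT\cap U_\ell| \lesssim \lambda n\ell$, which ties the estimator to the optimum from above. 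Without fixing $\ell$, pseudo-solutions of wildly different lengths coexist in $P$, and a uniform sample over $P$ gives you the \emph{overall} genuine fraction, which neither lower-bounds nor upper-bounds the genuine $\LIS$ when long $P_i$'s are nearly all fake and short ones are mostly genuine (or vice versa). Your $t\le\gamma/\lambda$ bound is also off (the disjoint $P_i$'s can reuse the same block across distinct $j$'s, so $t = O(k/\lambda)$ is the right scale), but the deeper issue is that $t$ alone does not control the estimator without uniform per-bucket lengths.

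Second, your proposed $\log(k/\lambda)$ factor comes from bucketing blocks by non-null count, but that is the \emph{wrong axis} for this theorem. Bucketing by non-null count is the device the paper uses in the \emph{sparse-instance extension} (\Cref{lm::gl_rec_ext}) to make the \genlis bound depend on the average block size $m_g/n_g$ instead of $k$; it is orthogonal to the core argument of \Cref{thm::gl_rec_tree}, where the claimed factor $\log(k/\lambda)$ is accounted for by the $\approx\log(1/\lambda)$ length buckets and the overheads of the reductions in \Cref{lm::reslis_int_reduction,lm::genlis_int_reduction}. Relatedly, the sparse estimator must be invoked on $g(*)_1\mid_{U_\ell\cap (g(*)_2)^{-1}(1)}$ (genuine elements \emph{inside one length bucket}), not on the full genuine masking: the dense/sparse case split is based on the genuine density \emph{within $U_{\ell^*}$} (namely $q^* \gtrless 1/\gamma$), which is what makes exactly one of the two estimators succeed in every case. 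Applying $\cA_{BL}$ to the global masked instance may fail to achieve sparsity $\ge\gamma\lambda/k$ even when the per-bucket density is low. You would need to reintroduce per-bucket $U_\ell$ and the heavy-block trimming ($y_{-\lambda/2}$ from \Cref{lm::gl_rec_tree_ext}) to close these gaps.
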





\begin{theorem}[$\estreslis$ algorithm; 
\Cref{sec::reslis}]\label{thm::sl_rec_tree}

Fix monotone 
functions $t_s,a_g :  
\R_+^2 \rightarrow [1,\infty)$, $a_s,c_g:\R_+^3 \rightarrow [1,\infty)$ and $c_s:\R_+^4 \rightarrow [1,\infty)$, satisfying, for all $r, \tau \in E_\beta$, $m \in \N$, $\lambda < 1$, $\lambda_1, \lambda_2 \in [\lambda,1]$, and $k' \in [1,1/\lambda]$ with $\lambda_1 \lambda_2 =
	 \Omega(\lambda k' 
	 )$: 

	\begin{itemize}
	\item
$a_s\left(r,\tau,\tfrac{\lambda}{m}\right) \geq \polylog \left(\frac{k}{\lambda}\right) \cdot a_g\left(\tau, \frac{\lambda_2}{k'}\right) \cdot a_s\left(\frac{r}{\tau},\tau,\frac{\lambda_1}{m}\right)$;
\item
$c_s\left(r,\tau,\lambda,\tfrac{\lambda}{m}\right) \geq  
\beta^{O(1)} \cdot \frac{\tau}{\lambda} + c_g(\tau,\lambda_2,k') \cdot c_s\left(\frac{r}{\tau},\tau,\Theta(\lambda_1/k'),\frac{\lambda_1}{m}\right)$; and,
\item
$t_s\left(r,\tau\right) \geq \log^{O(\log_\beta(\tau))}(\beta)  \cdot t_s\left(\tfrac{r}{\tau},\tau\right)$.

	\end{itemize}
	
Suppose there exists 
and algorithm $\cA_{GL}$ with the following guarantee: given $\genlis$ instance $g \in (\N \times \zo)^{n_g \times k_g}$ 
with $\beta$-ary \prectree $T_{1/c_g(n_g,\lambda_g,k_g)}(g(*)_2)$ access, $\cA_{GL}$ outputs  $\left(a_g\left(n_g,\tfrac{\lambda_g}{k_g}\right),\lambda_g n_g\right)$-approximation to $\genlis(g)$ in time 
$\tO(n_g \cdot k_g)$ w.h.p.


Now fix input $y \in \N^{n \times k}$, a block interval $X \subseteq [n]$, value range interval $Y \subseteq \N$, parameters $\lambda \in (0,1)$, $\beta \in \N$, $\tau \in E_\beta$. 
	 Then, 
	 given a $\beta$-ary \prectree $T_{1/c}(y)$, 
	 we can produce a $(\alpha, \lambda |X|)$-approximation for $\blocklis(y,X,Y)$ w.h.p., as long as
$\alpha \geq a_s\left(|X|,\tau,\tfrac{\lambda |X|}{|y(X,Y)|}\right)$ and
$c \geq c_s\left(|X|,\tau,\lambda,\tfrac{\lambda |X|}{|y(X,Y)|}\right)$.

		The algorithm's expected run-time is at most $
	c
	\cdot t_s(|X|,\tau) \cdot \tfrac{ |y(X,Y)|}{ |X|}$.
\end{theorem}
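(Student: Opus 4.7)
The plan is to prove the theorem by strong induction on the block-interval length $|X|$, implementing the reduction sketched in the technical overview: partition, anchor sampling, dyadic candidate intervals, recursive $\estreslis$ calls, and composition through $\cA_{GL}$. The base case is when $|X|$ is small enough (say $|X| \le \tau$) that the \prectree contains every position in $X$ and $\blocklis(y,X,Y)$ is computed directly. For the inductive step, write $r = |X|$ and $m = \lambda|X|/|y(X,Y)|$, and carry out: \emph{(i)} partition $X$ into $\tau$ contiguous equal-length sub-intervals $X_1,\dots,X_\tau$; \emph{(ii)} using the SODS attached to $T_{1/c}(y)$, collect for each $i$ an anchor multiset $\cS_i \subseteq y(X_i,Y)$, which together accounts for the $\beta^{O(1)}\tau/\lambda$ additive term in $c_s$; \emph{(iii)} from each $\cS_i$ build a set $\cY_i$ of candidate $y$-intervals with endpoints at dyadically spaced positions of the sorted $\cS_i$, giving $|\cY_i| = \tilde{O}(|\cS_i|)$ while guaranteeing that every ``good" target interval is approximated to within a $\log$-factor by some $Y' \in \cY_i$.

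Next, for each $(i,Y') \in [\tau]\times \cY_i$, I would recurse: invoke $\estreslis$ on $(y,X_i,Y')$ with a new approximation parameter $\lambda_1$, using the sub-tree of $T$ rooted at the node associated with $X_i$ (further SODS-restricted to values in $Y'$). The \prectree decomposition of \Cref{sec::precision} ensures that this sub-tree is itself a $\beta$-ary \prectree of precision $c_s(r/\tau,\tau,\Theta(\lambda_1/k'),\lambda_1/m)$, matching the recursive hypothesis. The estimates $\widehat{\blocklis}(y,X_i,Y')$ are then converted into genuineness flags via a threshold $\kappa$ and assembled into a $\genlis$ instance $g \in (\cI \times \zo)^{\tau \times k'}$ whose first coordinates are the candidate intervals; $\cA_{GL}$ is invoked on $g$ with approximation parameter $\lambda_2$, returning an $(a_g(\tau,\lambda_2/k'),\lambda_2\tau)$-approximation to $\genlis(g)$. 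Because the actual concentration of $\OPT$ across the $X_i$'s is unknown a priori, the whole procedure is repeated for every $\rho \in E_\beta(1/\lambda)$, with $\kappa \asymp \lambda\rho\,r/\tau$, $\lambda_1 \asymp \lambda\rho$ and $\lambda_2 \asymp \lambda/\rho$ (so that $\lambda_1\lambda_2 = \Omega(\lambda k')$), and the maximum estimate across all $\rho$ is returned; this costs only a $\log(1/\lambda)$ overhead, absorbed into the $\polylog(k/\lambda)$ slack in the $a_s$ recurrence.

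For the three quantitative claims, I would proceed as follows. \emph{Approximation}: for the $\rho$ matching the true distribution of $\OPT$, every $X_i$ contributing substantially to $\OPT$ admits, by the dyadic construction, a candidate $Y' \in \cY_i$ whose recursive estimate passes the $\kappa$-threshold; hence $\genlis(g) \gtrsim |\OPT|/(\rho\log(k/\lambda))$, and the product of the $a_g$-factor (global) and $a_s$-factor (local) yields exactly $a_s(r,\tau,\lambda/m)$ via the first given inequality. \emph{Precision}: summing the $\beta^{O(1)}\tau/\lambda$ anchor cost with the product of the $c_g(\tau,\lambda_2,k')$ factor for $\cA_{GL}$ access and the $c_s(r/\tau,\tau,\Theta(\lambda_1/k'),\lambda_1/m)$ factor for each recursive call reproduces the $c_s$ recurrence. \emph{Run-time}: recursive calls contribute $\sum_{i,Y'} c \cdot t_s(r/\tau,\tau) \cdot |y(X_i,Y')|/|X_i|$, and the dyadic candidate construction double-counts each sample in $y(X,Y)$ only $\tilde{O}(1)$ times, so the sum telescopes to $c \cdot t_s(r,\tau) \cdot |y(X,Y)|/|X|$, using the $\log^{O(\log_\beta\tau)}(\beta)$ slack in the $t_s$ recurrence.

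The main obstacle, I expect, will not be the reduction itself but two more delicate points. First, one must show that a \emph{single} global tree $T_{1/c}(y)$ can simultaneously serve all $\tau \cdot |\cY_i|$ recursive $(X_i,Y')$ sub-instances at the correct local precision, via the Trim-Tree procedure and the reuse of randomness across $Y'$-restrictions within the same $X_i$; this is where non-adaptivity is actually purchased. Second, one must verify that for every possible distribution of $\OPT$ across the $X_i$'s, some $\rho \in E_\beta(1/\lambda)$ simultaneously induces enough ``genuine" blocks to make the global $\genlis$ instance satisfiable by $\cA_{GL}$ \emph{and} enough local density in those blocks to make each recursive $\reslis$ call satisfy the inductive hypothesis with parameter $\lambda_1$. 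The remaining work is essentially bookkeeping to match the three monotone-function recurrences listed in the statement.
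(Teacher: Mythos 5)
Your high-level plan closely mirrors the paper's: partition $X$ into $\tau$ sub-intervals, sample anchors, build dyadic candidate intervals, recurse on $(X_i,Y')$, compose through a $\genlis$ instance via \prectree decomposition, and take a max over a discretized $\rho$. That outline is right. But there are two concrete gaps that would cause the quantitative claims to fail as written.

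First, you discretize only over $\rho$ and lump all dyadic candidate scales into a single set $\cY_i$; the paper must additionally discretize over a second parameter $\Delta$ (the candidate-interval scale), forming $\cY_{i,\Delta}$ for $\Delta\in E_2$ and running a \emph{separate} \genlis instance and \textsc{Decompose} call for each $\Delta$, then taking the max. This is not bookkeeping: the recurrence's $k'$ is the average block size $m_g/n_g$ of the \genlis instance, and the balance $\lambda_1\lambda_2 = \Omega(\lambda k')$ only holds because $m_g/n_g = O\left(\tfrac{|y(X,Y)|}{\lambda\Delta|X|}\right)$ scales inversely with $\Delta$ while the local sparsity $|y_{-\lambda/2}(X_i,Y')| = O(\lambda\Delta r)$ scales with $\Delta$, so the $\Delta$-dependence cancels across the local and global factors (\Cref{lm::reslis_invariants_tree}, items (1)--(2)). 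If you combine all scales, $m_g/n_g$ is dominated by the smallest $\Delta$ and becomes $\Theta(\E_i|\cS_i|)$, inflating $k'$ and breaking the cancellation. The paper pays an extra $O(\log(k/\lambda))$ factor to pick the best single $\Delta^*$ out of the mixed-scale monotone set guaranteed by \Cref{lm::reslis_invariants_tree}(3).

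Second, your proposed parameter choices $\lambda_1 \asymp \lambda\rho$, $\lambda_2 \asymp \lambda/\rho$ yield $\lambda_1\lambda_2 \asymp \lambda^2$, which cannot be $\Omega(\lambda k')$ since $k'\ge 1$ and $\lambda < 1$; they also do not match the paper's actual choices (the local call uses error parameter $1/\rho$, the global call uses $\lambda\rho$, and $k'$ is tied to $\Delta$ as above). Finally, a smaller omission: you assemble a $\genlis$ instance over the interval poset $\cI$ and invoke $\cA_{GL}$ on it, but the hypothesis in \Cref{thm::sl_rec_tree} only guarantees $\cA_{GL}$ on integer-valued instances $g\in(\N\times\zo)^{n_g\times k_g}$; the paper bridges this with \Cref{lm::genlis_int_reduction}, and similarly uses \Cref{lm::reslis_int_reduction} to pass from the heavy-trimmed quantity $y_{-\lambda/2}$ appearing in the invariants back to the untrimmed $y(X,Y)$ in the theorem statement. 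Both reductions need to be accounted for in the approximation and complexity recurrences.
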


The proofs are deferred to \Cref{sec::genlis} and \Cref{sec::reslis}. 
Combining the two algorithms from above, we obtain the following theorem.

\begin{theorem}\label{thm::reslis_main}
	Fix any $\lambda = o(1)$ and $\epsilon < 1$. 
	There exists a randomized non-adaptive algorithm 
	that solves 
	the $\LIS$ problem with $(\alpha,\lambda n)$-approximation, where $\alpha=(1/\lambda)^{\sqrt{\eps}}\cdot (\log 1/\lambda)^{2^{O(\log^2 (1/\eps)/\sqrt{\eps})}}$ using $\tfrac{1}{\lambda}\cdot n^{O(\sqrt{\eps}\log1/\eps)}$ time (and hence samples from the input).
\end{theorem}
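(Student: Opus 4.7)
The plan is to invoke Theorem~\ref{thm::sl_rec_tree} and Theorem~\ref{thm::gl_rec_tree} as a mutual recursion, with $\estreslis$ serving as the top-level routine applied to the input with $k=1$, $X=[n]$, $Y=\mathbb{N}$, so that $|y(X,Y)|=n$ and the density parameter equals $\lambda$. At each level, $\estreslis$ splits its block interval into $\tau$ pieces and feeds a size-$\tau$ block-$\genlis$ instance to $\cA_{GL}$, which in turn recurses into $\estreslis$ on sub-blocks of size $|X|/\tau$. The branching factor $\tau$ is fixed globally; I would set $\tau=n^{\Theta(\sqrt{\eps})}$ so that the recursion has depth $D=\Theta(1/\sqrt{\eps})$, bottoming out once the block interval has size at most $\tau$, at which point one reads all relevant samples directly from the \prectree and computes $\LIS$ exactly (the base case). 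This base case plus the two reduction theorems give one recursive black-box per level.

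Next, I would guess closed-form candidates for $a_s,a_g,c_s,c_g,t_s$ of the shape: $a_s$ is a product over the $D$ levels of a per-level blow-up $\log^{O(1)}(k/\lambda)\cdot\gamma$, with $\gamma=(1/\lambda)^{\sqrt{\eps}/D}$ chosen at each level to balance the two arguments of the $\max\{\gamma,a_s(\cdot)\}$ in the $\genlis$ bound; $c_s$ is a product of per-level factors of the form $\tau/\lambda\cdot\mathrm{poly}(\beta)$ with $\beta=\polylog(n)$; and $t_s$ carries the accumulated $\log^{O(\log_\beta\tau)}(\beta)$ overhead. Plugging these candidates into the three inductive inequalities required by Theorem~\ref{thm::sl_rec_tree} reduces to an algebraic telescoping check. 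Unrolling over $D$ levels yields $\alpha=\gamma^D\cdot(\log(1/\lambda))^{O(D)}$, matching the $(1/\lambda)^{\sqrt{\eps}}$ leading factor, while the fact that each level contributes its own $\log(k/\lambda)$ that is then compounded by the next-level recursion produces the iterated $2^{O(\log^2(1/\eps)/\sqrt{\eps})}$ term claimed.

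For the complexity, the precision $c$ of the top \prectree telescopes to $c\le(\tau/\lambda)\cdot\log^{O(D\log_\beta\tau)}(\beta)=\tfrac{1}{\lambda}\cdot n^{O(\sqrt{\eps}\log(1/\eps))}$, which by the sampling guarantee of \prectree (see \Cref{sec::precision}) is also the total number of positions of $y$ ever read; non-adaptivity is immediate because the only access to $y$ occurs when $T_{1/c}(y)$ is constructed at the outset, and all downstream recursive calls only query the data structure. The total running time is dominated by $c\cdot t_s\cdot |y(X,Y)|/|X|$ at the top call, which evaluates to the same bound up to $n^{o(1)}$ factors. The main obstacle I expect is the careful apportionment of $\lambda_1,\lambda_2,k'$ (subject to $\lambda_1\lambda_2=\Omega(\lambda k')$) at each level so that the three inductive inequalities of Theorem~\ref{thm::sl_rec_tree} hold simultaneously with the guessed closed forms---in particular, dividing $\lambda$ between ``local'' precision $\lambda_1$ inside a sub-block and ``global'' precision $\lambda_2$ across the $\tau$ blocks must track the density-dependent approximation from $\reslis$. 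A secondary subtlety is ensuring that the additive errors, each of size at most the relevant $\lambda|X|$, do not accumulate beyond $\lambda n$ across the $D$ levels; tightening the per-level additive budget by a $1/D$ factor handles this at negligible cost in the final bounds.
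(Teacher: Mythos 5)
Your high-level plan — mutual recursion between Theorem~\ref{thm::sl_rec_tree} and Theorem~\ref{thm::gl_rec_tree}, top-level call on the full input with $k=1$, and non-adaptivity via the single up-front $T_{1/c}(y)$ construction — matches the paper, as does the observation that the crux is verifying the inductive inequalities with suitable $\lambda_1,\lambda_2,k'$. However, the concrete parameter choices you propose do not close the recursion, and this is not a cosmetic issue.

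The central gap is the choice of a \emph{globally fixed} branching factor $\tau=n^{\Theta(\sqrt{\eps})}$ together with a \emph{constant} per-level $\gamma=(1/\lambda)^{\sqrt{\eps}/D}$. Recall that $a_g(\tau,\cdot)=\log(\cdot)\cdot\max\{\gamma,\,a_s(\tau,\tau_s,\gamma\lambda_2/k')\}$: the \genlis sparse estimator re-enters $\reslis$ on an instance of size $\tau$ with some branching $\tau_s$, and you never specify $\tau_s$. If $\tau_s=\tau$, that inner call is circular; if you declare a base case that reads the whole size-$\tau$ instance (as your ``bottoming out'' phrasing suggests), then $c_g$ picks up an additive $\tau$ at each of the $D=1/\sqrt{\eps}$ levels, so $c\geq\tau^{D}=n$ and the runtime guarantee collapses. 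The paper resolves this precisely by letting $\tau$ \emph{decrease geometrically} across the \genlis-side calls ($\tau=m^\eps$, so the shells are $n^\eps,n^{\eps^2},\dots,n^{1/H}$); the resulting two-level recursion — depth $H'=\Theta(\log(1/\eps)/\sqrt{\eps})$ in $\tau$-shells, with $1/\eps$ local levels per shell — is what makes $c=\tfrac{1}{\lambda}\cdot m^{O(\eps H')}\cdot(\log n)^{(1/\eps)^{O(H')}}$ come out to $\tfrac{1}{\lambda}n^{O(\sqrt{\eps}\log(1/\eps))}$. It is also the source of the $(\log 1/\lambda)^{2^{O(\log^2(1/\eps)/\sqrt{\eps})}}$ factor, namely one $\polylog$ per node in a $(1/\eps)$-ary tree of depth $H'$, i.e.\ $H=2^{\Theta(\log^2(1/\eps)/\sqrt{\eps})}$ nodes. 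Your depth estimate of $1/\sqrt{\eps}$ would give only $(\log 1/\lambda)^{O(1/\sqrt{\eps})}$, which signals that the compounding of \genlis-side recursions has been dropped, not discovered.

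A second, independent issue is that the constant $\gamma$ cannot withstand the adversarial split of the density budget. The recurrence in Claim~\ref{cl::recurse} takes a $\max$ over $t\in[1,O(L)]$; the paper's choice $\gamma(t)=t^{\sqrt{\eps}}$ is essential because it balances $\gamma(t)$ against $a_s(\tau,\cdot,\gamma(t)/t)\approx(t^{\sqrt{\eps}-1})^{-\sqrt{\eps}}=t^{\sqrt{\eps}-\eps}\le\gamma(t)$ uniformly over $t$, and the product $\prod_h\gamma(t_h)\le(\prod t_h)^{\sqrt{\eps}}\le L^{\sqrt{\eps}}$ then telescopes via the $(1-\sqrt{\eps})^{H'-h}$ induction. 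With your fixed $\gamma=(1/\lambda)^\eps$, taking the adversarial $t\approx L=1/\lambda$ gives inner density $\gamma/t\approx\lambda^{1-\eps}$, for which the inductive hypothesis $a_s(\cdot)\approx(\text{density})^{-\sqrt{\eps}}=(1/\lambda)^{\sqrt{\eps}(1-\eps)}$ greatly exceeds $\gamma$, so the claimed per-level blow-up $\max\{\gamma,a_s(\cdot)\}\approx\gamma$ is simply false. In short: you would need to (i) specify a \emph{decreasing} $\tau_s$ for the inner calls, (ii) make $\gamma$ depend on the density contraction $t$ at each level, and (iii) account for the multiplicative branching of the recursion tree, all three of which Appendix~\ref{apx:proofRecursion} does and your sketch does not.
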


The algorithm for \Cref{thm::reslis_main} 
 (see \Cref{alg::ALG_lis_main}) follows the outline from \Cref{fig::main_flow}. In particular, we first build a $\beta$-ary precision tree $T_{1/c}$ for $\beta = \Theta(\log n)$, and $c = c_s(n,n^\eps,\lambda,\lambda) = \tfrac{1}{\lambda}\cdot n^{O(\sqrt{\eps}\log1/\eps)}$. We then apply the algorithm of \Cref{thm::sl_rec_tree} by interpreting the string as a \reslis instance with $k=1$. The proof 
follows from recursive implementation of
\Cref{thm::gl_rec_tree,thm::sl_rec_tree} 
with carefully chosen parameters. Most importantly $\tau$ and $\gamma$ are carefully chosen, as a function of the other parameters, to balance approximation and complexity. Informally, we pick $\tau \approx |X|^{\eps}$ in each $\reslis$ instance, and $\gamma \approx (k/\lambda)^{\sqrt{\eps}}$ for a recursion of depth $\approx 1/\sqrt{\eps}$, then we stop and use the dense estimator only by setting $\gamma$ to be maximal. 
The formal proof 
involves tedious calculations, and
is deferred to \Cref{apx:proofRecursion}. 
We now complete
\Cref{thm::lis_main}, by instantiating the $\LIS$ problem with the parameters above set suitably. 


\begin{figure}[h!] 
\setstretch{1.2}
  \begin{algorithm}[H]
    \caption{\textsc{EstimateLISMain}}
    \label{alg::ALG_lis_main}
		\DontPrintSemicolon
		\SetNoFillComment
		\KwIn{
		A sequence $y \in \N^n$,
		error parameter $\lambda \in (1/n,1)$, and
		$\epsilon < 1$.}	
		\KwOut{An integer $\widehat{L} \in \left[0, \LIS(y) \right]$.
		}
    $T \gets \textsc{PreprocessPrecisionTree}(y,\lambda \cdot n^{-\eps}, \beta)$, where $\beta \gets \Theta(\log n)$. \\
     \Return $\widehat{L} \gets \estreslis(T,n^\eps,\lambda)$.
\end{algorithm}
\caption{Description of the algorithm \textsc{EstimateLISMain}.}
\end{figure}

\begin{proof}[Proof of 
\Cref{thm::lis_main}]
	Let $y\in \N^n$ be an input of the $\LIS$ problem. 
	We solve 
	the problem
	using the algorithm of 
	\Cref{thm::reslis_main} with inputs $y,\lambda$, and $\eps = 1/\log\log 1/\lambda$, noting that $\alpha=\lambda^{o(1)}$ and runtime complexity is $\tfrac{1}{\lambda}n^{O(1/\sqrt[3]{\log\log 1/\lambda})}=\tfrac{1}{\lambda}n^{o(1)}$. Finally, we invoke \Cref{ft::approx} with $q=o(1)$ to obtain the claimed approximation (as $q = o(1)$, $p = q / (1 - q) = o(1)$).
\end{proof}

\section{
\prectrees Data Structure}\label{sec::precision}

In this section we present the \prectrees data structure used to improve our bounds, and establish the properties of this data structure. In particular, we describe how to decompose the tree so that we can run various sampling-based algorithms on sub-trees of the tree, and recompose these results using yet another sampling-based algorithm.

The basic construction of the $\beta$-ary tree $T_\delta(e)$ was introduced in \Cref{sec:constPT}. In particular, each node $v$ has a precision score $\Pn_v$, defined as follows. For the root, we define 
 $\Pn_{\text{root}(T)} = \delta$.   The precision score of a node $v$ is defined recursively as a function of the parent: 
\begin{equation}
    \Pn_v = \bZ_v \cdot \Pn_{\parent(v)} \text{ where } \bZ_v \sim_{i.i.d.} \uniform(\{1,2,3,\ldots,\beta/4\})
    \ .
    \label{eq:precision-tree}
\end{equation}

For a node $v$, we recurse into its children only if $\Pn_v \le 1$; otherwise we stop the recursion for $v$ with $\Pn_v>1$ (i.e., $v$ has 0 children in the precision tree). If $v$ is a leaf, with the associated integer $i\in [n]$, then we sample the corresponding element $e_i$ of the input string $e$. 

We first establish the sample complexity of the \prectree. In the below, we use $V_\ell$ to denote the set of nodes at level $\ell$.


\begin{lemma}[Sample complexity]\label{lm::sample_bound}
	The expected number of coordinates in $[n]$ that are sampled is 
 $\tfrac{O(\log\beta)^{\log_\beta n}}{\delta}$. 
\end{lemma}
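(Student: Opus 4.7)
My plan is to combine linearity of expectation with a Markov-type bound applied along a single root-to-leaf path. Writing $L \triangleq \log_\beta n$ for the depth, by linearity the expected number of sampled coordinates equals $\sum_u \Pr[u \text{ sampled}]$, summed over all $n = \beta^L$ potential bottom-level leaves $u$. By the symmetry of the construction (the multipliers $\bZ_v$ are i.i.d.\ and the full tree is regular), this in turn equals $n$ times the sampling probability of any fixed leaf $u$.

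Fix such a leaf $u$ and let $v_0, v_1, \ldots, v_L$ be the root-to-$u$ path. By \eqref{eq:precision-tree}, the precision scores along this path satisfy $\Pn_{v_i} = \delta\prod_{j=1}^{i}\bZ_{v_j}$, and they are monotonically non-decreasing in $i$ since each $\bZ_{v_j}\ge 1$. Unrolling the ``recurse only if $\Pn_v \le 1$'' rule along the path and using this monotonicity to collapse the individual conditions to their maximum, $u$ ends up in the trimmed tree (and hence can be sampled) only when $\delta W \le 1$, where $W \triangleq \prod_{j=1}^{L}\bZ_{v_j}$. Hence $\Pr[u \text{ sampled}] \le \Pr[\delta W \le 1]$.

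To bound $\Pr[\delta W \le 1]$, I will use the Markov-style estimate $\Pr[\delta W \le 1] \le \E[\min(1,1/(\delta W))] \le \E[1/W]/\delta$. Independence of the $\bZ_{v_j}$ gives $\E[1/W]=\bigl(\E[1/\bZ]\bigr)^{L}$, and a direct harmonic-sum calculation for $\bZ\sim \uniform(\{1,\ldots,\beta/4\})$ yields $\E[1/\bZ] = \tfrac{4}{\beta}\,H_{\beta/4} = O(\log\beta)/\beta$. Combining and substituting $n = \beta^L$,
\[
\E[\text{samples}] \;\le\; n\cdot \frac{1}{\delta}\cdot\left(\frac{O(\log\beta)}{\beta}\right)^{L} \;=\; \frac{O(\log\beta)^{\log_\beta n}}{\delta},
\]
as claimed.

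The main subtlety I anticipate lies in the second paragraph: precisely pinning down which of the path's precision-score conditions actually govern whether a leaf survives in the trimmed tree (and is therefore sampled). A clean alternative that sidesteps this is to prove the bound by induction on $L$ via the recurrence $N_L(\delta) = \beta\cdot\E_{\bZ}[N_{L-1}(\bZ\delta)]\cdot \ind[\delta\le 1]$ with base case $N_0(\delta)\le 1$; at each inductive step, the factor of $\beta$ from branching is absorbed by $\E[1/\bZ] = O(\log\beta)/\beta$, losing only a factor of $O(\log\beta)$ per level and yielding $N_L(\delta) \le (O(\log\beta))^L/\delta$ directly.
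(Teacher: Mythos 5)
Your proof is correct and uses essentially the same argument as the paper: both bound the sampling probability of a leaf via the Markov-style estimate $\Pr[\cP_v \le 1] \le \E[1/\cP_v]$, compute $\E[1/\bZ] = O(\log\beta)/\beta$ by the harmonic-sum calculation, and multiply expectations along the tree using independence. Your presentation (fix a single leaf and multiply by $n$ via symmetry) is a cosmetic reorganization of the paper's level-by-level summation $\E[\tau_\ell] \le O(\log\beta)\,\E[\tau_{\ell-1}]$, but the underlying calculation is identical.
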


\begin{proof}
    Let $s_v = 1/\cP_v$. Since we only sample leaves with $s_v \geq 1$, then the total sampled leaves is at most $\sum_{v \in V_{\log_\beta(n)}} s_v$.
    
	Now, since $\bZ_v$ are chosen i.i.d. uniformly in $\{1,\ldots,\Omega(\beta)\}$, then we deduce the recursive relation:
	
	\begin{equation}
	\label{eq:leaf-sample}
	\E[s_v] \leq \E\left[\tfrac{1}{\cP_v}\right] = \tfrac{O(\log \beta)}{\beta} \cdot \E\left[\tfrac{1}{\cP_{\parent(v)}}\right]
	\end{equation}
 
	Let $\tau_\ell \triangleq \sum_{v \in V_\ell} s_v$.
	Then $\tau_0 = 1/\delta$, and for level $\ell$, summing and using 
	\Cref{eq:leaf-sample},
	we obtain
	$$
	 \E\left[\tau_\ell\right] = \E\left[\sum_{v\in V_\ell} s_v\right] 
	 \leq  \beta \cdot \E\left[\sum_{v \in V_{\ell-1}} \tfrac{O(\log \beta)}{\beta} \cdot s_v\right] 
	 \leq O(\log \beta)\E\left[\sum_{v \in V_{\ell-1}} s_v\right] 
	 = O(\log \beta)\tau_{\ell-1}. 
	 $$
	 
	 Then $\E[\tau_\ell] \leq O(\log \beta)^\ell/\delta$, and we conclude $\E\left[\sum_{v \in V_{\log_\beta (n)}}s_v\right] = O(\log(\beta))^{\log_\beta n}
	 / \delta$ as needed.
\end{proof}

\subsection{Tree Sampling Data Structure}

We equip the \prectree with a {\em Sampling Oracle Data Structure} (denoted SODS) that should be seen as a data structure wrapper with access to \prectree, allowing {\em efficient sampling} of leaves. 




First, we show one can simulate uniform samples of elements given a \prectree access.

\begin{lemma}[Simulating Random Samples]\label{lm::simulate_samples}
	Fix 
	$\delta \leq 1$ and
	\prectree $T=T_{\delta}(e_1,\ldots,e_n)$, as well as 
	arbitrary $\delta' \geq \delta$, with $1/\delta'\in \N$. Using access to $T$ only, we can generate a set of elements $S \subseteq \{e_1,\ldots,e_n\}$ such that the distribution of $S$ is identical to the distribution where each $e_i$ is included i.i.d. with probability $1/\delta' n$. The runtime to generate $S$ is $O(\log \beta)^{\log_\beta n}/\delta$.
\end{lemma}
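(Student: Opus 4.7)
The plan is to simulate the target i.i.d.\ Bernoulli$(1/(\delta' n))$ sampling by using $T$ as a pool of precomputed elements, augmented with fresh rejection randomness to turn $T$'s non-uniform sampling into the desired uniform i.i.d.\ distribution. The starting observation is that, by symmetry of the construction (the $\bZ_v$'s are drawn i.i.d.\ from the same distribution at every level), each leaf $v$ at level $L = \log_\beta n$ is sampled in $T$ with the same marginal probability $p$; and by combining Lemma~\ref{lm::sample_bound} with the fact that there are $n$ leaves, $p = \E[\#\text{sampled leaves}]/n \ge 1/(\delta n) \ge 1/(\delta' n)$. This ``oversampling'' by the tree provides exactly the slack needed for the rejection step.

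For the algorithm, I would traverse $T$ once from the root, maintaining the precision score $\cP_v$ incrementally; whenever a leaf $v$ of $T$ is reached, I include the corresponding element $e_v$ in $S$ independently with an acceptance probability $q_v$ that depends only on $\cP_v$, chosen so that, after composing with the probability that $v$ is sampled in $T$, the marginal inclusion probability of each element is exactly $1/(\delta' n)$. A natural choice is $q_v = \cP_v/\delta'$ (equivalently $q_v = 1/(\delta' n p)$ using the uniform marginal $p$ above), and I would verify the identity $\E\!\bigl[\mathbb{1}[\cP_v \le 1]\cdot q_v\bigr] = 1/(\delta' n)$ using the structure of the $\bZ_v$'s as i.i.d.\ draws from $\{1,\ldots,\beta/4\}$. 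Since $\delta' \le 1/\delta \cdot \delta \le 1$ in the regime of interest, this $q_v$ is a valid probability.

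The main obstacle is that the events $\{v \in T\}$ are not independent across leaves: sibling leaves, and more generally leaves in the same subtree, share the $\bZ$'s of their common ancestors, so the tree's raw sampling pattern carries nontrivial correlations. To argue that the joint distribution of $S$ is nevertheless the i.i.d.\ product, I would proceed by induction on the levels of $T$: at each internal node, the independence of the newly drawn $\bZ$'s together with the fresh per-leaf acceptance coins is exploited to show that, after averaging over the subtree randomness, the contributions of disjoint subtrees factorise. Chaining this level by level yields full independence across all $n$ leaves, and so the joint law of $S$ matches the target product distribution. This is the delicate part of the argument.

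For the runtime, the traversal visits each leaf of $T$ once and performs $O(1)$ work per leaf (incrementally updating $\cP_v$ and flipping the acceptance coin). By Lemma~\ref{lm::sample_bound}, the expected number of leaves in $T$ is $O(\log\beta)^{\log_\beta n}/\delta$, which matches the claimed runtime bound.
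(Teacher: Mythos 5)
Your overall structure (one traversal of $T$, fresh acceptance coins, and an inductive argument for independence) is in the right spirit, but the specific scheme you propose — accept each leaf $v$ present in $T$ independently with a probability $q_v$ depending only on $\cP_v$ — cannot produce a product Bernoulli distribution, and the acceptance formula itself is incorrect.

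\textbf{The acceptance formula.} You propose $q_v = \cP_v/\delta'$ and assert it is "equivalently $1/(\delta' n p)$''. These are not the same: $\cP_v$ is a random variable while $1/(\delta' n p)$ is a constant. Moreover $\cP_v/\delta'$ need not be a probability (take $\delta'=\delta$ small and $\cP_v$ close to $1$), and $\E\bigl[\1[\cP_v\le 1]\cdot \cP_v/\delta'\bigr]=\E\bigl[\cP_v\1[\cP_v\le 1]\bigr]/\delta'$, which is not $1/(\delta' n)$. The constant version $q_v=1/(\delta' n p)$ does give the right marginal, but the bound $p\ge 1/(\delta n)$ you invoke is not what \Cref{lm::sample_bound} says: that lemma gives an \emph{upper} bound on $\E[\#\text{sampled leaves}]$, not a lower bound.

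\textbf{The joint distribution.} This is the real obstruction. Even with a constant $q$ giving the correct marginal, the output is \emph{not} a product distribution, because the events $\{\cP_v\le 1\}$ are positively correlated for leaves sharing ancestors (siblings share $\bZ_{\parent}$, and $\bZ\ge 1$ makes $\cP$ monotone along paths). Concretely, with $\beta=8$ (so $\bZ\in\{1,2\}$ uniform), $\delta=1/3$, and siblings $v,w$ at depth $2$ with common parent $u$, one computes $\Pr[\cP_v\le 1]=3/4$ while $\Pr[\cP_v\le 1,\cP_w\le 1]=5/8>(3/4)^2$. An independent acceptance coin multiplied on top of both indicators cannot remove this correlation, since $\Pr[v\in S,\,w\in S]=q^2\Pr[\cP_v\le 1,\cP_w\le 1]>q^2 p^2 = (1/(\delta' n))^2$. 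Even letting $q_v$ depend on the whole root-to-$v$ path of $\bZ$'s does not help: for a shared ancestor $u$, one needs the conditional inclusion probability given $\bZ_u$ to be \emph{constant} in $\bZ_u$, and a short variance calculation shows this forces the acceptance rule to be supported inside the region where the indicator is $1$, which cannot be arranged uniformly over all $\bZ_u$ for general $\delta$.

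\textbf{What the paper does instead.} The paper avoids leaf-level rejection entirely. It argues by induction on tree height that from any node $v$ in $T$ and any $\delta'\ge\cP_v$ one can simulate i.i.d.\ Bernoulli$\bigl(1/(\delta' n_v)\bigr)$ samples from $v$'s leaves. In the easy case ($\delta'\le 4/\beta$) every child is guaranteed to be present and one recurses with a rescaled target. In the hard case, a child $u$ may be absent; conditioned on $u$ being present, the inductive hypothesis yields Bernoulli$(1/n_u)$ samples, and Claim~\ref{cl::binomial_dominance} establishes a stochastic dominance that lets one \emph{couple} this Bernoulli-modulated binomial count down to the desired $\Bin(n_u, 1/(\delta' n_v))$ count, then select a uniform subset (using exchangeability). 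Independence across children is free because distinct children's $\bZ$'s, recursive randomness, and fresh coupling coins are disjoint. The thinning happens \emph{per level of the recursion}, not once at the leaves; that is what kills the cross-leaf correlations that your scheme leaves intact.
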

We remark that, while two subsequent invocations of the above lemma may give different sets $S,S'$, each with the above distribution, they are dependent between each other.
\begin{proof}
	The main task here is to show independence, i.e., that we can choose a set of i.i.d. samples by reusing the randomness of the \prectree. We prove by induction on the tree height, starting with single node trees, that for any node with $\cP_v\le 1$, for any $\delta'\ge\cP_v$ we can generate a subset of the leaves where each leaf is included with probability at least $1/\delta'n_v$, where $n_v$ to be the number of leaves in the subtree rooted at $v$.

	For leaves, if $\cP_v \leq 1$, the node is included in the \prectree, and we can subsample it with any required probability as needed.
	Now consider a non-leaf node $v$; by induction the statement holds for its children.	
	
	If $\delta' \leq \tfrac{4}{\beta}$, we are basically done since any child $u$ of $v$, has score $\cP_u \leq \tfrac{\cP_v \beta}{4} \leq \tfrac{\delta' \beta}{4}\le1$, while the number of elements $n_u = \tfrac{n_v}{\beta}$, so we can simply subsample leaves in $u$'s tree, each with probability $\tfrac{1}{\delta' n_v} = \tfrac{1}{\delta' \beta n_u} \leq \tfrac{1}{\cP_u n_u}$ as needed (note that $\delta'n_v/n_u=\beta\delta'\ge \cP_u$ as required by the inductive hypothesis).
	
	It remains to show this for $\delta' > \tfrac{4}{\beta}$. 
	For this case we claim we would like to use the tree-randomness to generate the samples.
	
	For distributions $P, Q$, 
we say that $P$ 
{\em stochastically dominates} $Q$ (also denoted $P \succ Q$) 
if $\cF_P(t) \leq \cF_Q(t)$ for all $t$, with strict inequality for some $t$, where $\cF_{P}$ and $\cF_{Q}$ are the cumulative distribution functions (CDFs) of $P$ and $Q$ respectively. 
We first claim the following:
	
	\begin{claim}\label{cl::binomial_dominance}
		Fix $k \in \N$, $0 < p < 1$
		and $F \geq 1$ such that $F\cdot pk < 1/4$. 
		Let $\bX_1,\ldots,\bX_k \sim_{i.i.d.} \Ber(p)$ (i.e., i.i.d. Bernoulli random variables with bias 
		$p$) and let $\bX = \sum_i \bX_i$. Let $\bY_1,\ldots,\bY_k \sim_{i.i.d.} \Ber(2 \, p \, F)$ and let $\bY = \bY^* \cdot \sum_i \bY_i$ where $\bY^* \sim \Ber(1/F)$, independent of the other variables.
		Then, $\bY$ stochastically dominates $\bX$.
	\end{claim}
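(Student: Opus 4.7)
The plan is to establish stochastic dominance directly from the CDF inequality. Since $\bX$ and $\bY$ are both integer-valued and bounded by $k$, it suffices to verify $\Pr[\bY \geq s] \geq \Pr[\bX \geq s]$ for each $s \in \{1,\ldots,k\}$ (the cases $s \leq 0$ and $s > k$ being immediate). Because $\bY^*$ is independent of the $\bY_i$, one has $\Pr[\bY \geq s] = (1/F)\Pr[\Bin(k,2pF) \geq s]$ for every $s \geq 1$, while $\Pr[\bX \geq s] = \Pr[\Bin(k,p) \geq s]$. The goal thus reduces to proving
\[
\Pr[\Bin(k,2pF) \geq s] \;\geq\; F \cdot \Pr[\Bin(k,p) \geq s] \qquad \text{for all } s \in \{1,\ldots,k\}.
\]

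To bound the right-hand side from above, I would apply the union bound over the $\binom{k}{s}$ size-$s$ subsets of indices that could all be $1$, obtaining $\Pr[\Bin(k,p) \geq s] \leq \binom{k}{s} p^s$. To lower-bound the left-hand side, I would keep only the single $j = s$ term of the binomial PMF: $\Pr[\Bin(k,2pF) \geq s] \geq \binom{k}{s}(2pF)^s(1-2pF)^{k-s}$. Dividing, the required inequality becomes
\[
2^s F^{s-1} \cdot (1-2pF)^{k-s} \;\geq\; 1.
\]

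Finally I would invoke the hypothesis $Fpk < 1/4$ to close the bound. For $s \geq 1$ and $F \geq 1$, one has $2^s F^{s-1} \geq 2$; meanwhile Bernoulli's inequality gives $(1-2pF)^{k-s} \geq 1 - 2pF(k-s) \geq 1 - 2pFk > 1/2$. The product is strictly greater than $1$, which yields both the required tail inequality at every $s$ and the strict inequality at (e.g.) $s=1$ demanded by the paper's definition of $\succ$. The only subtle point is balancing the slack: the loss factor of $2$ from the Bernoulli step against the gain factor of $2$ in $2^s F^{s-1}$ is exactly what the $1/4$ constant is designed to make possible, so there is no deeper obstacle than this accounting.
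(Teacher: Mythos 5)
Your proof is correct, and it takes a route that is close in spirit but structurally different from the paper's. The paper works at the level of PMFs: it forms the pointwise ratio $f(i) = \Pr[\bY = i]/\Pr[\bX = i] = 2^i F^{i-1}\bigl(\tfrac{1-2pF}{1-p}\bigr)^{k-i}$ and shows $f(i) > 1$ for every $i \ge 1$ (using the inequality $1-x \ge 2^{-2x}$ on $[0,1/2]$), from which tail dominance follows by summing. You instead compare tails directly: you cap $\Pr[\Bin(k,p) \ge s]$ from above by the union bound $\binom{k}{s}p^s$ and floor $\Pr[\Bin(k,2pF) \ge s]$ from below by the single PMF term at $s$, reducing to $2^s F^{s-1}(1-2pF)^{k-s} \ge 1$, which you close with Bernoulli's inequality. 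The two arguments share the same arithmetic skeleton --- the factor $2^s F^{s-1} \ge 2$ and a lower bound on a power of $1-2pF$ of size at least $1/2$ --- but your union bound sidesteps the $(1-p)^{k-i}$ factor in the $\bX$-PMF that the paper has to carry through its ratio, so the bookkeeping is a bit lighter. Both deliver the strict inequality at every $s \ge 1$, so both meet the paper's definition of stochastic dominance.
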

	
    \begin{proof}
        $\bX$ and $\bY$ are both supported on $\{0,1,\cdots,k\}$. So, it suffices to show that $\Pr[\bY \geq i] > \Pr[\bX \geq i]$ for all $i \in [k]$. 
        Note that $2pF \leq 2pFk < 1/2$, so $2pF$ is a valid probability. For $i \in [k]$, let
        \[
        f(i) := \frac{\Pr[\bY = i]}{\Pr[\bX = i]}
        = \frac{\binom{k}{i} \cdot (2 \, p\,  F)^i \cdot (1 - 2 \, p \, F)^{k - i}}{F \cdot \binom{k}{i} \cdot p^i \cdot (1 - p)^{k - i}}
        = 2^i \cdot F^{i-1} \cdot \left(\frac{1-2 \, p\, F}{1-p}\right)^{k-i}.
        \]

        To show that $\Pr[\bY \geq i] > \Pr[\bX \geq i]$ for all $i \in [k]$, it suffices to show that $f(i) > 1$ for all $i \in [k]$. Using the conditions that $2 p F < 1/2$, $p F  k < 1/4$ and $F \geq 1$, along with the fact that $1 - x \geq 2^{-2x}$ for all $x \in [0, 1/2]$, we obtain:
        \[
        f(i) > 2 (1 - 2 \, p\, F)^k \geq 2 \cdot 2^{-4 p  F  k} > 2 \cdot 2^{-1} = 1,
        \]
        for all $i \in [k]$. This concludes the proof.
    \end{proof}
	
	 We show that one can simulate sub-sampling of each leaf of $u$ i.i.d. with probability $1/\delta' n_v$ by the following process. If $\cP_u > 1$, then output no elements. Otherwise, we can (by the inductive hypothesis) sub-sample each leaf of $u$ 
	 i.i.d. with probability $1/n_u$.

	 Now, let
	 \[
	 F \triangleq
	 \tfrac{1}{\left\lfloor1/\delta' \right\rfloor}
	 \tfrac{\beta}{4} \geq
	\tfrac{1}{ \left\lfloor1/\cP_v \right\rfloor}
	  \tfrac{\beta}{4} \stackrel{Eq.~\ref{eq:precision-tree}}{\geq} \tfrac{1}{\Pr_T[\cP_u \leq 1]}.
	  \]
	 This process generates $\mu$ i.i.d. leaves of $u$ with distribution 
	$$
	\mu \sim \Ber(\Pr_T[\cP_u \leq 1]) \cdot \sum_{l \text{ leaf of } u} \Ber(\tfrac{1}{n_u}) \succ \Ber(\tfrac{1}{F}) \cdot \sum_{l \text{ leaf of } u} \Ber(\tfrac{1}{n_u}).
	$$
	
	Noting that $n_u = \tfrac{n_v}{\beta} \geq \tfrac{\delta' n_v}{2F}$ and $\tfrac{n_u F}{\delta' n_v} \leq \tfrac{1}{4}$, we invoke \Cref{cl::binomial_dominance} to obtain that the  distribution $\mu$ {\em stochastically dominates} the required sample-size distribution $\sum_{l \text{ leaf of } u} \Ber(\tfrac{1}{\delta' n_v})$, and hence we can simulate the distribution we need over leaves of each child, and hence also for $v$.

For run-time, we note that this 
process takes time (at most) proportional to the size of the tree. For the latter, by \Cref{lm::sample_bound}, the time is $O(\log \beta)^{\log_\beta n}/\delta$.

%
%
\end{proof}

Next, we show that if each element of the tree is a block of integers, then one can construct a data structure that sub-samples with {\em interval range restriction}, in time proportional to the sample size.

\begin{corollary}[Conditional sub-sampling data structure]\label{cr::sampling_ds}
	Fix  
	\prectree $T=T_{\delta}(y)$ with $y \in \N^{n \times k}$.  
	There exists a data structure, that given any interval 
	$Y \in \cI$ and sub-sampling probability $\eta \leq 1/\delta n$ with $1/\eta\in E_2(n)$, sub-sample block-coordinates $X' \subseteq [n]$ i.i.d. with probability $\eta$ and outputs a set of all coordinates $W \subseteq X' \times [k]$ such that $y_W \in Y$, i.e. $W \triangleq \cup_{i \in X'} \{(i,j) \mid y_{i,j} \in Y \}$,
in time $\tO(|W|)$. 
The preprocessing time is,
in expectation,
$\tO\left(\tfrac{n + \norm{m}_1}{\delta n} 
\right)$,
 where $m_i \leq k$ is the number of non-null elements in $y_i$.
\end{corollary}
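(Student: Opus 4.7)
The plan is to precompute, once and for all, sample sets $X'_\eta$ for every admissible $\eta$ together with value-sorted structures supporting range reporting, so that each query reduces to a binary search plus enumeration. Since $1/\eta \in E_2(n)$ with $\eta \leq 1/(\delta n)$, there are only $O(\log(1/\delta))$ admissible values of $\eta$, so enumerating them in advance is affordable.

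At preprocessing I would do three things. \emph{(i)} For each admissible $\eta$, invoke \Cref{lm::simulate_samples} with $\delta' := 1/(\eta n) \ge \delta$ to obtain $X'_\eta \subseteq [n]$ with the required i.i.d.\ Bernoulli$(\eta)$ marginals; each call costs $\tO(1/\delta)$ and there are $O(\log n)$ of them, totalling $\tO(1/\delta)$. \emph{(ii)} For every block index $i$ in the precision-tree sample set $L$, sort the non-null entries of $y_{i,*}$ by value. By the symmetry of the construction in \eqref{eq:precision-tree} (the random scales $\bZ_v$ along any root-to-leaf path are i.i.d.\ and the tree is data-oblivious), $\Pr[i \in L]$ is the same for every leaf $i$, and by \Cref{lm::sample_bound} this common value is $\tO(1/(\delta n))$, so the expected total sorting cost is $\tO(\|m\|_1/(\delta n))$. \emph{(iii)} For each $\eta$, merge the per-block sorted lists of those $i \in X'_\eta$ into one global value-sorted list $G_\eta$; its expected size is $\eta \|m\|_1$, and summing $\eta$ over the dyadic chain telescopes to $O(1/(\delta n))$, giving total merging cost $\tO(\|m\|_1/(\delta n))$. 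Adding everything yields the claimed expected preprocessing bound $\tO((n + \|m\|_1)/(\delta n))$.

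At query time, given $(\eta, Y)$, the algorithm fetches the precomputed $G_\eta$, binary-searches it for the left and right endpoints of $Y$, and enumerates the contiguous sub-list between them, which is exactly the set $W$. This costs $O(\log n + |W|) = \tO(|W|)$, as required.

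The main subtlety is guaranteeing that the precomputed $X'_\eta$ is truly distributed as an i.i.d.\ Bernoulli$(\eta)$ sample of $[n]$, despite reusing the precision tree's randomness across different $\eta$. This is exactly what \Cref{lm::simulate_samples} certifies, since $\delta' = 1/(\eta n) \ge \delta$ and $1/\delta' = \eta n \in \N$ whenever $1/\eta \in E_2(n)$; the corollary asks only for the correct marginal distribution per query, not joint independence across queries, so sharing the tree's randomness (which is precisely the efficiency gain delivered by the \prectree) is harmless. A secondary care is that the uniformity $\Pr[i \in L] = \tO(1/(\delta n))$ used in the sort-cost analysis relies on the tree being oblivious to $y$, so the bound holds in expectation over the tree's random precisions regardless of the particular entries of $y$.
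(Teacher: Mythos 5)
Your proof is correct and takes essentially the same approach as the paper: precompute, for each dyadic $\eta$, a sorted value array over the $\eta$-sampled blocks via \Cref{lm::simulate_samples}, then answer a query by binary-searching $\min(Y),\max(Y)$ and enumerating the resulting contiguous range. The only cosmetic difference is that you first sort each sampled block individually and then merge across blocks for each $\eta$, whereas the paper sorts the union $U_\eta$ directly; both yield the same $\tO\bigl((n + \|m\|_1)/(\delta n)\bigr)$ expected preprocessing bound.
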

\begin{proof}
	To preprocess, we prepare for each possible $\eta\le 1/\delta n$ with $1/\eta\in E_2(n)$ by sub-sampling each block i.i.d. with probability $\eta$ using \Cref{lm::simulate_samples}. Then we compute a set of coordinates $Z_\eta\subset [n]$ by combining all coordinates from all sampled blocks. Let $U_\eta \triangleq \cup_{w \in Z_\eta \times [k]} \{y_w\}$.
	Compute $U_\eta$ and store it as a sorted array with a pointer to $y^{-1}(j)\cap (Z_\eta \times [k])$ for each $j \in U_\eta$.
	Now, for each $(\eta,Y)$ query, locate the range of elements in $U_\eta$ containing precisely the elements in $Y$ using binary search of $\min(Y), \max(Y)$. This will be our output $W$.
	
	The correctness follows immediately from the construction. Preprocessing runtime is as claimed as, in expectation, we need to sort $\|m\|_1/\delta n$ elements, and there are $O(\log n)$ different values of $\eta$ to consider. Runtime is only $O(\log n)$ plus the size of the output (as it is stored as contiguous block).
		\end{proof}

\subsection{Decomposing \prectrees}



We now show how to use the \prectree for recursive sampling-based algorithms via decomposing the tree.
In particular, consider the setting where a top-level/global algorithm decomposes the input string into intervals, and recurses on a sample of these intervals to solve a ``local'' problem on each. If one can compute a local function over each such (sampled) interval given a local \prectree access to such an interval, then we can solve the overall problem by recasting the overall precision tree as a small global precision tree where each leaf is a precision tree by itself. One can observe we can do this for any size intervals, as long as the precision score of local tree $\eta$ and precision score of global tree $\delta$ are such that the original tree has precision score $\eta\delta$ (i.e., it is $T_{<\eta \delta}$).

The algorithm follows the following steps (the full algorithm is described in \Cref{fig:trimtree}):
\begin{enumerate}
	\item Identify the ``correct'' level $\ell$ in $T$ where each node corresponds to the right number of elements.
	\item Find all nodes $v$ at level $\ell$ with enough precision to compute the local function $f_v$ and compute the local function on these nodes by ``detaching'' the sub-trees rooted at each $v$.
	\item Consider the global tree $T$, trimmed to levels $\leq \ell$ (i.e., the {\em top} of the tree). Augment the precision parameters by dividing each score with the precision we need to use for the local computation. This creates another ``simulated'' precision tree where we have access to the value at all leaves with $\cP_v \leq 1$. 
	\item Compute the global function $g$ with tree access to the locally computable values.
\end{enumerate}

Overall, we constructively show the following. Below, we assume that $n$ is an integral power of $\beta$, but the results extend to general tree sizes (for example by padding the input). 

\begin{lemma}[Tree Decomposition]\label{lm::tree_decomposition}
	Fix $\beta,\tau,n \in \N$ such that $n,\tau$ are each a power of $\beta$. Fix a domain of elements 
	$e \triangleq \{e_1,\ldots,e_n\} \in \cE^n$ and let $X_1,\ldots,X_\tau$ be a partition of $[n]$ into disjoint intervals of equal width. 
	Fix $\delta, \eta \in (0,1)$ and 
	(possibly randomized) functions $g: T_{\delta}(\cF^\tau) \rightarrow \cG$ and $\{f_i: T_\eta(e(X_i)) \rightarrow \cF\}_{i \in [\tau]}$ for some spaces $\cG$ and $\cF$.
	Then, algorithm TrimTree (see 
	\Cref{fig:trimtree}), given a \prectree $T_{\delta \eta}(e)$, outputs  $g(T_\delta(f_1(T_\eta(e(X_1))),\ldots,f_\tau(T_\eta(e(X_\tau)))))$. 
	The algorithm's expected run-time is $\tO(\tau + t_g + 
	\tfrac{1}{\delta \tau} \cdot \norm{t_f}_1\cdot \log^{O(\log_\beta(\tau)}(\beta) )$, where $t_g$ is the time to compute $g$, and $(t_f)_i$ is the expected time to compute $f_i$.
	\end{lemma}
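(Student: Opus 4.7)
The plan is to establish the lemma by (i) exhibiting a structural decomposition of $T_{\delta\eta}(e)$ into a ``top tree'' (its top $\ell := \log_\beta \tau$ levels) and a family of ``subtrees'' (those rooted at level $\ell$), (ii) showing that these pieces are distributionally equivalent to the tree constructions appearing in the target expression $g(T_\delta(f_1(T_\eta(e(X_1))),\ldots,f_\tau(T_\eta(e(X_\tau)))))$, and then (iii) bounding the expected runtime using the sample-count analysis of \Cref{lm::sample_bound} together with a Markov-style argument. First I would formally describe TrimTree as follows: compute the precision scores of $T_{\delta\eta}(e)$ down to level $\ell$; for each node $v$ at level $\ell$, let $X_{i(v)}$ be the interval of leaves below $v$, and if $\cP_v \le \eta$ invoke $f_{i(v)}$ by feeding it the subtree rooted at $v$ (using \Cref{lm::simulate_samples} to down-sample to precision $\eta$ when $\cP_v < \eta$); finally run $g$ on the resulting ``trimmed + augmented'' top tree, where the score of each top-tree node $v$ is rescaled to $\cP'_v := \cP_v/\eta$.

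Next, the correctness step has two independent distributional claims that I would prove by induction on depth. Claim (a): the top tree, with rescaled scores $\cP'_v = \cP_v/\eta$, is distributed exactly like $T_\delta(\cdot)$ truncated at level $\ell$. This follows since $\cP_{\text{root}}/\eta = \delta$, and the multiplicative increments $\bZ_v$ in the recursion \eqref{eq:precision-tree} are scale-invariant, so the rescaled process is a $T_\delta$ process with a weaker stopping rule (we continue to level $\ell$ regardless); since the target $T_\delta$ distribution is also truncated at level $\ell$, both the stopping rule and the augmented leaves with $\cP'_v>1$ are compatible with $g$'s semantics, which only consults leaves with precision $\le 1$. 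Claim (b): conditioned on $\cP_v$, the subtree rooted at a node $v\in V_\ell$ is distributed as $T_{\cP_v}(e(X_{i(v)}))$; since $\cP_v\le\eta$ exactly when we call $f_{i(v)}$, \Cref{lm::simulate_samples} lets us simulate $T_\eta(e(X_{i(v)}))$ access as required. Independence of the top-tree randomness from the subtree randomness is immediate because the multiplicative increments above level $\ell$ and below level $\ell$ are disjoint sets of independent $\bZ_v$'s. Putting (a) and (b) together, the output of TrimTree has the same distribution as $g(T_\delta(f_1(T_\eta(e(X_1))),\ldots,f_\tau(T_\eta(e(X_\tau)))))$.

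For the runtime, I would bound three contributions. The cost of touching every node of the top tree is $\tilde O(\tau)$ since it has at most $\beta^{\ell+1} = O(\beta\tau)$ nodes and $\beta = \polylog(n)$. The cost of $g$ is $t_g$ by assumption. The dominant term is the sum $\sum_v (t_f)_{i(v)} \cdot \mathbf{1}[\cP_v \le \eta]$ over $v \in V_\ell$. By symmetry, every $v_i \in V_\ell$ has the same marginal distribution on $\cP_{v_i}$. Applying \Cref{lm::sample_bound} to the $T_{\delta\eta}$ tree at level $\ell$ yields $\E\bigl[\sum_{v\in V_\ell} 1/\cP_v\bigr] \le O(\log\beta)^\ell/(\delta\eta)$. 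A Markov step then gives $\E\bigl[|\{v\in V_\ell : \cP_v \le \eta\}|\bigr] \le \eta \cdot O(\log\beta)^\ell/(\delta\eta) = O(\log\beta)^\ell/\delta$, so each individual $v_i$ satisfies $\Pr[\cP_{v_i}\le \eta] \le O(\log\beta)^\ell/(\delta\tau)$. Thus the expected total $f$-cost is $\frac{1}{\delta\tau}\,\|t_f\|_1 \cdot \log^{O(\log_\beta \tau)}(\beta)$. One also adds the preprocessing/overhead of \Cref{lm::simulate_samples} when simulating $T_\eta$ on each triggered subtree, which is subsumed in the same factor.

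The step I expect to be the main obstacle is the careful distributional argument that the \emph{conditional} law of each subtree given the top tree truly matches the ``fresh'' $T_\eta(e(X_i))$ used in the target composition, especially because the stopping rule in $T_{\delta\eta}$ is driven by $\cP\le 1$ while $T_\eta$'s stopping rule is $\cP \le 1$ after rescaling; this is precisely why TrimTree only invokes $f_{i(v)}$ when $\cP_v \le \eta$, and why it must down-sample via \Cref{lm::simulate_samples} to obtain a clean $T_\eta$ sample. Once this bookkeeping is written out, the remaining pieces (top-tree rescaling, independence of $\bZ_v$'s, and the Markov bound on $|\{\cP_v\le\eta\}|$) are essentially immediate.
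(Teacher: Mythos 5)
Your proof is correct and takes essentially the same approach as the paper: decompose $T_{\delta\eta}(e)$ into a top tree truncated at level $\ell=\log_\beta\tau$ with rescaled scores $\cP_v/\eta$ plus subtrees rooted there, argue these match $T_\delta(\cF^\tau)$ and $T_\eta(e(X_i))$ respectively, and bound the runtime via \Cref{lm::sample_bound}. You are actually somewhat more explicit than the paper's brief proof about the down-sampling via \Cref{lm::simulate_samples} when $\cP_v<\eta$, and about the per-node probability bound $\Pr[\cP_{v_i}\le\eta]\le O(\log\beta)^{\ell}/(\delta\tau)$, which the paper leaves at the level of "all $f_i$ have the same probability to be computed."
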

	
	\begin{proof}[Proof of 
	\Cref{lm::tree_decomposition}]
	Consider the full \prectree $T_{\delta \eta}(e)$. Take the top $\log_{\beta}\tau$ levels of the tree to get to the level where each node $v$ has $n/\tau$ leaves in its subtree.  This top portion (trimmed tree) can be seen as $T_\delta$ as follows: the precision $\Pn'_v$ in $T_\delta$ is simply defined to be $\Pn_v/\eta$. Hence for leaf $v$ in $T_\delta$, we have that $\Pn'_v\le 1$ iff $\Pn_v\le \eta$. Hence for each such $v$, if $\Pn'_v\le 1$, we have another $T_{\eta}$ rooted at $v$ (in the full \prectree), allowing us to compute the corresponding $f_i(T_\eta(e(X_i)))$. Hence we will be able to also compute $g(T_\delta(f_1(T_\eta(e(X_1))),\ldots,f_\tau(T_\eta(e(X_\tau)))))$ as required.
	

	For runtime, we note that by
	\Cref{lm::sample_bound}, we have $\tfrac{1}{\delta} \cdot \log^{O(\log_\beta(\tau)}(\beta)$ leaves $v$ with $\cP_v' \leq 1$. We only preprocess SODS and compute the function $f_i$ on those leaves, and all $f_i$ have the same probability to be computed. Also, we note from the independence of $\bZ_i$'s, once we fix precision parameter $\eta$, the time to compute each $f_i(T_\eta(e(X_i)))$ is independent of 
	the random choice of $i$ where $f_i$ is computed, hence we have the expected runtime is the product of expectations.
	Finally, computing the function $g$ takes $t_g$ time. 
	
	\end{proof}

%


\begin{figure}[h]
\setstretch{1.2}

  \begin{algorithm}[H]
    \caption{\textsc{TrimTree}}\label{fig:trimtree}
		\DontPrintSemicolon
		\SetNoFillComment
		\KwIn{Disjoint consecutive intervals of $[n]$: $\{X_i\}_{i\in [\tau]}$ of length $r$ 
		each, global function $g: T_{\delta}(\cF^\tau) \rightarrow \cG$, 
		functions $\{f_i : T_\eta(e(X_i)) \rightarrow \cF\}_{i \in [\tau]}$, 
		\prectree $T$, local precision parameter $\eta$.}
		\KwOut{$g([f_i(T_\eta(e(X_i)))]_{i\in [\tau]})$}
		Let 
		$\ell = \log_\beta \tau$ be the level in $T$ such that for $v \in V_\ell$, $|X_v| = r$. \\

		$T_g \gets$ the induced sub-tree of $V_{\leq \ell}$. \\
		\For{${v \in V_\ell}$}
		{
			\If{$\cP_{v} \leq \eta$}{Preprocess SODS for the precision tree rooted at $v$ using \Cref{cr::sampling_ds}. \\
			Compute $f_i(T_\eta(e(X_i)))$ using the sub-tree rooted at $v$ and store the result in $v$.}
		}
		$\cP_v \gets \cP_v / \eta$ for all $v \in T_g$. \\
		Preprocess SODS for $T_g$ using \Cref{cr::sampling_ds}. \\
		Compute $q \gets g([f_i(T_\eta(e(X_i))]_{i\in [\tau]})$ using $T_g$. \\ 
		\Return $q$.
\end{algorithm}
\caption{Algorithm \textsc{TrimTree} for decomposing \prectrees (\Cref{lm::tree_decomposition}). 
Define level $\ell_v$ of a node $v$ as its distance to the root. We define $V_\ell$ as the set of nodes of level $\ell$, i.e., $V_\ell \triangleq \{v \in T \mid \ell_v = \ell\}$. We use the shorthand $V_{\leq \ell} \triangleq \cup_{m \leq \ell} V_m$ and similarly for $V_{<\ell}, V_{>\ell},V_{\geq \ell}$.}
\end{figure}

\begin{figure}[h]
\setstretch{1.2}
  \begin{algorithm}[H]
    \caption{\textsc{PreprocessPrecisionTree}}
		\DontPrintSemicolon
		\SetNoFillComment
		\KwIn{
		An input $e \in \cE^n$,
		precision parameter $\delta$, and
		branching parameter $\beta \in \N$.}	
		\KwOut{\prectree data-structure $T_{\delta}(e)$.
		}
	$T \gets \textsc{PreprocessNode}(e,[|e|],\delta,\beta)$. \\
    Preprocess SODS for $T$ using \Cref{cr::sampling_ds}. \\
     \Return $T$. 
\end{algorithm}

  \begin{algorithm}[H]
    \caption{\textsc{PreprocessNode}}
		\DontPrintSemicolon
		\SetNoFillComment
		\KwIn{
		An input $e \in \cE^n$,
		an interval $I \subseteq [|e|]$,
		precision parameter $\delta$, and
		branching parameter $\beta \in \N$.}	
		\KwOut{\prectree data-structure $T_{\delta}(e)$.
		}
	$v \gets$ a new node associated with $e_I$. \\
	\If {$\delta \leq 1$}
    {
	\If {$|I| = 1$}
	    {Store $e_I$ in $v$.}
	\Else {
	$I_1,\ldots,I_\beta \gets$ a partition of $I$ into $\beta$ disjoint, equal length intervals. \\
	\For {$i \in [\beta]$}
	{
	    $\bZ_i \sim \uniform(\{1,2,3,\ldots,\beta/4\})$. \\
	    $\cP_i \gets \delta \cdot \bZ_i$. \\
	    $u_i \gets \textsc{PreprocessNode}(e,I_i,\cP_i,\beta)$. \\
	    Add $u_i$ as a child of $v$. 
	    }
	 }}
     \Return $v$. 
\end{algorithm}
\caption{Description of the algorithm \textsc{PreprocessPrecisionTree}.}
\label{fig::ALG_preprocess_precision_tree}
\end{figure}

\newcommand{\dsamples}[0]{\tfrac{1}{\lambda}}

\section{Algorithm for \genlis: Proof of \Cref{thm::gl_rec_tree}}\label{sec::genlis}

In this section we describe and analyze the algorithm for the \genlis problem, which is called \estgenlis.

First, we introduce an extension to the \reslis algorithm, which we use for our algorithm. 
In particular, 
we allow it to ignore 
a few ``heavy blocks'' which will not affect the approximation or complexity guarantees of \Cref{thm::gl_rec_tree}. 
We use the following definition:

\begin{definition}
	Fix block sequence 
	$y \in \N^{X \times [k]}$, 
	parameter $\delta < 1$ and value range interval $Y \in \cI$. Let $X^\delta$ be the $\delta|X|$ blocks  
	in $X$ for which 
	the quantity $|y(X^\delta) \cap Y|$ is maximized. The \emph{$\delta$-heavy trimmed input} $y_{-\delta,Y} \in \N^{X \times [k]}$ is $y$ after setting $y_{-\delta,Y}(X^\delta) = \emptyset$ (i.e., removing all integers from $y(X^\delta)$).
	\end{definition}

For $X' \subseteq X$ and a range of $y$-values $Y'$, 
we also define 
$y_{-\delta}(X',Y') \triangleq y(X')_{-\delta,Y'} \cap Y'$ (i.e., the multiset of values in $y(X')_{-\delta,Y'}$ restricted to $Y'$).

The stronger version of \Cref{thm::gl_rec_tree} is as follows:
\begin{lemma}[$\estgenlis$ algorithm, extended]
\label{lm::gl_rec_tree_ext}
	Fix integers $n,k$ and $\lambda \in (1/n,1)$. Fix an instance $g \in (\N \times \zo)^{n \times k}$. For some monotone functions $a_s : \R^3_+ \rightarrow [1,\infty)$ and $c_s : \R_+^4 \rightarrow [1,\infty)$, 
	suppose there exists a randomized algorithm $\cA_{BL}$ that, given $X\subseteq[n],Y \subseteq \N$, parameter $\tau_s$, and a \prectree  $T_{1/c_s\left(|X|,\tau_s,\lambda,\tfrac{\lambda\cdot |X|}{|y_{-\lambda/2}(X,Y)|}\right)}(y)$ 
	for some $y \in \N^{n \times k}$, can produce an $(\alpha_s, \lambda |X|)$-approximation for $\blocklis(y,X,Y)$ where 
	$\alpha_s= 
	a_s\left(|X|,\tau_s,\tfrac{\lambda\cdot |X|}{|
	y_{-\lambda/2}(X,Y)|}\right)$ in time $t_s$, w.h.p.
	
	Fix any parameter $\gamma, \tau \geq 1$. Then, there exists an algorithm $\cA$ that, given complete direct access to $g(*)_1 \in \N^{n \times k}$ and \prectree access to $g(*)_2 \in \zo^{n \times k}$, termed
	$T_{1/c}(g(*)_2)$,
	produces a $\left(\alpha_g, \lambda n\right)$-approximation for $\genlis(g)$ w.h.p., where $c \geq \tO(1/\lambda)
	+ c_s\left(n,\tau,\lambda,\gamma \lambda / k\right)$ and
    $\alpha_g=\log(k/\lambda) \cdot \max\{\gamma,a_s(n,\tau,\gamma \lambda / k)\}$. The algorithm $\cA$  
	 runs in time 
	$\tO(n k) + O(t_s)$.

\end{lemma}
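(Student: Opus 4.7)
The plan is to reduce the \genlis problem to a collection of simpler sub-instances via two levels of discretization, then estimate each via two complementary estimators and take the maximum. First, since the first coordinates of $g$ may be intervals rather than integers, I would bucket them into $O(\log(k/\lambda))$ dyadic length classes $\cI_\ell$ and apply an order-preserving map $\varphi_\ell : \cI_\ell \to \N$ to each; intervals shorter than the additive slack $\lambda n$ can be absorbed into the error directly. Running the algorithm separately on each length class and outputting the maximum will account for the $\log(k/\lambda)$ factor in $\alpha_g$. I would also bucket blocks by a dyadic discretization of their non-null count so that the sparse-regime bound can use the average block size rather than the worst-case $k$.

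Within each class, I would use only the freely-accessible first coordinates of $g$ to extract a collection of disjoint \emph{pseudo-solutions} $P_1, \ldots, P_t$ by maintaining a fast dynamic \LIS data structure (as in \cite{GJ21_dynamic_LIS} or \cite{KS21}): repeatedly extract a near-maximal monotone subsequence, record it as $P_j$, and delete its elements from the structure, stopping once no remaining subsequence has length $\Omega(\lambda n / \log(k/\lambda))$. This yields $t = O(1/\lambda)$ pseudo-solutions whose union covers all but $O(\lambda n)$ elements of any \LIS of the first coordinates. Since $\genlis(g)$ is upper-bounded by the \LIS of first coordinates restricted to genuine positions, the ensemble dominates $\genlis(g) - O(\lambda n)$. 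The total extraction cost is $\tO(nk)$ thanks to amortized updates.

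For each pseudo-solution $P_j$ I would compute two estimates of $|P_j \cap g(*)_2^{-1}(1)|$ and take the larger. The \emph{dense} estimator uses \Cref{lm::simulate_samples} on $T_{1/c}(g(*)_2)$ to draw $\tO(1/\lambda)$ uniform samples from $P_j$, tests each for genuineness, and rescales; a Chernoff bound then gives a $\gamma$-approximation whenever the genuine density within $P_j$ is at least $1/\gamma$. The \emph{sparse} estimator invokes $\cA_{BL}$ on the sub-instance induced by the blocks intersecting $P_j$, with value range $Y$ chosen so that the density parameter $\lambda|X|/|y_{-\lambda/2}(X,Y)|$ is at most $\gamma\lambda/k$, yielding an $(a_s(n,\tau,\gamma\lambda/k), \lambda|P_j|)$-approximation in the low-density regime. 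In every regime at least one estimator achieves the claimed bound, so the maximum does as well, producing approximation $\max\{\gamma, a_s(n,\tau,\gamma\lambda/k)\}$ per pseudo-solution and $\alpha_g$ after multiplying by the $\log(k/\lambda)$ from the interval-length discretization.

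The main obstacle will be sharing the single \prectree $T_{1/c}(g(*)_2)$ across the $O(1/\lambda)$ sparse-estimator sub-calls without inflating the precision budget. I would resolve this with the tree-decomposition machinery of \Cref{lm::tree_decomposition} and the conditional sampling of \Cref{cr::sampling_ds}, so that each sub-call of $\cA_{BL}$ sees a sub-tree with precision parameter $c_s(n,\tau,\lambda,\gamma\lambda/k)$, while the dense estimator reuses only the root-level samples at precision $\tO(1/\lambda)$; summing matches the hypothesized budget $c$. A secondary subtlety is the heavy-block trimming $y_{-\lambda/2}$ appearing inside the density parameter, which permits the sparse estimator to ignore up to $\lambda|X|/2$ dominant blocks whose contribution is absorbed into the additive error—this is what prevents a handful of outlier blocks from spoiling the sparse-regime bound. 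A union bound over the $O(\log(k/\lambda)/\lambda)$ sub-calls completes the w.h.p.\ guarantee, and the runtime is dominated by the $\tO(nk)$ extraction plus a single invocation of $\cA_{BL}$ of cost $O(t_s)$.
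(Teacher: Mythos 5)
Your proposal captures the right overall shape — extract disjoint pseudo-solutions with a dynamic \LIS structure, then combine a sampling-based (dense) estimator with an $\cA_{BL}$-based (sparse) estimator and take the maximum — but it misses the key intermediate step that makes the sample and precision budgets work out, and as written the per-pseudo-solution estimation scheme would blow the budget.

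The gap: you propose to estimate $|P_j \cap g(*)_2^{-1}(1)|$ \emph{separately for each pseudo-solution $P_j$}, drawing $\tO(1/\lambda)$ samples per $P_j$ and invoking $\cA_{BL}$ once per $P_j$. There are up to $\Theta(k/\lambda)$ disjoint pseudo-solutions of length $\approx \lambda n$ in an $n \times k$ block sequence, so this costs $\tO(k/\lambda^2)$ genuineness tests and $\Theta(k/\lambda)$ calls to $\cA_{BL}$, far exceeding the budget $c \ge \tO(1/\lambda) + c_s(\cdot)$ and the claimed runtime $\tO(nk) + O(t_s)$ (your own final paragraph speaks of a ``single invocation of $\cA_{BL}$,'' contradicting your per-$P_j$ plan). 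Moreover, you cannot sample uniformly from an arbitrary $P_j$: the \prectree $T_{1/c}(g(*)_2)$ provides a fixed non-adaptive sample of $[n]$, so you only get to see the genuineness of whatever blocks it happened to sample, regardless of where $P_j$ lives. The paper resolves both issues by first \emph{partitioning the pseudo-solutions into $O(\log(1/\lambda))$ buckets $\cP_\ell$ by length} (lengths in $(\lambda n \ell/2, \lambda n \ell]$) and then running both estimators once per bucket on the \emph{union} $U_\ell = \cup_{P\in\cP_\ell} P$, using a single global $\tO(1/\lambda)$-size sample of $g(*)_2$. The pigeonhole argument — there are at most $4k/(\lambda\ell)$ pseudo-solutions in bucket $\ell$, so if $U_\ell$ has genuine density $q^{(\ell)}$ then some single $P \in \cP_\ell$ contains at least $\tfrac{1}{4}\lambda n \ell \, q^{(\ell)}$ genuine elements — is what converts a union-level density estimate into a valid lower bound on $\genlis(g)$, and the $\log(1/\lambda)$ bucket count is precisely the source of the logarithmic factor in $\alpha_g$. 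Without this bucketing-and-pigeonhole step, the argument does not close.

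Two lesser points. First, the lemma you are proving has $g\in(\N\times\zo)^{n\times k}$ with integer first coordinates; the interval-discretization ($\varphi_\ell$) and block-size discretization you open with belong to the separate reductions in \Cref{lm::genlis_int_reduction,lm::gl_rec_ext}, not here, so invoking them adds nothing and muddies the accounting. Second, the dense estimator needs a robustness step you omit: the paper discards the $\zeta = O(\log n)$ sampled blocks with the most genuine elements (via \Cref{ft::stat_max}) before computing $\kappa_\ell$, so that a few heavy blocks cannot inflate the estimate and so that the resulting lower bound aligns with the $y_{-\lambda/2}$ trimming the sparse estimator relies on; a bare Chernoff bound is not enough because the per-block contributions are not uniformly bounded.
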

\ns{$\polylog$ parameter can be improved if needed} 
	
To prove \Cref{lm::gl_rec_tree_ext}, we use the following reduction: 

\begin{lemma}\label{lm::reslis_int_reduction}
    Suppose there exists an algorithm $\cA$ that $\left(a\left(\tfrac{\lambda |X|}{|y(X,Y)|}\right),\lambda n\right)$-approximates $\reslis$ w.h.p. given tree access $T_{1/c}(g)$, in time $t=t\left(\tfrac{\lambda |X|}{|y(X,Y)|}\right)$, where $c=c\left(\tfrac{\lambda |X|}{|y(X,Y)|}\right)$  and $a,c,t$ are some functions. Then there exists an algorithm $\cA'$ that $\left(O\left(a\left(\tfrac{\lambda |X|}{|y_{-\lambda/2}(X,Y)|}\right)\right),\lambda n\right)$-approximates $\reslis$ with tree access $T_{1/c\left(\tfrac{\lambda |X|}{|y_{-\lambda/2}(X,Y)|}\right)}$ in time $O(t) + \tO(1/\lambda)$ w.h.p.
\end{lemma}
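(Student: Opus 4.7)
The plan is to construct $\cA'$ by invoking $\cA$ on a lightly modified input obtained by emptying a small set of blocks. Because any increasing subsequence uses at most one element per block, emptying $\lambda|X|/4$ blocks changes $\reslis(g,X,Y)$ by at most $\lambda|X|/4 \le \lambda n/4$, which fits inside the additive error budget.

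Concretely, $\cA'$ picks any fixed set $H\subseteq X$ with $|H|=\lambda|X|/4$ (say, the first $\lambda|X|/4$ block indices) and defines $\widetilde{g}$ to agree with $g$ outside $H$ and to be empty on $H$. It simulates a precision tree on $\widetilde{g}$ by reusing the given tree $T$ and substituting the leaf values at positions in $H$ by $\emptyset$; this is a bona fide precision tree on $\widetilde{g}$ because the tree's shape and the precision scores $\cP_v$ depend only on indices, not on values. The accompanying SODS requires a refresh over an expected $O(\lambda)$-fraction of the stored samples, which contributes $\tO(1/\lambda)$ overhead via \Cref{cr::sampling_ds}. Finally, $\cA'$ invokes $\cA(\widetilde{g},X,Y)$ with the error parameter halved to $\lambda/2$, so that $\cA$'s internal additive error is $\lambda n/2$, and returns its output $\widehat{L}$.

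For correctness, the pivotal observation is that $|\widetilde{g}(X,Y)| \ge |g_{-\lambda/2}(X,Y)|$ for every $H$ with $|H|\le\lambda|X|/2$: by definition, $g_{-\lambda/2,Y}$ empties the $\lambda|X|/2$ blocks whose elements in $Y$ are most numerous, and therefore achieves the minimum possible surviving count among all trims of this size. Monotonicity of $a$ and $c$ then yields $a\!\left(\tfrac{\lambda|X|}{|\widetilde{g}(X,Y)|}\right) \le a\!\left(\tfrac{\lambda|X|}{|g_{-\lambda/2}(X,Y)|}\right)$ and analogously for $c$, so the precision of the provided tree is at least what $\cA$ needs on input $\widetilde{g}$. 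Combining $\cA$'s $(a(\cdot),\lambda n/2)$-guarantee on $\widetilde{g}$ with the sandwich $\reslis(\widetilde{g},X,Y)\in[\reslis(g,X,Y)-\lambda|X|/4,\,\reslis(g,X,Y)]$, and using $|X|\le n$ together with $\alpha\ge 1$, shows that $\widehat{L}$ is an $\bigl(O\!\left(a\!\left(\tfrac{\lambda|X|}{|g_{-\lambda/2}(X,Y)|}\right)\right),\,\lambda n\bigr)$-approximation of $\reslis(g,X,Y)$.

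The main obstacle is not the high-level strategy (which is essentially one line: ``trim anything and apply $\cA$'') but the bookkeeping around tree reuse: one must argue that running $\cA$ on a tree whose precision parameter $c\!\left(\tfrac{\lambda|X|}{|g_{-\lambda/2}(X,Y)|}\right)$ can strictly exceed the requirement $c\!\left(\tfrac{\lambda|X|}{|\widetilde{g}(X,Y)|}\right)$ does not break $\cA$'s guarantees (it only over-samples, which is harmless since each internal sub-routine accesses samples via the SODS with sub-sampling in \Cref{lm::simulate_samples}), and that the SODS refresh cost is genuinely $\tO(1/\lambda)$ rather than depending on $c$ or $\|m\|_1$. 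Both points follow from the index-only dependence of the \prectree construction together with the sample-count bounds in \Cref{lm::sample_bound,cr::sampling_ds}.
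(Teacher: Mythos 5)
Your construction cannot prove the lemma because it relies on the wrong direction of monotonicity. In this paper the approximation function $a$ (and likewise $c$) is \emph{decreasing} in the argument $\tfrac{\lambda|X|}{|y(X,Y)|}$: when this ratio is close to $1$ the instance is sparse in $Y$ and the approximation is essentially tight, whereas when $|y(X,Y)|$ is large the ratio is small and the approximation degrades. This is visible in the base cases $a(\cdot,\cdot,1)=1$, $a(m,\cdot,1/m)=1$, the closed form $a(n,n^\eps,1/L)=L^{\sqrt\eps}\xi^{O(H^2)}$ from \Cref{cl::recurse}, and $c(\cdot,\cdot,\cdot,1)=1$ versus $c(m,\cdot,\cdot,1/m)=m$. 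Since $|y_{-\lambda/2}(X,Y)|\le|y(X,Y)|$, the lemma's target $a\bigl(\tfrac{\lambda|X|}{|y_{-\lambda/2}(X,Y)|}\bigr)$ is \emph{smaller} than the raw guarantee $a\bigl(\tfrac{\lambda|X|}{|y(X,Y)|}\bigr)$: the lemma is a genuine strengthening, and its proof must manufacture an input on which the effective density is reduced.

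Your inequality $|\widetilde{g}(X,Y)|\ge|g_{-\lambda/2}(X,Y)|$ is true, but it implies $\tfrac{\lambda|X|}{|\widetilde g(X,Y)|}\le\tfrac{\lambda|X|}{|g_{-\lambda/2}(X,Y)|}$, and hence (by the \emph{decreasing} monotonicity) $a\bigl(\tfrac{\lambda|X|}{|\widetilde g(X,Y)|}\bigr)\ge a\bigl(\tfrac{\lambda|X|}{|g_{-\lambda/2}(X,Y)|}\bigr)$ — the opposite of what you assert. The same reversal defeats the precision argument: the tree $T_{1/c(\tfrac{\lambda|X|}{|g_{-\lambda/2}(X,Y)|})}$ handed to $\cA'$ is \emph{under}-sampled relative to what $\cA$ needs on $\widetilde g$, so the ``harmless over-sampling via \Cref{lm::simulate_samples}'' claim does not apply. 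More concretely, emptying a \emph{fixed} set of $\lambda|X|/4$ blocks cannot work: if those particular blocks happen to contain no $Y$-values at all, then $|\widetilde g(X,Y)|=|g(X,Y)|$ while $|g_{-\lambda/2}(X,Y)|$ may be far smaller, and no rescaling of $\lambda$ closes the gap.

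What the lemma actually exploits is that a few anomalously heavy blocks each contribute at most one element to any increasing subsequence, so they can be discarded at additive cost $O(\lambda|X|)$ while sharply reducing the density $|y(X,Y)|/|X|$. This forces the set of discarded blocks to be chosen \emph{adaptively}. The paper's proof samples blocks via the SODS, sets a threshold $\Upsilon$ at the empirical $(1-0.6\lambda)$-quantile of per-block counts $|y_{i,*}\cap Y|$, and then runs $\cA$ with $\lambda'=0.3\lambda$ while overriding to empty any block whose count exceeds $\Upsilon$; by Chernoff the surviving blocks are w.h.p.\ a subset of the $(1-\lambda/2)$-lightest, giving $|\widetilde g(X,Y)|\le|g_{-\lambda/2}(X,Y)|$, and at most $0.7\lambda|X|$ blocks are ever overridden, controlling the additive error. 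Your observation that overriding values at a subset of leaves yields a valid \prectree on the modified input (because tree shape and scores $\cP_v$ depend only on indices) is correct and is indeed used — but the overridden set must be the data-dependent heavy set, not a fixed one.
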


The proof of \Cref{lm::reslis_int_reduction} is deferred to \Cref{sec:genlis_ext}.

\begin{proof}[Proof of \Cref{thm::gl_rec_tree} using \Cref{lm::gl_rec_tree_ext,lm::reslis_int_reduction}]
The proof is immediate since the 
extended algorithm assumed in \Cref{lm::gl_rec_tree_ext} reduces to the standard versions using \Cref{lm::reslis_int_reduction}.
\end{proof}

The rest of this section is devoted to prove 
\Cref{lm::gl_rec_tree_ext}.




\subsection{Algorithm Description}



The algorithm $\estgenlis$ is presented in \Cref{fig:ALG_genlis}.
It consists of the following high-level steps. First, we perform some preprocessing by extracting {\em pseudo-solutions}, longest increasing subsequences, ignoring the genuineness flags. Then we partition these sequences according to their lengths. Next we estimate the proportion of genuine elements amongst the union of above subsequences; this gives our first (dense) estimator. We also employ a second (sparse) estimator, when the proportion of genuine elements is too low, to reduce the instance to \blocklis\ with improved parameters. We now describe these steps in detail.

\subsubsection{Extracting pseudo-solutions}

For this step, we treat all elements 
the same, whether they are genuine or not, and greedily compute and extract disjoint increasing subsequences, each of length  $\gtrapprox \lambda n$, 
denoted $\{P_i\}_i$ (which are called 
{\em pseudo-solutions}), until there is no longer an increasing subsequence of length $\lambda n$ remaining. For reasons that will become clear later, we also need to guarantee that each time we extract a sequence, such a sequence is approximately 
proportional to the longest one at that time, i.e., {\em after removing all previous sequences}, and therefore we use a greedy algorithm.

%
%
%



We note that we cannot afford to use the 
standard dynamic-programming solution for finding an optimal $\LIS$ each time, as the resulting runtime is too large for us. To improve the runtime, we use instead a fast data-structure for approximate greedy extraction, based on the fully dynamic data-structure 
in \cite{GJ21_dynamic_LIS}.

\begin{theorem}\cite{GJ21_dynamic_LIS}\label{thm:dynamic_LIS}
	Given an input $x \in \N^n$, for any $\eps > 0$, 
	there exists a {\em dynamic} data-structure $\cD$ with the following 
	operations: 1) inserting/deleting an element, in time $(\log n / \eps)^{O(1)}$, and 2) finding an approximate longest increasing sub-sequence of length within $(1+\eps)$-factor of the optimum, $\OPT$, in time $O(|\OPT|)$.
\end{theorem}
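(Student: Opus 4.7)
The plan is to build the data structure on top of a balanced binary (or $b$-ary) tree over the position space $[n]$. At each internal node $v$ covering a contiguous range $X_v \subseteq [n]$, I would maintain a compact ``profile'' $f_v : \N \to \N$ defined by $f_v(t) \approx $ the length of the longest increasing subsequence of $x(X_v)$ whose last element is at most $t$. The overall LIS is then $f_{\text{root}}(\infty)$. To keep the profiles small, I would discretize both the value axis and the LIS-length axis using geometric buckets of ratio $1+\eps'$ with $\eps' = \eps/\Theta(\log n)$, so that each $f_v$ becomes a monotone step function with $\poly(\log n/\eps)$ breakpoints.

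The key operation is merging two children into a parent: given $f_L$ and $f_R$, set
\[
f_v(t) \;=\; \max_{s \le t}\bigl(f_L(s) + f_R^{>s}(t)\bigr),
\]
where $f_R^{>s}(t)$ denotes the LIS length in the right subtree that uses only values strictly greater than $s$ and at most $t$. With $\poly(\log n/\eps)$ breakpoints on each side, this convolution-like maximum is computable in $\poly(\log n/\eps)$ time (e.g.\ by sweeping $s$ along the joint set of breakpoints). For an \emph{insert} or \emph{delete} at position $i$, only the $O(\log n)$ ancestors of the affected leaf have their profiles change, and each is recomputed by one merge, giving total update time $(\log n/\eps)^{O(1)}$.

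For the \emph{query}, I would extract an approximate LIS by a top-down recursion that mirrors the merge formula: at the root, identify the split value $s^*$ achieving the max in $f_{\text{root}}(\infty)$, then recurse into the left child restricted to values $\le s^*$ and the right child restricted to values $> s^*$, and finally concatenate the two extracted sequences. At each recursive call the range of values is an interval, so the relevant profile entries are contiguous and the ``next split'' can be looked up via a precomputed pointer in amortized $O(1)$ time. Because every internal recursion step reports at least one element of the output and the recursion tree has $O(|\OPT|)$ leaves, this extraction runs in $O(|\OPT|)$ time.

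The main obstacle is the error analysis across the $O(\log n)$ merging levels: naively the $(1+\eps)$ slack compounds to $(1+\eps)^{\log n}$. The fix is to refine the bucketing to ratio $1+\eps/\Theta(\log n)$, using $(1+\eps/\log n)^{\log n} \le 1+2\eps$, at the cost of larger (but still polylog) profile size and update time. A secondary subtle point is making the query truly $O(|\OPT|)$ rather than $O(|\OPT|\log n)$: each top-down step must avoid a fresh binary search, which I would handle by threading ``successor'' pointers between a node's profile entries and the corresponding entries of its children during every merge, so the subcall into a child begins at the correct breakpoint in constant amortized time. Everything else (the monotonicity of the profiles, the correctness of the merge, and the charging argument for extraction) is bookkeeping once this discretization and threading are set up.
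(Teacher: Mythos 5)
This statement is taken as a black box from \cite{GJ21_dynamic_LIS}; the paper does not reprove it, so there is no internal proof to compare against, and you are on your own proving it from scratch. Evaluated on its own merits, your sketch has a substantive gap at the merge step. You propose to store, per segment-tree node $v$, a one-parameter profile $f_v(t)$, yet your merge formula
\[
f_v(t) = \max_{s\le t}\bigl(f_L(s) + f_R^{>s}(t)\bigr)
\]
needs $f_R^{>s}(t)$, the LIS of the right subtree restricted to values in $(s,t]$. This is a genuinely two-parameter quantity that cannot be recovered from $f_R(\cdot)$ alone, so in fact each node must maintain the bivariate profile $g_v(s,t)$. That is where the plan breaks down: while $g_v$ has only $O(\log_{1+\eps'} n)$ distinct multiplicative levels, the boundary of a level set $\{(s,t):g_v(s,t)\ge\ell\}$ is a monotone staircase that can have polynomially many steps (e.g.\ a subtree consisting of $\Theta(\sqrt{m})$ decreasing runs occupying disjoint value windows of width $\Theta(\sqrt{m})$ each yields $\Theta(\sqrt{m})$ steps at every level $\ell$), and for $\ell$ below roughly $1/\eps$ the $(1+\eps)$-slack band is thinner than a single step, so the staircase cannot be replaced by a $\poly(\log n/\eps)$-piece curve without violating the lower-bound guarantee. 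Discretizing the value axis does not repair this because values are arbitrary and the staircase geometry is adversarial; in the exact setting this object is precisely the linear-size ``seaweed''/unit-Monge representation, and how to compress it approximately is exactly the hard part that makes dynamic LIS a nontrivial problem rather than a routine segment-tree exercise.

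The argument in \cite{GJ21_dynamic_LIS} proceeds via a different decomposition, built around a hierarchical grid of boxes over $[n]\times[n]$ where the compressible object is a small family of certified boxes rather than a per-node value profile; updates then touch only polylogarithmically many boxes, and the $(1+\eps)$ loss is charged per grid level rather than per segment-tree level. Your secondary concern, avoiding a $\log n$ overhead in the $O(|\OPT|)$-time extraction, is real, and the pointer-threading idea is a reasonable way to handle it, but it is downstream of the representational problem above, which is the step that requires a genuinely new idea and not just bookkeeping.
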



This theorem with $\eps=1$ leads to the following corollary for \reslis.

\begin{corollary}[Extracting Pseudo-solution, approximate]\label{thm::dynamic_main}
    There exists a data-structure that given a sequence of integer blocks $y \in \N^{n \times k}$, extracts (i.e., finds and removes) an increasing sequence of length at least $|\OPT|/2$, 
    where $\OPT$ is the current $\LIS$. The pre-processing time is $\tO(n \cdot k)$.
\end{corollary}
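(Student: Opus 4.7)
The plan is to reduce the block version of the problem to the classical (non-block) dynamic \LIS\ data structure of Theorem~\ref{thm:dynamic_LIS}, via a simple ``descending order within blocks'' encoding.

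Concretely, I would start by flattening the block sequence $y \in \N^{n\times k}$ into a single integer sequence $z$ of length $nk$: for each block $i \in [n]$, sort the entries $y_{i,1},\ldots,y_{i,k}$ in \emph{strictly descending} order, and concatenate the blocks in the natural order. The key observation is that any strictly increasing subsequence of $z$ can pick at most one element from each block (if it picked two, they would appear in descending order inside the block, contradicting strict increase). Conversely, any valid block-\LIS\ of $y$ corresponds to an increasing subsequence of $z$. Hence $\LIS(z)$ equals $\blocklis(y)$, and any increasing subsequence found in $z$ is a valid block solution.

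For preprocessing, I would sort each block in descending order in time $O(k \log k)$ per block, for a total of $\tO(nk)$. Then, instantiating the dynamic \LIS\ data structure $\cD$ of Theorem~\ref{thm:dynamic_LIS} with $\epsilon = 1$, I would insert all $nk$ elements of $z$ into $\cD$, one at a time; each insertion takes $\polylog(nk)$ time, so the total insertion cost is $\tO(nk)$. This gives overall preprocessing time $\tO(nk)$, as required.

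To extract a pseudo-solution, I would call the approximate \LIS\ query of $\cD$, which returns an increasing subsequence $S$ of $z$ of length at least $|\OPT|/(1+\epsilon) = |\OPT|/2$ in time $O(|S|)$; by the correspondence above, $S$ is a valid block-increasing subsequence of $y$ of the same length. I would then delete each element of $S$ from $\cD$ (each deletion in $\polylog(nk)$ time), so the elements are removed from the dynamic state, and subsequent extractions operate on the residual sequence. The only ``subtlety'' worth double-checking is that the guarantee of $\cD$ is with respect to the current set of stored elements, so after deletion the next call automatically measures $|\OPT|$ against the residual sequence, exactly as needed for our iterative greedy extraction. Since the reduction is essentially a black-box encoding, no real obstacle arises; the main thing is just verifying that the descending-within-block encoding preserves both the optimum value and the correspondence between increasing subsequences.
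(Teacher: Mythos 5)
Your proof is correct and follows essentially the same approach as the paper: insert block elements in non-increasing (descending) order so that at most one element per block can appear in any strictly increasing subsequence, then use the dynamic \LIS\ data structure of Theorem~\ref{thm:dynamic_LIS} with $\eps=1$ for $2$-approximate extraction and deletion. The only cosmetic difference is that you explain the descending-order encoding and its correctness in more detail than the paper's terse proof.
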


\begin{proof}
    To preprocess the data-structure, we insert all elements to $\cD$ one by one, where elements of each block are inserted in non-increasing order. This takes time $\tO(nk)$. Finally, we answer the extraction queries by $\LIS$-querying $\cD$, 
    and remove all elements in the $2$-approximate $\LIS$ solution 
    one by one from $\cD$.
\end{proof}




\subsubsection{Partitioning the pseudo-solutions}
Next, we partition $\{P_i\}_i$ 
into $\log (1/\lambda)$ disjoint buckets denoted $\{\cP_\ell\}_\ell$, where for $\ell\in E_2(1/\lambda)$, each $P \in \cP_\ell$ is an increasing sub-sequence of length $\approx \ell \lambda n$ (again, ignoring the genuineness of the elements). 
Our algorithm generates an estimator for $\genlis$ 
for each scale $\ell$, 
based purely on the elements of the subsequences $\{P \mid P \in \cP_\ell\}$, 
and outputs the maximal estimator.
Clearly, one of the buckets 
must contain a significant fraction of the $\genlis$; we focus on that bucket $\cP_\ell$ henceforth.

\subsubsection{Estimator Computation}
We next describe our estimator for each bucket as above, which by itself is generated by taking the maximum of 2 estimators, denoted {\em Dense Estimator} and {\em Sparse Estimator}. 
The {\em Dense Estimator} is a fairly straightforward one. 
Here, we consider the union of all pseudo-solutions $U_\ell \triangleq \cup_{P \in \cP_\ell} P$, and estimate the rate of genuine elements in $U_\ell$ by
checking the genuineness of a random sample of elements in $U_\ell$. The dense estimator is then proportional to the estimated rate of genuine elements in $U_\ell$.
The main idea is that, if 
$U_\ell$ has 
many genuine elements, then the aforementioned sampling procedure gives a high-fidelity estimate.

The sparse estimator handles the opposite situation: when the genuine elements are sparse within the union $U_\ell$ of subsequences.
Then, if the \genlis itself is quite large, it must be the case that a high fraction of the genuine elements participate in the \LIS. Here, we invoke the algorithm for \reslis, restricted to genuine elements only. 
Using the fact that \reslis approximation is a function of the proportion of $\LIS$ elements among all the 
``relevant'' elements 
(in this case, genuine elements), we make progress in the sparse case 
by reducing the number of relevant elements, 
resulting in improved approximation.
%
The overall estimator for length $\ell$ is the maximum of the $2$ estimators.

We highlight the fact that the sparse estimator calls the assumed algorithm $\abl$ on the input string $g$, restricted only to the genuine elements. In this call, we do not compute the genuineness flags explicitly, but rather compute them as needed, i.e., only for the elements that are ``read'' by the $\abl$ algorithm, and if an element turns out to be non-genuine it becomes a ``null'' entry. Since we are using \prectree to access samples, the algorithm will perform computation in a different (but equivalent) order: first computing the genuineness of the right samples, and then run the $\abl$ algorithm on those samples.

\begin{figure}[h!]
    \setstretch{1.2}
  \begin{algorithm}[H]
    \caption{\estgenlis}\label{fig:ALG_genlis}
		\DontPrintSemicolon
		\SetNoFillComment
		\KwIn{$g \in (\N \times \zo)^{n \times k},\lambda\in [1/n,1]$ 
		\prectree $T= T_{1/c}(g(*)_2)$ 
		for the parameter $c$ defined in 
		\Cref{lm::gl_rec_tree_ext}.}
		\KwOut{A number $\widehat{G} \in \left[0,n\right]$.
		}
		
		Greedily and iteratively extract 2-approximate maximal {\em increasing sequences} from $g(*)_1$, as long as they contain at least $\lambda n / 4$ elements using \Cref{thm::dynamic_main}, generating pseudo-solutions $P_1,\ldots,P_t$ for some $t = O(k/\lambda)$, where each $P_i \subseteq [n] \times [k]$ is a set of coordinates.\label{alg_gen_LIS_pseudo}\\
		
    Sample blocks $S \subseteq [n]$ i.i.d. with probability $10 \zeta\cdot \dsamples \cdot \tfrac{1}{n}$ each, for $\zeta = O(\log n)$, using \Cref{cr::sampling_ds} and $T$.\\
    Compute 
    $W \gets \{(i,j) \in S \times [k] \mid g(i,j)_2 =1\}$ the set of genuine (sampled) coordinates.\\
    
        \For{$\ell \in E_2(4/\lambda)/4$}{
     Let $\cP_\ell \triangleq \{P_i \, : \, |P_i| \in (\lambda \, n \, \ell/2,\lambda \, n \, \ell]\}$, 
     and let $U_\ell \triangleq \cup_{P \in \cP_\ell} P$. \\
     Compute $B_\ell \subseteq S \gets$ the $\zeta$ 
    blocks in $S$ for which 
    $|(B_\ell \times [k]) \cap W \cap U_\ell|$ is maximized.\\
    $\kappa_\ell \gets |W \cap U_\ell \cap ((S \setminus B_\ell) \times [k])|$. \\
    
    }
    \Return 
    \[
    \widehat{G} \triangleq \max_\ell \max\left\{\tfrac{\lambda^2 \, \ell \, n \, \kappa_{\ell}}{100 \cdot k \cdot \zeta} - \tfrac{\lambda n}{4 },\abl\left(g(*)_1 \mid_{U_\ell\cap (g(*)_2)^{-1}(1)},
    \N,\lambda,T \right)\right\}.
    \]
\end{algorithm}
\caption{Description of the algorithm $\estgenlis$. $g(*)_1\mid_{U_\ell\cap (g(*)_2)^{-1}(1)}$ denotes the sequence $g(*)_1$ restricted to positions in $U_\ell$ for which $g(\cdot)_2=1$. Recall that 
$E_2(k) \triangleq \left\{2^i \mid i \in \{0,\ldots,\lfloor\log
k\rfloor \}\right\}$. $\cA_{BL}$ is the assumed algorithm for \blocklis.
}
\end{figure}

\subsection{Analysis}
We now analyze the algorithm, proving 
\Cref{lm::gl_rec_tree_ext}.

\begin{proof}[Proof of 
\Cref{lm::gl_rec_tree_ext}]

\newcommand{\hOPT}{{\widehat{\OPT}}}
Let \OPT 
be the coordinates of an optimal solution of length $\genlis(g)$. 
Let $U \triangleq \cup_\ell U_\ell$, and $\overline{U} \triangleq [n] \times [k] \setminus U$. 
Define $\hOPT \triangleq \OPT \cap U$ (i.e., restricted to elements in $U$). 
Since $\overline{U}$ does not contain an increasing sequence of length $\tfrac{1}{2}\lambda n$, 
we have that $|\hOPT| \geq |\OPT| - |\OPT \cap \overline{U}| \geq |\OPT| - \tfrac{1}{2}\lambda n$. For $\ell \in E_2(4/\lambda)$, let us also define $\hOPT_\ell \triangleq \OPT \cap U_\ell$.

To bound our estimators, we first bound the quantity $\kappa_\ell$. For this task, we introduce more notation.  
Let $M \subset [n]$ be the 
$\lambda n /2$ blocks containing the highest number of genuine elements. 
We also 
define the following:

\begin{align*}
 \widehat{\kappa_\ell} &\gets |W \cap U_\ell|. \\
  q^{(\ell)} &\triangleq \tfrac{1}{nk} \cdot |U_\ell \cap (g(*)_2)^{-1}(1)|. \\
  q_{-\lambda/2}^{(\ell)} &\triangleq \tfrac{1}{nk} \cdot |U_\ell \cap (g(*)_2)^{-1}(1) \cap (([n] \setminus M) \times [k])|. \\
m_{-\lambda/2}^{(\ell)} &\triangleq \tfrac{1}{k} \cdot \min_{i \in M}\{|U_\ell \cap (g(*)_2)^{-1}(1) \cap (i,*)|\}. \\
G_{\ell,1} &\triangleq \tfrac{\lambda^2 \, \ell \, n \, \kappa_{\ell}}{100 \cdot k \cdot \zeta} - \tfrac{\lambda n}{4 }. \\
G_{\ell,2} &\triangleq \text{ the output of } \abl\left(g(*)_1 \mid_{U_\ell \, \cap \,  (g(*)_2)^{-1}(1)},
    \N,\lambda,T \right). 
\end{align*}
%
%

We have 
\[
    \E\left[\widehat{\kappa_{\ell}}\right]
    = q^{(\ell)} \cdot k \cdot \dsamples\cdot 10\zeta = q^{(\ell)} \cdot \tfrac{10 k \cdot \zeta}{\lambda}.
\]

We need to have a good estimator w.h.p. and therefore use the following statistical fact:
	\begin{fact}\label{ft::stat_max}
		Fix $\zeta \in \N 
		$. 
		Let $\bX_1,\ldots,\bX_m$ be non-negative 
		independent random variables and let $\bX = \sum_i \bX_i$. Let $\bY$ be the sum of the $\zeta$ largest (empirical) $\bX_i$. Then, $\bX - \bY \leq 2\E[\bX]$ with probability $1 - \exp(-\Omega(\zeta))$.
	\end{fact}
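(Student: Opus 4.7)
The plan is to separate the contribution of ``heavy'' summands (which are removed by taking the top $\zeta$) from ``light'' summands (which admit sharp concentration), via a truncation argument.

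Set the threshold $T := 2\E[\bX]/\zeta$, define the indicator $H_i := \mathbb{1}[\bX_i > T]$, and write $N := \sum_i H_i$ for the number of heavy variables. Also define the truncated variables $Z_i := \bX_i \mathbb{1}[\bX_i \le T]$, and note that the $Z_i$ are independent, take values in $[0,T]$, and satisfy $\E[\sum_i Z_i] \le \E[\bX]$. The key observation is the deterministic bound: if $N \le \zeta$, then the $\zeta$ largest $\bX_i$'s include every heavy one, so the remaining $m-\zeta$ order statistics are all among the light indices, giving
\[
\bX - \bY \;\le\; \sum_{i : \bX_i \le T} \bX_i \;=\; \sum_i Z_i.
\]

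It therefore suffices to show (i) $N \le \zeta$ with probability $1 - \exp(-\Omega(\zeta))$, and (ii) $\sum_i Z_i \le 2\E[\bX]$ with probability $1 - \exp(-\Omega(\zeta))$. For (i), Markov's inequality applied to each $\bX_i$ gives $\Pr[\bX_i > T] \le \E[\bX_i]/T$, whence $\E[N] \le \E[\bX]/T = \zeta/2$; since $N$ is a sum of independent Bernoullis, the standard multiplicative Chernoff bound yields $\Pr[N > \zeta] \le \exp(-\Omega(\zeta))$. For (ii), apply Bernstein's inequality to the bounded independent variables $Z_i$ with deviation $\E[\bX]$: using $\sum_i \Var(Z_i) \le \sum_i \E[Z_i^2] \le T\sum_i \E[Z_i] \le T\E[\bX]$, the exponent is at least $\Omega(\E[\bX]^2/(T\E[\bX])) = \Omega(\E[\bX]/T) = \Omega(\zeta)$, so $\Pr\bigl[\sum_i Z_i > \E[\sum_i Z_i] + \E[\bX]\bigr] \le \exp(-\Omega(\zeta))$, and the conclusion follows from $\E[\sum_i Z_i] \le \E[\bX]$.

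A union bound over the two events completes the proof. The only subtle point — and where I would double-check the constants — is the Bernstein step, because the bound is phrased as ``$\le 2\E[\bX]$'' rather than in terms of $\E[\sum_i Z_i]$; in the regime where $\E[\sum_i Z_i]$ is much smaller than $\E[\bX]$ the desired inequality only becomes easier, so phrasing the deviation as an \emph{additive} $\E[\bX]$ (rather than a multiplicative factor of $\E[\sum_i Z_i]$) is what makes the argument go through cleanly in both regimes. No other step poses a real obstacle.
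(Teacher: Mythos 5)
Your proof is correct and follows essentially the same route as the paper's: threshold at $T = 2\E[\bX]/\zeta$, control the number of heavy indices via Markov plus Chernoff, and concentrate the truncated sum via a Bernstein/Chernoff bound. The only cosmetic difference is that you zero out the heavy variables ($Z_i = \bX_i \mathbb{1}[\bX_i \le T]$) while the paper caps them ($\min\{\bX_i, T\}$) by subtracting the excess $\bZ_i = \max\{0,\bX_i - T\}$; both give the needed deterministic inequality $\bX - \bY \le \sum_i Z_i$ (resp.\ $\le \bX - \bZ$) conditioned on the heavy count being at most $\zeta$, and both truncations are $[0,T]$-bounded so the concentration step is identical in substance.
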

	\begin{proof}
		Define $p_i \triangleq \Pr[\bX_i > 2\E[\bX]/\zeta]$, and notice that $\sum_i p_i < \zeta/2$ using Markov's inequality. Define also $\bZ_i \triangleq \max\{0,\bX_i - 2\E[\bX]/\zeta\}$, and $\bZ = \sum_i \bZ_i$. 
		Then $\bX - \bZ$ is a sum of non-negative independent random variables bounded by $2\E[\bX]/\zeta$, and hence, by the Chernoff bound (multiplicative form), with probability at most $\exp(-\Omega(\zeta))$, 
		we have $\bX - \bZ > 2\E[\bX]\geq \left(1 + \tfrac{\E[\bX]}{\E[\bX - \bZ]}\right)\cdot \E[\bX - \bZ]$. 
		Also, notice that $\Pr[\bZ_i > 0] = p_i$, and hence, with probability at most $\exp(-\Omega(\zeta))$, $\sum_i \1[\bZ_i > 0] > \zeta \geq \sum_i p_i + \zeta/2$, implying that $\bZ > \bY$ with this  probability as well. 
		By the union bound, we conclude that $\bX - \bY \leq 2\E[\bX]$ with probability $1 - \exp(-\Omega(\zeta))$ as needed.
	\end{proof}
	

Now, notice that $\widehat{\kappa_{\ell}}$ is a sum of independent random variables and hence from \Cref{ft::stat_max}, w.h.p. we have 
\[
\kappa_{\ell} \leq 2\E[\widehat{\kappa_\ell}] = 20 q^{(\ell)} \cdot \tfrac{k \cdot \zeta}{\lambda}.
\]


For the lower bound, we derive a lower bound for $\kappa_\ell$ using $q_{-\lambda/2}^{(\ell)}$.
Notice that each $i \in M$ gets sampled i.i.d. with probability $\tfrac{10 \zeta}{\lambda n}$. Hence $|S \cap M| > 2\zeta$ w.h.p,  which implies that $|(S \setminus B_\ell) \cap M| > \zeta$. This means that the contribution of blocks in $M$ to $\kappa_\ell$ is at least $\zeta \cdot k \cdot m_{-\lambda/2}^{(\ell)}$.
Next, 
consider the contribution $\psi_\ell$ of blocks in $[n] \setminus M$ to $\kappa_\ell$. Then $\psi_\ell$ is a sum of independent random variables in $k \cdot [0,m_{-\lambda/2}^{(\ell)}]$ and hence, using the Chernoff bound, we have w.h.p.
\[
\psi_\ell \geq 0.9\E\left[\psi_{\ell}\right] - \zeta \cdot k \cdot m_{-\lambda/2}^{(\ell)} 
= \tfrac{9 \zeta}{\lambda}\cdot k \cdot q_{-\lambda/2}^{(\ell)} - \zeta \cdot k \cdot m_{-\lambda/2}^{(\ell)}.
\]

Therefore $\kappa_{\ell} \geq \psi_\ell + \zeta \cdot k \cdot m_{-\lambda/2}^{(\ell)} 
\geq \tfrac{9 \zeta}{\lambda} \cdot k \cdot q_{-\lambda/2}^{(\ell)}$ w.h.p. as well. 
Hence we can finally bound $G_{\ell,1}$, w.h.p., as follows:
 \[
    G_{\ell,1}
    = \tfrac{\lambda^2 \, \ell \, n \, \kappa_{\ell}}{100 \cdot k \cdot \zeta} - \tfrac{\lambda n}{4 }
    \in \left[\tfrac{9}{100} \lambda n \ell \cdot q_{-\lambda/2}^{(\ell)} - \tfrac{\lambda n}{4}, \, \tfrac{1}{4} \lambda n \ell \cdot q^{(\ell)}\right].
\]

For the second 
estimator, we have that $
    G_{\ell,2}$ is a $\Big(a_s\Big(n,\tau,\tfrac{\lambda n}{|U_\ell \setminus (M \times [k]) \, \cap \, g(*)^{-1}(1)|}\Big),\lambda n\Big)$-approximation of \\ $\genlis(g(U_\ell))$.

We now bound the maximum over these quantities.

\subsubsection{Upper Bound}
W.h.p., for all $\ell$, we have $G_{\ell,1} \leq \tfrac{1}{4} \cdot \lambda \, n \, \ell \, q^{(\ell)}$. 
Now, by construction, 
all 
pseudo-solutions $P \in \cP_\ell$ are of length $|P| \geq \lambda \, n \, \ell / 4$, 
and hence $|\cP_\ell| \leq \tfrac{4 k}{\lambda \ell}$. Therefore, there must exist some increasing sub-sequence $P^* \in \cP_\ell$
such that $|P^* \cap (g(*)_2)^{-1}(1)| \geq \tfrac{1}{4 } \cdot \lambda \, n \, \ell \, q^{(\ell)}$, 
i.e., that contains at least that many genuine elements. This implies $|\hOPT_\ell| \geq G_{\ell,1}$.

Similarly, for $G_{\ell,2}$, the $\reslis$ approximation algorithm outputs a lower bound on the $\LIS$, and hence $G_{\ell,2} \leq |\hOPT_\ell|$.

We conclude that $\max_{\ell}\max\{G_{\ell,1},G_{\ell,2}\} \leq \max_\ell |\hOPT_\ell| \leq |\OPT|$.

\subsubsection{Lower Bound}
Since $\hOPT = \cup_\ell \hOPT_\ell$,
 there exists $\ell^* \in E_2(4/\lambda)$ such that $|\hOPT_{\ell^*}| \geq \tfrac{|\hOPT|}{\log (4/\lambda)}$. 
Fix $U^* \triangleq U_{\ell^*} \setminus (M \times [k])$, 
$q^* = q^{(\ell^*)}_{-\lambda/2}$, $\hOPT^* \triangleq \hOPT_{\ell^*}$, 
$G^*_1 = G_{1,\ell^*}$ and $G^*_2 = G_{2,\ell^*}$.

Consider the following two cases:

\paragraph{Dense Case:}
First, suppose $q^* \geq \tfrac{1}{\gamma}$. Observe that from 
\Cref{thm::dynamic_main}, we may only extract a solution of length at most 
$\lambda n \ell^*$ once the remaining $\LIS$ is at most 
$2\lambda n \ell^*$ (i.e., after extracting longer subsequences). Hence $|\OPT_{\ell^*}| \leq 2\lambda n \ell^*$.
Therefore,
\[
G^*_1 \geq \tfrac{\lambda n \ell^*}{20\gamma} - \tfrac{1}{4}\lambda n 
\geq \tfrac{|\hOPT^*|}{40\gamma} -\tfrac{1}{4} \lambda n
\geq \tfrac{|\hOPT|}{40\gamma\log (4/\lambda)}- \tfrac{1}{4}\lambda n 
> \tfrac{|\OPT|}{40\gamma\log (4/\lambda)}- \lambda n.
\]

\paragraph{Sparse Case:}
Now, suppose $q^* < \tfrac{1}{\gamma}$. Then $|U^* \cap (g(*)_2)^{-1}(1)\cap (([n] \setminus M) \times [k])| < \tfrac{nk}{\gamma}$ 
and hence for any $\tau=\tau_s$ such that $T=T_{1/c}(g(*)_2)$, 
we have
\[
G^*_{2} \geq \tfrac{|\OPT^*|}{\alpha}-\lambda n \geq \tfrac{|\OPT|}{\alpha \log(4/\lambda)}-\lambda n,
\]
where
\begin{align*}
\alpha &= a_s\left(n,\tau,\tfrac{\lambda n}{|U^* \, \cap \, g(*)^{-1}(1) \, \cap \, (([n] \setminus M) \times [k])|}\right) \leq a_s\left(n,\tau,\tfrac{\lambda \gamma}{k}\right). \\
c &= c_s\left(n,\tau,\lambda,\tfrac{\lambda n}{|U^* \, \cap \, g(*)^{-1}(1) \, \cap \, (([n] \setminus M) \times [k])|}\right) \leq c_s\left(n,\tau,\lambda,\tfrac{\lambda \gamma}{k}\right). 
\end{align*}

Since we are either in the 
{\em dense} or {\em sparse} case as described above,
we obtain that 
\[
\max\{G^*_1,G^*_2\} \geq \tfrac{|\OPT|}{A\log(1/\lambda)} - \lambda n,
\]
where $A = \max\left\{O(\gamma),a_s\left(n,\tau,\tfrac{\lambda \gamma}{k}\right)\right\}$.

We conclude that $\widehat{G} = \max_\ell\{G_{\ell,1},G_{\ell,2}\}$ is a $(A \log(1/\lambda),\lambda n)$-approximation for $|\OPT|$, as needed.
\medskip

\paragraph{Runtime Complexity:}
The algorithm's runtime is dominated by the first step, i.e., generating increasing subsequences, which requires $\tilde{O}(nk)$ time using 
\Cref{thm::dynamic_main}. In addition, we require 
$t_s$ time for computing 
the second estimator. Thus, the overall algorithm for \estgenlis runs in time $\tilde{O}(nk) + t_s$.
\end{proof}

\section{Algorithm for \reslis: Proof of 
\Cref{thm::sl_rec_tree}}\label{sec::reslis}

We now present our construction for the $\reslis$ problem, and prove~\Cref{thm::sl_rec_tree}. 
Recall that we are given an interval $X \subseteq [n]$, a sequence 
$y \in \N^{X \times [k]}$
and a range of values 
$Y \subset \N$, and the goal is to approximate $\blocklis(y,X,Y)$, i.e., the length of a maximal sub-sequence $\OPT$, specified by 
a set of indices $w_1,w_2,\ldots,w_\ell \subseteq X \times [k]$ such that
   the set of (first coordinate, value) pairs $\{((w_i)_1,y_{w_i})\}_{i \in [\ell]}$ 
   is a subset of $X \times Y$, 
   and is a monotone set.

Our construction, at a high level, reduces the $\reslis$ problem above to a $\genlis$ instance over $\tau$ blocks, where $\tau$ is the branching factor of \Cref{thm::sl_rec_tree}, whose value is given as input and controls the delicate 
balance between approximation and complexity.  
The first coordinates of 
the $\genlis$ instance are 
computed directly, 
while each genuineness 
flag 
corresponds to a $\blocklis$ instance of a block interval $X_i$ of some smaller size $r \triangleq |X|/\tau$, and restricted to a $y$-interval $Y' \subseteq Y$. We consider the pair $X_i,Y'$ as genuine only if the (recursive) approximation of $\blocklis(y,X_i,Y')$ is 
above a certain {\em threshold} $\kappa$. 
For each $x$-interval $X_i$, we compute a set of $y$-intervals $\cY_i$ (called \emph{candidate intervals}), which together form
a \emph{block}
in the $\genlis$ instance. Intuitively, one can think of the $\genlis$ instance as solving the {\em global $\LIS$}, while each $\reslis$ instance is solving some {\em local $\LIS$} over (sub-)interval $X_i$, over some range of $y$-values. 
Overall, we will
show that such a formulation is equivalent to a composition of functions, and use the Tree Decomposition Lemma to obtain our correctness and complexity guarantees.

\subsection{Extending the 
\genlis problem}
Before describing our construction for proving~\Cref{thm::sl_rec_tree}, we  introduce a slightly stronger version of it, which will be easier to work with. In particular, we introduce 2 extensions of the requirements of 
the \genlis algorithm. 
First, we allow the input $y$-values to be {\em intervals} rather than integers, and second, we require the approximation to improve if there are many ``null'' elements,
i.e., if most blocks have much less than $k$ 
elements.\footnote{We highlight that ``null'' is different from
``not genuine''; while we need to test to determine that an element is ``not genuine'', 
``null'' elements can be determined ``for free''.} In particular, to account for the latter, we note that the input size is $y(X)\cap Y$, which can be much less than $|X|\cdot k$.

 Overall, the stronger version is the following:

\begin{lemma}[\Cref{thm::sl_rec_tree}, extended] \label{lm::sl_rec_relaxed}
Fix monotone 
functions $a_g :  
\R_+^2 \rightarrow [1,\infty)$, $a_s,c_g:\R_+^3 \rightarrow [1,\infty)$ and $c_s:\R_+^4 \rightarrow [1,\infty)$, satisfying, for all $r, \tau \in E_\beta$, $m \in \N$, $\lambda < 1$, $\lambda_1, \lambda_2 \in [\lambda,1]$, and $k' \in [1,1/\lambda]$ with $\lambda_1 \lambda_2 =
	 \Omega(\lambda k' 
	 )$: 

	\begin{itemize}
	\item
$a_s\left(r,\tau,\tfrac{\lambda}{m}\right) \geq \polylog \left(\frac{k}{\lambda}\right) \cdot a_g\left(\tau, \frac{\lambda_2}{k'}\right) \cdot a_s\left(\frac{r}{\tau},\tau,\frac{\lambda_1}{m}\right)$; and
\item
$c_s\left(r,\tau,\lambda,\tfrac{\lambda}{m}\right) \geq  \polylog (n) 
\cdot \frac{\tau}{\lambda} + c_g(\tau,\lambda_2,k') \cdot c_s\left(\frac{r}{\tau},\tau,\Theta(\lambda_1/k'),\frac{\lambda_1}{m}\right)$. 

	\end{itemize}
	
	Fix input $y \in \N^{n \times k}$, a block interval $X \subseteq [n]$, value range interval $Y \subseteq \N$, parameters $\lambda \in (0,1)$, $\beta \in \N$, $\tau \in E_\beta$.	
Suppose there exist the following algorithms:
\begin{enumerate}
\item[$\cA_{GL}$:] 
Given $\genlis$ instance $g \in (\N \times \zo)^{n_g \times k_g}$, with $m_g$ non-null elements total, 
and $\beta$-ary \prectree $T_{1/c_g(n_g,\lambda_g,m_g/n_g)}(g(*)_2)$ access, $\cA_{GL}$ outputs  $\left(a_g\left(n_g,\tfrac{\lambda_g n_g}{m_g}\right),\lambda_g n_g\right)$-approximation to $\genlis(g)$ in time 
$\tO(n_g + m_g)$ w.h.p.
\item[$\cA_{BL}$:] Fix $t>1$. For any interval $X' \subset X$, value range $Y' \subset Y$, and any $\lambda_s < 1$, 
given a $\beta$-ary \prectree $T_{1/c_s\left(|X'|,\tau,\lambda_s,\tfrac{\lambda_s |X'|}{|y_{-\lambda/2}(X',Y')|}\right)}(y(X'))$, $\cA_{BL}$ outputs $\left(\alpha_s,\lambda_s |X'|\right)$-approximation to $\blocklis(y,X',Y')$  w.h.p., where $\alpha_s \triangleq a_s\left(|X'|,\tau,\tfrac{\lambda_s |X'|}{|y_{-\lambda/2}(X',Y')|}\right)$. The expected run-time is at most $t_{BL}=t \cdot c_s\left(|X'|,\tau,\lambda_s,\tfrac{\lambda_s |X'|}{|y_{-\lambda/2}(X',Y')|}\right)\cdot \tfrac{ |y(X',Y')|}{ |X'|}$.
\end{enumerate}


	 Then, 
	 given a $\beta$-ary \prectree $T_{1/c}(y)$, 
	 we can produce a $(\alpha, \lambda |X|)$-approximation for $\blocklis(y,X,Y)$ w.h.p., as long as
$\alpha \geq a_s\left(|X|,\tau,\tfrac{\lambda |X|}{|y(X',Y')|}\right)$ and
$c \geq c_s\left(|X|,\tau,\lambda,\tfrac{\lambda |X|}{|y(X',Y')|}\right)$.

		The algorithm's expected run-time is at most $
	c
	\cdot t \cdot \log^{O(\log_\beta(\tau))}(\beta)  \cdot \tfrac{ |y(X,Y)|}{ |X|}$.
\end{lemma}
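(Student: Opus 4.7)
The plan is to reduce the $\blocklis(y,X,Y)$ problem to a single $\genlis$ instance of $\tau$ ``global'' blocks, where each block element is a \emph{candidate $y$-interval} and its genuineness flag equals the indicator that a recursive, ``local'' $\blocklis$ call on a smaller sub-interval of $X$ returns a value above a threshold $\kappa$. First I would partition $X$ into $\tau$ contiguous disjoint intervals $X_1,\ldots,X_\tau$ of equal width $r=|X|/\tau$. Next, using the SODS attached to $T_{1/c}(y)$, I would draw a set of anchor blocks from $X$ at rate roughly $\tau/(\lambda|X|)$ restricted to values in $Y$; by a standard Chernoff argument, with high probability each $X_i$ receives $\Theta(\tau/\lambda)$ sampled values $\cS_i \subseteq Y$ whenever the contribution of $X_i$ to $\OPT$ is at least $\lambda r$. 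From $\cS_i$ I would build a set $\cY_i$ of candidate $y$-intervals by taking, for each sorted anchor, the interval to the next dyadically spaced anchor in $\cS_i$; this keeps $|\cY_i|=\tilde O(|\cS_i|)$ instead of the quadratic size used in~\cite{rubinstein2019approximation}, at the cost of an $O(\log(k/\lambda))$ factor loss in the approximation (which is absorbed into $a_s$).

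I would then set up the $\genlis$ instance $g \in (\cI \times \{0,1\})^{\tau \times k_g}$ whose $i$-th block contains one element per candidate interval $Y'\in\cY_i$, with first coordinate $Y'$ and second coordinate equal to $1$ iff the recursive call $\cA_{BL}$ on $(y,X_i,Y')$ returns at least the threshold $\kappa$. Since a single global threshold $\kappa$ cannot simultaneously handle all profiles of how $\OPT$ spreads across the $X_i$, I would take the maximum over dyadic guesses $\rho \in E_2(1/\lambda)$ and set $\kappa \triangleq \Theta(\lambda\rho r)$ and $\lambda_s \triangleq \Theta(\lambda_1/k')$ with $k'=1/\rho$, so that exactly one guess captures the correct global/local balance: in that guess, $\gtrsim \tau/\rho$ blocks carry a genuine local \LIS of length $\geq \kappa$, and the global \genlis over candidate intervals is $\gtrsim \tau/\rho$ as well. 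The composition of the $\genlis$ and $\reslis$ approximations then matches the recursive inequality $a_s(r,\tau,\lambda/m) \geq \polylog \cdot a_g(\tau,\lambda_2/k')\cdot a_s(r/\tau,\tau,\lambda_1/m)$ assumed by the lemma. To deal with the fact that the $\cY_i$ live in the partial order $\cI$ rather than in $\N$, I invoke the length-based decomposition from the \genlis extensions: partition $\bigcup_i \cY_i$ by an exponential discretization of interval width and run the $\genlis$ estimator on each scale $\ell$ separately, losing only an extra $O(\log(k/\lambda))$ factor, which is again absorbed into $\polylog(k/\lambda)$ in $a_s$.

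For the approximation correctness I would argue both directions. For the upper bound, because the flags are set only when a recursive $\reslis$ returns above $\kappa$, an LIS among the genuine candidate intervals corresponds to a genuine increasing subsequence of $y$ realizable from the recursive certificates, hence the output never exceeds $\blocklis(y,X,Y)$ up to the additive error $\lambda|X|$. For the lower bound, I fix an optimal $\OPT$ of length $\geq \alpha^{-1}\blocklis(y,X,Y)$ and let $Y_i^*$ be the value range of $\OPT\cap(X_i\times[k])$; a standard sampling argument shows that with high probability for most $i$, some $Y'\in\cY_i$ is sandwiched inside $Y_i^*$ and captures a $(1-o(1))$ fraction of $\OPT\cap(X_i\times[k])$. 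Setting $\rho$ to match the actual profile of $\{|\OPT\cap(X_i\times[k])|\}_i$ (via one of the dyadic guesses), the corresponding $\genlis$ solution has length $\gtrsim \tau/\rho$ and its total recursive weight lower bounds $|\OPT|$ up to the claimed factor $\alpha$ and additive $\lambda|X|$ slack.

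The complexity and \prectree bookkeeping is where the main technical care will be required, and this is where I expect the hardest part of the proof to lie. The idea is to run the whole construction through one application of the Tree Decomposition Lemma (\Cref{lm::tree_decomposition}) with outer function equal to $\cA_{GL}$ on the $\tau$-block $\genlis$ instance and inner functions equal to $\cA_{BL}$ on $(y,X_i,Y')$ for each sampled $i$. The outer \prectree has precision $1/c_g(\tau,\lambda_2,k')$ over blocks of width $r$, and the inner \prectrees each have precision $1/c_s(r,\tau,\lambda_s,\lambda_1 |X_i|/|y_{-\lambda/2}(X_i,Y')|)$, so the overall precision parameter must satisfy $c \geq \polylog(n)\cdot\tau/\lambda + c_g(\tau,\lambda_2,k')\cdot c_s(r,\tau,\lambda_s,\cdot)$, which is exactly the recursive inequality on $c_s$ assumed in the statement; the additive $\tau/\lambda$ term comes from sampling the anchors and from the SODS preprocessing at the top level. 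The expected runtime then follows from \Cref{lm::tree_decomposition} combined with the expected runtime of $\cA_{BL}$, which is proportional to $|y(X_i,Y')|/r$; averaging over the random choice of $i$ (independent of the random values of $\cP_v$ inside a fixed sub-tree, by independence of the $\bZ_v$'s) yields the claimed total expected run-time of $c \cdot t \cdot \log^{O(\log_\beta \tau)}(\beta)\cdot |y(X,Y)|/|X|$. The subtlety I would most carefully verify is that the heavy-block trimming $y_{-\lambda/2}$ propagates correctly through both the candidate-interval construction and the recursive $\cA_{BL}$ calls, so that the approximation parameter $|y_{-\lambda/2}(X_i,Y')|$ used at the child level indeed matches the parent-level parameter $|y_{-\lambda/2}(X,Y)|$ up to the factors already absorbed into $a_s,c_s$.
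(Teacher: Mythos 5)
Your outline follows the same architecture as the paper's proof: partition $X$ into $\tau$ intervals, sample anchors via the SODS, build near-linear dyadic candidate interval families, set genuineness flags via recursive $\cA_{BL}$ calls thresholded at a dyadic level controlled by $\rho$, recompose through the Tree Decomposition Lemma, and handle the partial order on $\cI$ via the $\genlis$ extension. The upper-bound argument (monotone candidate sets cannot overcount) and the lower-bound sketch (some candidate interval nearly captures $\OPT\cap X_i$ for most $i$) are also in line with the paper's \Cref{lm::blis_upper} and \Cref{lm::reslis_invariants_tree}. Your threshold $\kappa=\Theta(\lambda\rho r)$ and the paper's $\kappa_\rho=r/\rho$ sweep the same set of dyadic values under the substitution $\rho\mapsto 1/(\lambda\rho)$, so that is a re-parameterization rather than an error.

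The gap is in the candidate-interval organization. You produce a single flat set $\cY_i$ of $\tilde O(|\cS_i|)$ intervals and feed them into one $\genlis$ instance, maximizing only over $\rho$. The paper instead produces a \emph{family} $\{\cY_{i,\Delta}\}_{\Delta\in E_2(|\cS_i|)}$ organized so that every $Y'\in\cY_{i,\Delta}$ satisfies $|Y'\cap\cS_i|=\Delta$, and runs a separate \textsc{Decompose} (hence a separate $\genlis$ instance) for each $\Delta$, taking the max at the end. This extra discretization is not cosmetic: \Cref{lm::reslis_invariants_tree} uses it to guarantee simultaneously that $|y_{-\lambda/2}(X_i,Y')|=O(\lambda\Delta r)$ for every candidate interval in the instance \emph{and} that $m_g/n_g=\E_i[|\cY_{i,\Delta}|]=O(|y(X,Y)|/(\lambda\Delta|X|))$, and these two bounds are exactly what combine to yield $\lambda_s\lambda_g\cdot n_g/m_g=\Omega(\lambda)$, which is the $\lambda_1\lambda_2=\Omega(\lambda k')$ side condition your lemma statement requires for the recursive inequalities on $a_s$ and $c_s$. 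With a flat $\cY_i$ mixing candidate intervals whose $|y_{-\lambda/2}(X_i,Y')|$ spans many scales, the per-call precision $c_s(r,\tau,\lambda_s,\lambda_s|X_i|/|y_{-\lambda/2}(X_i,Y')|)$ and approximation $a_s(\cdot)$ also span many scales; fixing a single worst-case value for the instance would blow past the bound $c$ and the claimed $\alpha$, while averaging would need a new argument you do not give. You flag the parameter bookkeeping as the hard part, and it is — but the missing $\Delta$-maximization is a concrete ingredient, not just ``more care,'' and the proof as sketched does not close without it.
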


To prove \Cref{thm::sl_rec_tree}, we use the reduction from \Cref{lm::reslis_int_reduction}, along with the following reduction:

\begin{lemma}\label{lm::genlis_int_reduction}
    Suppose the algorithm $\estgenlis$ $(a(\lambda/k),\lambda n)$-approximates $\genlis$ for some function $a$ in time $t(n \cdot k)$ over integers, then there exists an algorithm $\cA'$ that $(O(a(\lambda n/m) \log(1/\lambda)),\lambda n)$-approximates $\genlis$ over intervals with $m$ non-null elements.
    The run-time is $t(m) + O(n + m)$.
\end{lemma}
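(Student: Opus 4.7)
My plan is to reduce the interval-valued $\genlis$ instance to $O(\log(1/\lambda))$ integer-valued $\genlis$ sub-instances by partitioning intervals according to length, mapping each class to $\N$ via a simple left-endpoint bucketing, and returning the maximum of the outputs of $\estgenlis$ on each. The first step is the partition: for $\ell \in E_2(n)$, set $\cI_\ell \triangleq \{I \in \cI : |I| \in [\ell, 2\ell)\}$. Intervals of length exceeding $n/\lambda$ contribute at most $\lambda n$ to any monotone sequence and can be absorbed into the additive slack, leaving $O(\log(1/\lambda))$ relevant classes. Any monotone sequence of intervals decomposes disjointly across these classes, so by pigeonhole some class $\ell^*$ retains at least $|\OPT|/\log(1/\lambda)$ elements of the true LIS (writing $I_1 \prec I_2$ for $\max(I_1) < \min(I_2)$).

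Next, for each class I would define $\varphi_\ell : \cI_\ell \to \N$ by $\varphi_\ell(I) = \lceil \min(I)/\ell \rceil$. The embedding has two key properties to verify: if $I_1 \prec I_2$, then $\min(I_2) \geq \min(I_1) + \ell$ (since $|I_1| \geq \ell$), so $\varphi_\ell(I_2) \geq \varphi_\ell(I_1) + 1$ and $\varphi_\ell$ is strictly increasing on $\prec$-chains; conversely, if $\varphi_\ell(I_2) \geq \varphi_\ell(I_1) + 3$, then $\min(I_2) > \min(I_1) + 2\ell > \max(I_1)$ (since $|I_1| < 2\ell$), forcing $I_1 \prec I_2$. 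Consequently the integer LIS of $\varphi_\ell(\cI_\ell)$ agrees with the $\prec$-LIS of $\cI_\ell$ up to a factor of $3$: any integer LIS yields a $\prec$-chain of at least a third of its length by keeping every third element.

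Then I would construct each integer sub-instance $g_\ell$ by replacing every non-null interval with $\varphi_\ell(I)$ (preserving its genuineness bit), run $\estgenlis$ on each of the $O(\log(1/\lambda))$ instances at parameter $\lambda_\ell = \Theta(\lambda/\log(1/\lambda))$, and return the maximum output (after the factor-$3$ extraction to lift back to intervals). The upper bound is immediate from the lifting. For the lower bound, class $\ell^*$ retains $\Omega(|\OPT|/\log(1/\lambda))$ of the LIS, its $\varphi_{\ell^*}$-image has integer LIS at least this large, and $\estgenlis$ produces an $a(\lambda_{\ell^*} n / m_{\ell^*})$-approximation with $m_{\ell^*} \leq m$; monotonicity of $a$ then yields the claimed $(O(a(\lambda n/m)\log(1/\lambda)),\lambda n)$-approximation. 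The total run-time is $\sum_\ell t(m_\ell) + O(n+m) \leq \tO(t(m) + n + m)$, absorbing the logarithmic overhead into $\tO$.

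The main obstacle I anticipate is the design of $\varphi_\ell$ so that the integer and partial-order LIS lengths are tightly related: a naive mapping that ignores length variation either overcounts (when many intervals with the same left endpoint are ordered strictly in integers but all overlap) or undercounts (when nested intervals that are $\prec$-incomparable get forced into a strict integer order). The bounded length variation inside $\cI_\ell$ is exactly what lets the left-endpoint bucketing succeed with constant slack, and this is what forces the length partitioning and the $\log(1/\lambda)$ factor; once this is in place, the remaining bookkeeping (additive errors, non-null counts $m_\ell$, and application of monotonicity of $a$) is routine.
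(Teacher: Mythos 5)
Your length-based partitioning into classes $\cI_\ell$ and the left-endpoint bucketing $\varphi_\ell(I) = \lceil \min(I)/\ell \rceil$ are essentially the same device the paper uses (the paper takes $\lfloor \min(I)/\ell \rfloor$ and phrases the ``back-lifting'' via a residue-mod-$3$ partition rather than your ``keep every third element'', but these are interchangeable), and your analysis of the two embedding properties is sound. However, there is a genuine gap in how you get from the hypothesis to the conclusion: the hypothesis only says that $\estgenlis$ over integers is an $(a(\lambda/k),\lambda n)$-approximation, where $k$ is the \emph{maximum} block size, but the conclusion needs $a(\lambda n/m)$ where $m/n$ is the \emph{average} block size. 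After your mapping, the integer instance $g_\ell$ still has maximum block size up to $k$, so a direct invocation of $\estgenlis$ on it only certifies $a(\lambda/k)$-quality, which can be much worse than $a(\lambda n/m)$ for sparse instances. Your proof simply asserts that $\estgenlis$ produces ``an $a(\lambda_{\ell^*} n / m_{\ell^*})$-approximation'', but nothing in the hypothesis supports this. The paper fills this hole with a second, orthogonal discretization: after mapping to integers, it further buckets the $n$ blocks by size into scales $e\in E_2(k)$, runs the algorithm on each bucket (where max block size and average block size differ only by a constant), and invokes the intermediate ``sparse version'' lemma (\Cref{lm::gl_rec_ext}) to pass from max-size to average-size. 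Without this second discretization your argument does not yield the claimed bound.

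A secondary issue: you drop intervals of length exceeding $n/\lambda$ and assert they contribute at most $\lambda n$ to any monotone sequence, but as stated the interval endpoints are unbounded natural numbers, so a $\prec$-chain of arbitrarily many very long intervals is possible. The paper avoids this by first replacing each interval endpoint by its rank among all endpoints, compressing everything into $[1,2nk]$; only then is the long-interval count bounded, and this also bounds the number of length classes. Similarly, your choice $\lambda_\ell = \Theta(\lambda/\log(1/\lambda))$ is unnecessary (and slightly inflates your run-time to $\tO(t(m))$ rather than the claimed $t(m)+O(n+m)$): since the final output is a maximum over the classes, the additive errors do not accumulate, and running each class at $\lambda/2$ suffices as the paper does.
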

 

The proof of \Cref{lm::genlis_int_reduction} is deferred to \Cref{sec:genlis_ext}.

\begin{proof}[Proof of \Cref{thm::sl_rec_tree} using \Cref{lm::sl_rec_relaxed,lm::genlis_int_reduction,lm::reslis_int_reduction}]
The proof follows by induction. We assume \Cref{thm::sl_rec_tree} holds for $X' \subset X$, setting $t = t_s(|X'|,\tau)$ (noting that for the base case, we have $a_s(1,\cdot,\cdot) = 1$ and $c_s(1,\cdot,\cdot,\cdot) = 1$, simply by outputting 1 if and only if the block is not empty). 
Also note that the extended algorithms assumed in \Cref{lm::sl_rec_relaxed} reduce to the standard version using \Cref{lm::genlis_int_reduction}, with $O(\log 1/\lambda_g)$ additional 
approximation (which can be absorbed into 
the $\polylog (k/\lambda)$-approximation of parameter $\alpha$). 
Finally, one can observe that 
the additive time of $\tO(n_g + m_g)$ does not change the asymptotic time complexity.
\end{proof}

The rest of this section is devoted to proving
\Cref{lm::sl_rec_relaxed}.

\subsection{Algorithm}
The algorithm for the \reslis problem, named \estreslis, is described in 
\Cref{fig::ALG_reslis_amended} and uses routines that we describe next.

We first provide some intuition underlying the algorithm. The algorithm is based on a reduction to a \genlis instance, where checking whether a item is genuine takes one \reslis call. We view the \genlis instance as the ``global LIS'' and the \reslis instances as ``local LIS''.

Consider the case when $k = 1$, in which case the problem becomes a standard $\LIS$ problem. 
In this case, one visualize the instance as a set of points on a standard two-dimensional grid, with $X$ being an interval on the $x$-axis, and the element $(i,y_i)$ being represented by the point with $x$-coordinate $i$ and $y$-coordinate $y_i$. In this equivalent formulation, the objective is to determine the maximum length of a subsequence that is increasing with respect to both axes.

Assume that the longest increasing subsequence $\OPT$ is of length approximately $\lambda |X|$. Note that $\OPT$ can be distributed arbitrarily with respect to $X$. We split $X$ into sub-intervals: 
for some length $r>1$, use the standard, in-order partitioning of $X$ into $\tau = |X|/r$ mutually disjoint and covering intervals $X_1, \cdots, X_\tau$ of length $r$ each. There are two potential
extreme scenarios for the distribution of elements from $\OPT$:

\begin{enumerate}
    \item 
    All elements in $\OPT$ belong to approximately $\lambda |X| / r$ intervals $X_i$, and each such interval consists entirely of elements in $\OPT$; 
    the other intervals do not contribute at all.
    \item $\OPT$ is uniformly distributed across all intervals, i.e., each interval $X_i$ contributes approximately $\lambda r$ elements to $\OPT$.
\end{enumerate}

Intuitively speaking, the problem of {\em certifying a ``close to optimal'' monotone sequence} is more difficult when fewer elements participate in an optimal sequence.
Therefore, in the first case above, one has to ``work harder'' on the global \LIS to find the 
participating intervals $X_i$; however, little effort is required to verify the local \LIS in each interval of interest, i.e., to get a lower bound on the \LIS within each interval. 
In contrast, in the second case, it should be easier to determine the global \LIS, but it is more difficult
to approximate the local \LIS (within each interval $X_i$).

Of course, the distribution of $\OPT$ among the intervals $X_i$ can be arbitrary, between these extreme scenarios. However, one can 
show that there must exist some $\rho \in [1,1/\lambda]$, such that there are approximately $1/\rho$ fraction of
 intervals $X_i$, each having approximately $\lambda \rho r$ ``local'' contribution to $\OPT$.

\subsubsection{Algorithm Overview}

Now we describe the algorithm for the general $\reslis$ problem.
Before describing the main algorithm, we 
describe its 
main subroutines and mechanism, 
setting up some concepts and notation along the way. The notation is summarized 
in \Cref{tb::notation}.
The algorithm assumes no element repeats itself in the sequence. To remove such assumption, one can reconfigure the element values setting $y_{i,j} \gets n \cdot y_{i,j} - i$.


    \paragraph{Sampling.} 
The algorithm starts by sampling uniformly random blocks
$(w_j, y_{w_j})$ and then accumulating all $y_{w_j}$-samples in $Y$ within 
the sampled blocks (we note that sampling is accomplished via the SODS over the input \prectree). The main idea is to generate a set $\cS_i$ of sampled values, such that the ``distance in $\cS_i$'' between any 2 values in $\cS_i$ is approximately proportional 
to the number of integers in the range over the entire $y(X_i)$. To obtain such a guarantee, we simply use members of $\cS_i$ whose rank is a multiple of $O(\log n)$. We also want to make sure there are not too many total samples in all blocks, i.e., that $\sum_i |\cS_i|$ is close to 
its expectation. 
To control the tail bounds of such a  quantity, we also discard the $\Theta(\log n)$ blocks with the largest number of samples in $Y$. 
The sampling algorithm is presented in \Cref{fig::ALG_reslis_sample}, and the guarantees are formally stated in \Cref{cl::sample_inv}, as part of the analysis.


\begin{figure}[h!] 
\setstretch{1.2}
  \begin{algorithm}[H]
    \caption{\textsc{SampleAndPartition}}\label{fig::ALG_reslis_sample}
		\DontPrintSemicolon
		\SetNoFillComment
		\KwIn{
A \prectree $T=T_{1/c}(y)$, where $y \in \N^{X \times [k]}$
consists of $|X|$ blocks with at most $k$ 
integers each, 
value range $Y \in \cI$,
branching parameter $\tau$, and sample size $s$.
		}	
		\KwOut{Partition of $X$ with sampled $y$-values of each part.
		}
	 $\zeta \gets O(\log n)$.\\ 
	          $X_1,\ldots,X_{\tau} \gets$ a partition of $X$ into $\tau$ 
     consecutive intervals, each consisting of 
     $r \triangleq |X|/\tau$ 
     blocks.\\
    Using \Cref{cr::sampling_ds} and $T$, 
    sub-sample blocks $S \subseteq X$
    i.i.d. with probability $\zeta\tfrac{s}{|X|}$, and let 
    $W \gets \{(i,j) \in S \times [k] \mid y_{i,j} \in Y\}$. \\
    Let $B \subseteq S$ be the $\zeta$ largest
    blocks in $S$, i.e., for which
    $|B\times[k] \cap W|$ is maximized.\\

     \For {$i \in [\tau]$}       
        	{
     $\{p^{i}_1,p^{i}_2,\ldots\} \gets 
     y\left(W \cap ((X_i \setminus B) \times [k])\right)$ 
     where $p^i_m \leq p^i_{m+1}$ for all $m$.\\

     $\cS_i \gets 
    \{p^i_m\}_{m \in [|p^i_*|] \cap \zeta \N}$. \\
    
     }
    \Return $(X_*,\cS_*)$.
\end{algorithm}
\caption{Description of the sampling and partitioning subroutine.}
\end{figure}

       \paragraph{Generation of Candidate Intervals.} 
         Now we describe our construction of the candidate intervals $\cY_{i}$ using the sampled $y$-values $\cS_i$, for some fixed $i$. The high-level goal, as in \cite{rubinstein2019approximation}, is to cover an optimal \LIS solution using a small number of monotone interval sequences denoted as \emph{pseudo-solutions}, so that the largest \LIS within all pseudo-solutions would be a good approximation to the optimum. However, we need to choose such candidates more carefully, since, in addition to ``capturing the local $\LIS$", we would like to ensure the following efficiency guarantees, for our delicate bounds:

\begin{enumerate}
    \item There are not too many candidate intervals, i.e., $|\cY_i| \lessapprox \tfrac{|y(X_i,Y)|}{\lambda |X|}$; and more importantly,
    \item The total number of integers in range over all candidate intervals do not cause 
    significant 
    overhead. In particular: $\sum_{Y' \in \cY_i} |y(X_i,Y')| \lessapprox |y(X_i,Y)|$.
\end{enumerate}

In particular, we argue that given a value set $\cS$ (corresponding to the sampled values in each $X_i$) and a sub-additive set function $\psi$, we can generate a near-linear set of candidate intervals which approximates, up to a logarithmic factor, all candidate intervals. Here we note that the set function $\psi_{X_i,y}(Y') \triangleq \blocklis(X_i,Y',y)$ is sub-additive in $Y'$. While we did not manage to create a single set of intervals $\cY_i$ as above, we are able to create a {\em small family of sets} $\{\cY_{i,\Delta}\}_{\Delta \in E_2(|\cS_i|)}$ with similar guarantees. We show the following:

\begin{lemma}\label{lm::int_cluster}
	Fix a finite set of values $\cS \subseteq \N$ 
	of size $m$. There exist sets of intervals $\{\mathfrak{C}^\cS_\Delta\}_{\Delta \in E_2(m)} \subseteq \cI$ such that:
	\begin{enumerate}
		\item For all $\Delta$ and all $I \in \mathfrak{C}^\cS_\Delta$: $|I \cap \cS| = \Delta$.
		\item For all $\Delta$: $|\mathfrak{C}^\cS_\Delta| = |\cS|/\Delta$.
		\item Fix a sub-additive set function $\psi: \cI \rightarrow \R$. For any $I \in \{[a,b] \in \cI \mid a,b \in \cS\}$, there exists some $J \in \cup_\Delta \mathfrak{C}^\cS_\Delta$ with $J \subset I$ 
		such that $\psi(I) = O(\log |\cS|) \cdot \psi(J)$.
	\end{enumerate}
\end{lemma}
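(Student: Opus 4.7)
Order $\cS = \{s_1 < s_2 < \cdots < s_m\}$. For each $\Delta = 2^\ell \in E_2(m)$, I will define $\mathfrak{C}^\cS_\Delta$ as a set of $m/\Delta$ ``canonical'' intervals, where the $j$-th interval is designed to contain exactly the $\Delta$ consecutive samples $s_{(j-1)\Delta+1},\ldots,s_{j\Delta}$. Concretely, I take $J_{\Delta,j} := [s_{(j-1)\Delta+1},\, s_{j\Delta+1})$ for $j < m/\Delta$ and close off the last piece with $[s_{(m/\Delta-1)\Delta+1},\, s_m]$ (using the half-open form allowed by $\cI$). Under this definition, each $J_{\Delta,j}$ intersects $\cS$ in exactly $\Delta$ elements (giving (1)), the collection has $m/\Delta$ members (giving (2)), and consecutive intervals at the same level tile $[s_1, s_m]$ without gaps.

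For (3), fix $I = [s_p, s_q]$ and let $n := q - p + 1$. I will apply a standard segment-tree / dyadic decomposition to the \emph{discrete index range} $[p,q-1] \subseteq [1,m]$, producing $k = O(\log n) = O(\log |\cS|)$ canonical index blocks $[a_1,b_1],\ldots,[a_k,b_k]$ that partition $[p,q-1]$, each of length a power of two and aligned to its level. Each such index block $[a_i,b_i]$ corresponds precisely to $J_i := [s_{a_i}, s_{b_i+1}) \in \mathfrak{C}^\cS_{b_i-a_i+1}$. By the alignment of the decomposition, each $J_i \subseteq I$: the left endpoint satisfies $s_{a_i} \geq s_p$, while the right endpoint satisfies $s_{b_i+1} \leq s_q$ since $b_i+1 \leq q$. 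The $J_i$'s tile $[s_p, s_q)$, and adding the trivial piece $[s_q, s_{q+1}) \cap I = \{s_q\}$ — which we absorb by taking the final block $[s_{q-1+1}, s_{q+1}) \in \mathfrak{C}^\cS_1$ if necessary, or by enlarging the last $J_k$ into $\mathfrak{C}^\cS_{b_k-a_k+1}$ with endpoint $s_q$ — gives a cover of $I$ by $k+1 = O(\log|\cS|)$ pieces from $\cup_\Delta \mathfrak{C}^\cS_\Delta$, each contained in $I$.

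By sub-additivity, $\psi(I) \leq \sum_i \psi(J_i) \leq (k+1)\cdot \max_i \psi(J_i)$, so taking $J := \arg\max_i \psi(J_i)$ yields $\psi(J) \geq \psi(I)/O(\log |\cS|)$ with $J \subset I$ and $J \in \cup_\Delta \mathfrak{C}^\cS_\Delta$, as required.

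\textbf{Main obstacle.} The delicate point is the boundary bookkeeping: the canonical pieces must simultaneously (i) contain exactly $\Delta$ samples of $\cS$ to meet property (1), (ii) have $J \subseteq I$ so the conclusion of (3) applies to $J$ itself rather than to a superset, and (iii) collectively cover $I$ (not just $I \cap \cS$) so that sub-additivity truly bounds $\psi(I)$ from above. The three requirements pull in slightly different directions when $I$'s endpoints $s_p, s_q$ sit at odd positions relative to the dyadic grid aligned to $\cS$, and the gap segments $(s_j, s_{j+1})$ between consecutive samples must be accounted for. I resolve this by combining half-open canonical intervals (so adjacent pieces at the same level meet exactly along $\cS$-values) with the standard segment-tree cover and an $O(1)$ correction at the two extreme endpoints; the worst-case cost is an extra constant factor absorbed into the $O(\log|\cS|)$ bound and a trivial $\pm 1$ slack in the count of (2). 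No other step in the argument is more than routine.
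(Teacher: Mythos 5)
Your blueprint is essentially the one the paper uses: rank-order $\cS$, build a dyadic/segment-tree clustering of the index set $[m]$ (the paper isolates this as a separate claim), map each dyadic index block $[a,b]$ to an interval whose endpoints are $\cS$-ranks, cover the index interval $[p,q]$ by $O(\log m)$ dyadic blocks, and finish with sub-additivity. The only real divergence is that the paper keeps its canonical intervals \emph{closed}, $[\cS_a,\cS_b]$, while you use \emph{half-open} ones $[s_a,s_{b+1})$, which tile without gaps within a level.

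That choice, however, genuinely breaks the containment $J\subset I$ at the right endpoint, and the patch you propose does not repair it. In your family, every member of $\cup_\Delta\mathfrak{C}^\cS_\Delta$ that contains $s_q$ is of the form $[\,\cdot\,,s_{b+1})$ with $b\ge q$ (unless $q=m$ hits the special closed last block), hence it contains the integers in $(s_q,s_{q+1})$ and is \emph{not} a subset of $I=[s_p,s_q]$. Your candidate $[s_q,s_{q+1})\in\mathfrak{C}^\cS_1$ has exactly this defect, and ``enlarging the last $J_k$ to end at $s_q$'' would leave it with fewer than $\Delta$ samples, contradicting item (1). The paper's closed-interval construction sidesteps this: its $\Delta=1$ pieces are the singletons $[\cS_j,\cS_j]=\{\cS_j\}$, so $\{\cS_q\}$ is an admissible piece contained in $I$ and the segment-tree cover of $[p,q]$ yields only pieces $\subset I$. (Closed intervals do re-introduce the gap integers in $(\cS_{b_i},\cS_{b_i+1})$ between consecutive cover pieces, a boundary point the paper's one-line proof also glosses over.) The clean repair within your setup is to also admit the singletons $[s_j,s_j]$ into $\mathfrak{C}^\cS_1$ (or just switch to the paper's closed intervals); then the dyadic cover of $[p,q-1]$ followed by the singleton $\{s_q\}$ gives $O(\log m)$ admissible pieces, each $\subset I$, whose union is exactly $I$, and sub-additivity closes the argument.
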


%

We prove \Cref{lm::int_cluster} in \Cref{sec::int_dyadic}.
           For now, for each
     set $\cS_i=\{p_1^i,p_2^i,
     \ldots\}$, 
    we construct, on average, $O(\log(k/\lambda))$ \emph{candidate interval sets} $\{\cY_{i,\Delta}\}_\Delta$ where $\Delta\in E_2(|\cS_i|)$ and each $\cY_{i,\Delta}$ contains precisely $\Delta$ (consecutive) elements of $\cS_i$. In particular, $\cY_{i,\Delta}$ are the dyadic intervals of $\cS_i$, i.e., $\cY_{i,\Delta}$ is the set of all the intervals $[p^i_{k\Delta+1},p^i_{(k+1)\Delta}]$ for all possible integer $k\ge 0$.

\vspace{2mm}

            \paragraph{Discretization.} The algorithm performs exponential discretization over the following parameters:
\begin{enumerate}
	\item [$\rho$:] 
	inverse of the fraction of intervals participating in $\OPT$. This parameter characterizes the relation between the ``global'' and ``local'' \LIS. In particular, for the two extreme scenarios outlined above, the parameter $\rho$ would be $1/\lambda$ and $1$ respectively.
	\item [$\Delta$:] a quantity 
	proportional to the ``candidate interval size'' of the local-$\LIS$ (namely, to $|y_{-\lambda/2}(y,X_i,Y')|$) and inversely proportional to the ``average block size'' of the global-$\LIS$ problem (namely, to $m_g/n_g$\footnote{Recall from \Cref{lm::sl_rec_relaxed} that $m_g$ is the total number of non-null elements of the instantiated \genlis problem and $n_g$ is the number of blocks.}).
\end{enumerate}

\paragraph{Decomposition and Recursion.} The next step is to determine, for each choice of parameters 
     $(\rho,\Delta)$, the largest set of monotone candidate intervals $\{(i,Y'): Y' \in \cY_{i,\Delta}\}_i$ such that each candidate interval has local $\LIS$ at least $\kappa_\rho \triangleq r/\rho$. 
     For this purpose, 
     the algorithm defines the pair $(i,Y')$ as ``genuine'' iff the local LIS is long enough (determined via recursive call to \reslis). Then the problem becomes to solve a ``global" $\genlis$ instance over all genuine $(X_i, Y')$ pairs. 
     In other words, we formulate the problem as a composition of a $\genlis$ problem over many smaller $\reslis$ problems, where the \prectree decomposition algorithm (\Cref{lm::tree_decomposition}) is used to access the sequence question.

    \paragraph{Optimization.} Finally, 
    we output the solution with the maximum value among all 
    combinations of the parameter $\rho$ and $\Delta$.
    
    \begin{figure}[h!] 
\setstretch{1.2}
  \begin{algorithm}[H]
    \caption{\textsc{Decompose}}\label{fig::ALG_reslis_decompose}
		\DontPrintSemicolon
		\SetNoFillComment
		\KwIn{
		A \prectree $T=T_{1/c}(y)$, where $y \in \N^{X \times [k]}$ consists of $|X|$ blocks with at most $k$ 
		integers each,
		$x$-partition $\{X_i\}_i$ of $X$, $y$-partition sets $\{\cY_{i}\}_{i}$ (for a fixed $\Delta$) of some interval $Y \in \cI$.
			}	
		\KwOut{An integer $\widehat{L} \in \left[0, \reslis(y,X,Y) \right]$.
		}
		Let $\agl$, $\abl$ be the aforementioned algorithms from \Cref{lm::sl_rec_relaxed}. \\
		Define $\kappa_\rho \triangleq \tfrac{r}{\rho}$.\\
        For each $i$, 
        define $\Lambda_{\rho}(i,*) \triangleq 
        	\left[\left(Y', \1\left[\abl\left(y(X_i),Y',\tfrac{1}{\rho},\tau\right) > \kappa_\rho\right]\right)\right]_{Y' \in \cY_{i}}$. \\
       \For{$\rho \in E_2(1/\lambda)$}{
       Compute $G_\rho \gets \agl(\Lambda_\rho, \lambda \rho) \cdot 0.5 \kappa_\rho$ 
       using \Cref{lm::tree_decomposition} and $T$.
       }
    \Return $\widehat{L} \gets \max_{\rho} G_{\rho}$.
\end{algorithm}
\caption{Description of the decomposition subroutine. Note that lines $2-4$ are merely definitions for $\rho \in E_2(1/\lambda)$, and not computational steps.}
\end{figure}

\begin{table}
\begin{center}
 \begin{tabular}{| l | l |} 
 \hline
 {\bf Symbol, Definition} 
 & {\bf Description} \\ 
 \hline
 \hline
 $\lambda$ & additive error parameter (as fraction); given as input\\
 \hline
 $\tau$ & branching factor; given as input \\ 
 \hline
 $X_i$ & sub-interval of $X$\\
 \hline
 $r := |X|/\tau$ & \emph{radius}, i.e., number of blocks in each interval $X_i$\\ 
 \hline
 $\rho$ & inverse of fraction of intervals participating in {\em global} $\OPT$ \\ 
 \hline
  $\kappa_\rho := r/\rho$ & the local-$\LIS$ threshold that qualifies a candidate interval as ``genuine''\\ 
 \hline
 $Y$ & permissible range of $y$-values\\
 \hline
 $\cS_i$ & $y$-values in $Y$ found in sampled blocks in $X_i$\\
 \hline
 $\cY_{i,\Delta}$ & candidate intervals for $X_i$ using samples of $\cS_i$ of ``distance'' $\approx \Delta$\\
 \hline
  $\ell_{i,Y'} := \blocklis(y,X_i,Y')$ & the length of the longest sub-sequence of $y(X_i)$ using values in $Y'$\\
 \hline
 $T := T_{1/c}(y)$ & $\beta$-ary \prectree with starting complexity/precision $c$\\
  \hline
 $a_g(\cdot,\cdot),c_g(\cdot,\cdot)$ &  approximation and complexity functions for \genlis\\
   \hline
 $a_s(\cdot,\cdot,\cdot),c_s(\cdot,\cdot,\cdot)$ & approximation and complexity functions for \reslis\\
\hline
$y_{-\delta, Y'}$
&
\pbox{20cm}{\vspace{1mm}block sequence $y$,  where the heaviest $\delta$ fraction of blocks \\
(considering only values in $Y'$) are emptied}
	\\
  \hline
  $y_{-\delta}(X',Y') := y(X')_{-\delta,Y'} \cap Y'$ & multiset of values in $y(X')_{-\delta,Y'}$ restricted to $Y'$ \\
 \hline
\end{tabular}
\caption{Notation Table}
\label{tb::notation}
\end{center}
\end{table}

\ignore{
\begin{table}
\begin{center}
 \begin{tabular}{| l | l |} 
 \hline
 {\bf Symbol} & {\bf Description} \\ 
 \hline\hline
 $\tau$ & branching factor. Given as input to control balance of approximation / complexity.\\
 \hline
 $r$ & $=|X|/\tau$ \emph{radius}, i.e., number of blocks in each interval $X_i$.\\ 
 \hline
 $\rho$ & inverse of fraction of intervals participating in $\OPT$ \\ 
 \hline
  $\kappa_\rho$ & $=r/\rho$. the local-$\LIS$ threshold for making a candidate interval ``genuine''.\\ 
 \hline
 $\cS_i$ & $y$-values in $Y$ found in sampled blocks in $X_i$\\
 \hline
 $\cY_{i,\Delta}$ & candidate intervals generated over $X_i$ using samples $\cS_i$ of ``sample-length'' $\approx \Delta$\\
 \hline
\end{tabular}
\caption{Notation Table}
\label{tb::notation}
\end{center}
\end{table}
}

\begin{figure}[h!] 
\setstretch{1.2}
  \begin{algorithm}[H]
    \caption{\estreslis}\label{fig::ALG_reslis_amended}
		\DontPrintSemicolon
		\SetNoFillComment
		\KwIn{
		A \prectree $T=T_{1/c}(y)$, where $y \in \N^{X \times [k]}$ consists of $|X|$ blocks with at most $k$ 
		integers each,
		 value range $Y \subseteq \N$,
		 error parameter $\lambda \in (1/n,1)$, branching parameter $\tau$.
			}	
		\KwOut{An integer $\widehat{L} \in \left[0, \reslis(y,X,Y) \right]$.
		}

     \If{$|X| < c$}{\Return $\widehat{L} \gets \blocklis(y,X,Y)$ through exact computation.}
     
    
	$(X_*,\cS_*) \gets \textsc{SampleAndPartition}(T,\tau,\tfrac{\tau}{\lambda})$. \\
	
        	{
     $\cY_{i,*} \gets$ sets of intervals by invoking \Cref{lm::int_cluster} using each $\cS_i \in \cS_*$.\\
     
      \Return $\widehat{L} \gets \max_\Delta \{\textsc{Decompose}(T,X_*,\cY_{*,\Delta})\}$.

       }
\end{algorithm}
\caption{Description of the algorithm $\estreslis$.}
\end{figure}

\subsection{Main analysis}\label{sec::reslis_analysis}

Before analyzing the estimator $\widehat{L}$, we show several important properties. First, we claim that partitioning 
$X$ into intervals of smaller size 
and matching each one with a value range $Y$ monotonically cannot over-estimate the overall $\LIS$:

\begin{lemma}[Upper Bound]\label{lm::blis_upper}
	Fix an interval $Y \subset \N$. Let $X_1,\ldots,X_\tau$ be a partition of $X$ into consecutive, disjoint, in-order block intervals (i.e., $X_1 < X_2 < \ldots < X_\tau$) and let $P \subset [\tau] \times \cI$ 
	be an arbitrary {\em monotone} set. Then, we have $\sum_{(i,Y_i) \in P} \blocklis(y,X_i,Y_i \cap Y) \leq \blocklis(y,X,Y)$.
\end{lemma}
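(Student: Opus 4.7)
The plan is to exhibit an explicit increasing subsequence of $y$ on $X$ with values in $Y$ whose length is exactly $\sum_{(i,Y_i) \in P} \blocklis(y, X_i, Y_i \cap Y)$. This is done by concatenating local optima.

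For each $(i, Y_i) \in P$, let $\OPT_i \subseteq X_i \times [k]$ be a set of coordinates witnessing $\blocklis(y, X_i, Y_i \cap Y)$, i.e., a monotone set of size $\ell_i := \blocklis(y, X_i, Y_i \cap Y)$ such that $\{((w)_1, y_w)\}_{w \in \OPT_i}$ is monotone and all values $y_w$ lie in $Y_i \cap Y$. Define $\OPT := \bigcup_{(i, Y_i) \in P} \OPT_i$. Since the intervals $X_i$ are pairwise disjoint and in order, $\OPT$ uses at most one coordinate per block (as each $\OPT_i$ does), so it is a valid candidate for $\blocklis(y, X, Y)$ as long as we verify monotonicity in both coordinates and that all values lie in $Y$.

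Clearly every $y_w$ for $w \in \OPT$ lies in some $Y_i \cap Y \subseteq Y$. For monotonicity, take $w \in \OPT_i$ and $w' \in \OPT_j$ with $i \leq j$. If $i = j$, monotonicity is inherited from $\OPT_i$. If $i < j$, then $(w)_1 \in X_i$ and $(w')_1 \in X_j$, and since $X_i < X_j$ as consecutive in-order intervals, $(w)_1 < (w')_1$. By the monotonicity of $P \subset [\tau] \times \cI$, $i < j$ implies $Y_i < Y_j$, i.e., $\max(Y_i) < \min(Y_j)$. Since $y_w \in Y_i$ and $y_{w'} \in Y_j$, we get $y_w < y_{w'}$, so $\{((w)_1, y_w)\}_{w \in \OPT}$ is monotone. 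Therefore $|\OPT| = \sum_i \ell_i$ is a valid lower bound on $\blocklis(y, X, Y)$, proving the inequality.

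There is essentially no obstacle: the result is a direct consequence of unpacking the definitions of ``monotone'' (for $P$ in $[\tau] \times \cI$ with the induced partial order on intervals) and of the block-LIS as an optimization over monotone subsets. The only subtle point is to make sure the partial order on $\cI$ used in defining ``monotone set'' for $P$ really gives $\max(Y_i) < \min(Y_j)$ when $i < j$, which is exactly the condition that allows chaining the local values across different blocks without collisions.
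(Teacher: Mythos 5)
Your proof is correct and takes essentially the same route as the paper: concatenate the local optima $\OPT_i$ and verify that the union is a valid monotone witness for $\blocklis(y,X,Y)$, using that the $X_i$ are in order and that the monotonicity of $P$ forces $\max(Y_i)<\min(Y_j)$ for $i<j$. You spell out the cross-block monotonicity check a bit more explicitly than the paper does, but the argument is the same.
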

\begin{proof}
For each $(i,Y_i) \in P$, let $\OPT_i \subseteq (X_i \times [k]) \, \cap \, y^{-1}(Y_i \cap Y)$ be a set of coordinates of an optimal increasing subsequence for $\blocklis(y,X_i,Y_i \cap Y)$ 
Then $\sum_{(i,Y_i) \in P} |\OPT_i| 
= \sum_{(i,Y_i) \in P} \blocklis(y,X_i,Y_i \cap Y)$. On the other hand, since  
      $P$ is a monotone set, we have $y(p) < y(q)$ for any $p \in \OPT_i, q \in \OPT_j$ with $i < j$. We conclude that 
the sequence $\cup_i \, y(\OPT_i) \subseteq Y$ forms an increasing subsequence of $y(X)$ of length $\sum_{(i,Y_i) \in P} \blocklis(y,X_i,Y_i \cap Y)$ 
and the claim follows.
\end{proof}

Next, we argue three essential invariants when constructing candidate intervals. These guarantees will be used later to lower bound our estimator:

\begin{lemma}[Candidate intervals guarantees]\label{lm::reslis_invariants_tree}
    Fix input $X,Y, y$. With high probability, 
    all the following invariants hold for the $\estreslis$ algorithm:    \begin{enumerate}
	 
    	  \item  For all $i,\Delta$ and all $Y' \in \cY_{i,\Delta}$,
       	\[
    	    	|y_{-\lambda/2}(X_i,Y')| = O\left(\lambda \Delta r\right).
    	\]
    	\item For all $\Delta$,
    	\[
    	\E_{i \in [\tau]}\left[|\cY_{i,\Delta}|\right] = O\left(\frac{|y(X,Y)|}{\lambda \Delta |X|}\right)
    	,
    	\]
		\item There exist candidate intervals $Y' \in \oplus_{i \in [\tau]} \cup_{\Delta \in E_2(\poly(k/\lambda))} \cY_{i,\Delta}$  
		such that 
		    $\{(i,Y'_{i})\}_i$ is a monotone set; and 
    		\[
		\sum_i 
		    \blocklis(y,X_i,Y'_i) = 
		    \Omega \left(\frac{\blocklis(y,X,Y) - \lambda |X|}{\log (k/\lambda)}\right).
    		\]
    \end{enumerate}
\end{lemma}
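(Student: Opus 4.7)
The plan is to verify each of the three invariants separately, exploiting the random sampling structure, the sub-additivity of the block-LIS, and \Cref{lm::int_cluster}. Throughout, I would use that each block is sampled i.i.d.\ with probability $p = \zeta/(\lambda r)$, that \Cref{lm::int_cluster} guarantees $|\cS_i \cap Y'| = \Delta$ for every $Y' \in \cY_{i,\Delta}$, and that $\zeta = \Theta(\log n)$ is large enough to support union bounds over polynomially many events.

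I would dispatch Part~2 first: by \Cref{lm::int_cluster}, $|\cY_{i,\Delta}| = |\cS_i|/\Delta$, and by construction $|\cS_i| = \lfloor|p^i_*|/\zeta\rfloor$ where $p^i_*$ collects the $y$-values in $Y$ sampled from $(X_i \setminus B) \times [k]$. Since $\mathbb{E}[|p^i_*|] \le p\cdot|y(X_i, Y)|$, this gives $\mathbb{E}[|\cS_i|] \le |y(X_i,Y)|/(\lambda r)$, and averaging over $i$ yields Part~2. For Part~1, the structural identity $|Y' \cap p^i_*| = \Theta(\zeta\Delta)$ says the sampled count in $(S\cap X_i) \setminus B$ restricted to $Y'$ is $\Theta(\zeta\Delta)$. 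I would then apply \Cref{ft::stat_max} to the independent random variables $Z_j := \mathbf{1}[j \in S]\cdot|y(j, Y')|$ for $j \in X_i$: the statistical-max concentration, combined with the algorithm's explicit removal of the heaviest $\zeta$ sampled blocks via $B$, pins down the bulk sampled mass between constant multiples of $p \cdot |y_{-\lambda/2}(X_i, Y')|$, which inverts to $|y_{-\lambda/2}(X_i, Y')| = O(\zeta\Delta/p) = O(\lambda\Delta r)$. A union bound over all $(i,\Delta, Y')$ triples (polynomially many) handles the w.h.p.\ quantifier.

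Part~3 is the most intricate. Let $\OPT$ denote (the coordinates of) an optimal $\blocklis(y,X,Y)$ solution, set $\OPT_i := \OPT \cap (X_i \times [k])$, and let $[s_i, \ell_i]$ be the range of $y$-values in $\OPT_i$ (so $\ell_i \le s_{i+1}$ by monotonicity of $\OPT$). For each $i$, take $I_i$ to be the maximal sub-interval of $[s_i, \ell_i]$ whose two endpoints both lie in $\cS_i$; the density of $\cS_i$ (one sample per $\Theta(\zeta/p) = \Theta(\lambda r)$ original values in $X_i \setminus B$) bounds each boundary region $[s_i, \ell_i] \setminus I_i$ by $O(\lambda r)$ values from $y(X_i\setminus B, Y)$, with a further aggregate correction of $O(\zeta k)$ coming from $B$-blocks across all $i$. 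Thus $\sum_i \blocklis(y, X_i, I_i) \ge \blocklis(y, X, Y) - O(\lambda|X|)$. Invoking \Cref{lm::int_cluster} with the sub-additive set function $\psi_i(Y') := \blocklis(y, X_i, Y')$, for each $i$ I obtain $Y'_i \in \bigcup_\Delta \cY_{i,\Delta}$ with $Y'_i \subseteq I_i$ and $\psi_i(Y'_i) \ge \psi_i(I_i)/O(\log|\cS_i|) = \psi_i(I_i)/O(\log(k/\lambda))$. Monotonicity of $\{(i, Y'_i)\}_i$ is then immediate from $Y'_i \subseteq I_i \subseteq [s_i, \ell_i]$ combined with $\ell_i \le s_{i+1}$, and summing produces the claimed lower bound.

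The hardest step will be Part~1: the algorithm's outlier set $B$ is the $\zeta$ sampled blocks heaviest with respect to the \emph{full} range $Y$, whereas $y_{-\lambda/2}(X_i, Y')$ trims the $\lambda r/2$ blocks heaviest with respect to the \emph{sub-range} $Y'$. Reconciling these two notions of heaviness requires care, since a direct invocation of \Cref{ft::stat_max} controls only outliers in the same statistic being concentrated. I expect the argument to go through by observing that globally top-$Y$ blocks dominate locally top-$Y'$ blocks in a compatible sense once the mass in $Y \setminus Y'$ is absorbed into the additive $\lambda|X|$ term, but making this watertight is the main technical hurdle. A secondary challenge in Part~3 is ensuring the $O(\zeta k)$ truncation loss from $B$-blocks is subsumed by the additive $\lambda|X|$ error, which forces the recursion's base case ($|X| < c = \widetilde{\Omega}(\tau/\lambda)$) to handle any regime where this domination would otherwise fail.
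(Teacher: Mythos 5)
The overall decomposition is close to the paper's: the paper also partitions $\OPT$ among the $X_i$'s, anchors an interval at samples from $\cS_i$, bounds the boundary loss by a sampling concentration claim, and invokes \Cref{lm::int_cluster}(3) with the sub-additive $\blocklis$. But your plan has two concrete gaps. First, for Part~1, you propose to apply \Cref{ft::stat_max} to $Z_j := \mathbf{1}[j\in S]\,|y(j,Y')|$ and then ``invert'', but \Cref{ft::stat_max} only gives an \emph{upper} bound on the trimmed sample sum in terms of its expectation; to upper-bound $|y_{-\lambda/2}(X_i,Y')|$ you need a \emph{lower} bound on the sampled mass from the $Y'$-trimmed blocks. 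The paper does this via an intermediate claim (\Cref{cl::sample_inv}): it first shows $T$, the maximum block size in $Y'$ after trimming the top $\lambda r/2$ blocks w.r.t.\ $Y'$, satisfies $T = O(|Y'\cap\cS_i|)$ --- because if $T$ were larger, the $\lambda r/2$ heavy blocks alone would contribute more than the observed $\Theta(\zeta\Delta)$ samples in $Y'$ --- and only then applies Chernoff to the now-bounded sum to convert the sample count to an upper bound on $|y_{-\lambda/2}(X_i,Y')|$. This also explains why the ``two notions of heaviness'' you flag as the main hurdle don't actually need to be reconciled: the $B$-removal (heaviness w.r.t.\ $Y$) enters only as an $O(\zeta)$ additive correction inside the Chernoff estimates, and the actual trimming in the argument is always w.r.t.\ $Y'$.

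Second, your Part~2 only bounds $\E[|\cS_i|]$ over the sampling randomness and then averages over $i$; that gives the bound ``in expectation'', not ``with high probability'' as the lemma demands. The paper instead applies \Cref{ft::stat_max} to the aggregate $|W\setminus(B\times[k])|$ (which is exactly what the removal of $B$ is for) to conclude $|\cS| = O(\tau|y(X,Y)|/(\lambda|X|))$ w.h.p., after which $\E_{i}[|\cY_{i,\Delta}|] = |\cS|/(\Delta\tau)$ is deterministic. Finally, in Part~3 your $O(\zeta k)$ truncation loss over-counts: since $\blocklis$ uses at most one element per block, the boundary contribution is bounded by the \emph{number of blocks} with values in the gap, which \Cref{cl::sample_inv}(1) caps at $O(\lambda r)$ including $B$; no $\zeta k$ term arises and no appeal to the base case $|X|<c$ is needed here. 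You also need the paper's step of discarding the $O(\lambda\tau)$ blocks with $|\cS_i|\geq\Theta(k/\lambda^2)$ so that $\log|\cS_i|$ can honestly be replaced by $\log(k/\lambda)$.
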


%
%
%
%


We analyze the correctness and complexity of our algorithm, proving \Cref{lm::sl_rec_relaxed} assuming \Cref{lm::reslis_invariants_tree}. 
We prove \Cref{lm::reslis_invariants_tree} in \Cref{sec::resinv_proof}. 




	

\begin{proof}[Proof of \Cref{lm::sl_rec_relaxed} using \Cref{lm::reslis_invariants_tree}]

	For all $i,\Delta$ and all $Y' \in \cY_{i,\Delta}$, let $\ell_{i,Y'} \triangleq \blocklis(y,X_i,Y')$. 
	By \Cref{lm::reslis_invariants_tree} 
	(3), we have that for some $\Delta^*$, 
	there exists a monotone set of pairs 
	$P^* \in \oplus_i (\{i\} \times \cY_{i,\Delta^*})$
	such that:
	$$\sum_{(i,Y') \in P^*} \ell_{i,Y'} \geq \Omega\left(\tfrac{\blocklis(y,X,Y) - \lambda |X|}{\log (k/\lambda)\cdot \log(k/\lambda)} \right) 
	> \tfrac{\Omega(\blocklis(y,X,Y)) - \lambda |X|}{\log^2 (k/\lambda)}.$$
	
	
	
	Therefore, we have that, for any $\alpha_s > 1$, there is some $\rho^* \in E_2(1/\lambda)$ for which, using $\kappa_{\rho^*} = r/\rho^*$:
	\begin{equation}
	\sum_{(i,Y') \in P^*} \1\left[\ell_{i,Y'} \geq 
	2\alpha_s \cdot \kappa_{\rho^*} \right] \geq \tfrac{1}{2\alpha_s \kappa_{\rho^*}} \cdot \tfrac{\Omega(\blocklis(y,X,Y)) - \lambda|X|}{\log^3 (k/\lambda)}
		\end{equation}
		
    Let us first argue correctness. Define 
     $\kappa^* \triangleq \kappa_{\rho^*} = \tfrac{r}{\rho^*}$ and $\cY_i^* \triangleq \cY_{i,\Delta^*}$. 
%
Recall the following 
 guarantees of $\abl$ and $\agl$ which hold for each pair $(i,Y') \in P^*$:
    \begin{enumerate}
        \item Let $\widehat{\ell_{i,Y'}}$ be the output of $\abl(y(X_i),Y',\tfrac{1}{\rho^*},\tau)$. 
        Then $\widehat{\ell_{i,Y'}}$ is a 
        $\left(\alpha_s,\tfrac{r}{\rho^*}\right)$
        -approximation of $\ell_{i,Y'}$, where $\alpha_s \triangleq a_s\left(r, \tau, \tfrac{r/\rho^*}{|y_{-\lambda/2}(X_i,Y')|}\right)$. 
        
        \item Let $\Lambda^* \triangleq \Lambda_{\Delta^*,\rho^*}$. The output of $\agl(\Lambda^*, \lambda \rho^*)$ is a $\left(\alpha_g 
        , \lambda \rho^* \tau \right)$-approximation     
        of $\sum_{(i,Y') \in P^*} \1\left[\widehat{\ell^*_{i,Y'}} \geq \kappa^*\right]$, where $\alpha_g \triangleq a_g\left(\tau,\tfrac{\lambda \rho^* n_g}{ m_g}\right)$, where $n_g = \tau$ and $m_g = \sum_i\left[|\cY_{i}^*|\right]$, are, respectively,  the number of blocks and non-null elements in the \genlis instance.
    \end{enumerate}
     We use the approximation guarantees from above to obtain a lower bound on $ L^* \triangleq L_{\Delta^*,\rho^*}$. Assuming $c \geq c_s(\cdot) \cdot c_g(\cdot)$, 
     we invoke \Cref{lm::tree_decomposition} to get that the quantity $L^*$ is a $(\alpha_g, 
  \lambda \rho^* \tau)$-approximation for a \genlis instance, where each block $i$ is defined by all possible candidate intervals $\cY^*_i$, and each candidate interval $Y' \in \cY^*_i$ is genuine iff a    
     $\left(\alpha_s,\tfrac{r}{\rho^*}\right)$-local 
   approximation $\widehat{\ell_{i,Y'}}$ passes the $\kappa_{\rho^*}$ threshold. 
    
Note that the approximations are:
\begin{align*}
      	\alpha_s &= a_s\left(r, \tau, \tfrac{r}{\rho^*|y_{-\lambda/2}(X_i,Y')|}\right) 
      	\\
      	\alpha_g &= a_g\left(\tau,\tfrac{\lambda \rho^*}{ \E_i\left[|\cY_{i}^*|\right]}\right) 
      	.
      \end{align*} 
    We let $\lambda_s=\tfrac{|y(X,Y)|}{|X|}\cdot \tfrac{r}{\rho^*|y_{-\lambda/2}(X_i,Y')|}$, and $\lambda_g=\tfrac{m_g}{n_g}\cdot \tfrac{\lambda \rho^*}{ \E_i\left[|\cY_{i}^*|\right]}=\lambda \rho^*$, in which case:
    \begin{equation}
    	     \alpha_s \cdot \alpha_g = a_s\left(r,\tau,\tfrac{\lambda_s|X|}{|y(X,Y)|}\right) \cdot a_g\left(\tau,\tfrac{\lambda_g n_g}{m_g}\right).
    \end{equation}

 We need to bound $\lambda_s\lambda_g$. For this, we use 
         \Cref{lm::reslis_invariants_tree} 
         (1) and (2), to obtain for each $Y' \in \cY^*_{i}$, for all $i$:
     \begin{equation}
		|y_{-\lambda/2}(X_i) \cap Y'| = O\left(\lambda \Delta r\right); 
     \end{equation}     
     \begin{equation}\label{eq::avg_k}
		\tfrac{m_g}{n_g}=\E_i\left[|\cY_{i}^*|\right] = O\left(\tfrac{|y(X,Y)|}{\lambda \Delta |X|}\right).  
     \end{equation}
    
We obtain that 
$$
\lambda_s\lambda_g\cdot n_g/m_g=\Omega\left(\tfrac{|y(X,Y)|}{|X|}\cdot \tfrac{r}{\rho^*\lambda\Delta r}\cdot \lambda\rho^*\cdot \tfrac{\lambda \Delta |X|}{|y(X,Y)|}\right)=\Omega(\lambda),
$$
as required in the theorem statement.
	
     We can finally derive the following overall bound on $\widehat{L}$, and in particular, show it is a $(\alpha,\lambda |X|)$-approximation, for some $\alpha = O\left(\alpha_s \cdot \alpha_g \cdot \polylog(k/\lambda)\right)$. For the lower bound, recall $\widehat{L} = \max_{\rho,\Delta} L_{\rho,\Delta}\geq L^*$, 
     and therefore we have that:
     
    \begin{align*}
        \widehat{L} &\geq L^* \\
        &\geq \left(\tfrac{1}{\alpha_g}\sum_{(i,Y') \in P^*} \1\left[\widehat{\ell_{i,Y'}} \geq \kappa^* \right] - \lambda \rho^* \tau  \right) \cdot 0.5 \kappa^* 
        \\
        &\geq \sum_{(i,Y') \in P^*} \tfrac{1}{\alpha_g}\1\left[\ell_{i,Y'} \geq \alpha_s \cdot \left(\kappa^* + \tfrac{r}{\rho^*}\right)\right]  \cdot 0.5 \kappa^* - \lambda \rho^* \tau\cdot 0.5 \kappa^*\\
        &\geq \tfrac{0.5 \kappa^*}{\alpha_g} \cdot \sum_{(i,Y') \in P^*} \1\left[\ell_{i,Y'} \geq \alpha_s \cdot \left(\kappa^* + \tfrac{r}{\rho^*}\right)\right] - 0.5  \lambda |X| \\
        &\geq \tfrac{0.5 \kappa^*}{\alpha_g} \cdot \sum_{(i,Y') \in P^*} \left(\1\left[\ell_{i,Y'} \geq \alpha_s \cdot \left(\kappa^* + \kappa^*\right)\right]\right) - 0.5 \lambda |X| \\
        &\geq \tfrac{0.5 \kappa^*}{\alpha_g} \cdot \tfrac{1}{2\alpha_s \kappa_{\rho^*}} \cdot \tfrac{\Omega(\blocklis(y,X,Y)) - \lambda|X|}{\log^3 (k/\lambda)}
 - 0.5 \lambda |X| \\
    &> \Omega\left(\tfrac{\blocklis(y,X,Y)}{\alpha_s \alpha_g \log^3 (k/\lambda)}\right) - \lambda |X|. 
    \end{align*}
    
%

    For the upper bound, we use \Cref{lm::blis_upper} to claim that for any set of parameters, and any choice of monotone pairs, $L_{\rho,\Delta}$ is upper bounded by $\blocklis(y,X,Y)$, and hence any approximation of such quantity is guaranteed to be bounded by $\blocklis(y,X,Y)$ as well.

    \paragraph{Proving the assumed bound for $c$.}
For correctness, it is left to show the lower bound for $c$ suffices to obtain $c \geq c_{sampling}+
c_g(\cdot) \cdot c_s(\cdot)$ for every set of potential parameters from above, and where $c_{sampling}$ is the complexity for sampling the set $S_*$. Note that $c_{sampling}=\zeta s=\zeta \tau/\lambda$. As regards $c_s, c_g$, in order to obtain the desired approximations $\alpha_g,\alpha_s$, we need complexities $c_g=c_g(\tau,\lambda_g,m_g/n_g)$ and $c_s=c_s(r,\tau,1/\rho^*)$ respectively. \aanote{need to make sure of constants here too} Now, noting $\tfrac{m_g}{n_g} \stackrel{Eq.~(\ref{eq::avg_k})}{=} O\left(\tfrac{|y(X,Y)|}{\lambda \Delta |X|}\right)$, we have 
$$
\tfrac{1}{\rho^*} = \tfrac{\lambda \cdot \lambda_s \cdot \Delta \cdot |X|}{|y(X,Y)|} = \Omega\left(\tfrac{\lambda_s n_g}{m_g}\right),
$$

as needed.
         This concludes the correctness proof. 
               
    \paragraph{Runtime Analysis.}
    The algorithm consists of the following procedures:
    
    First, {\bf sub-sampling} of blocks, with expected runtime $\tO(\tfrac{\tau}{\lambda}) \cdot \tfrac{|y(X,Y)|}{|X|}$ by \Cref{cr::sampling_ds}.
    
    Second, {\bf computation of candidate intervals}, takes time $\tO(|S|) = \tO(\tfrac{\tau}{\lambda}) \cdot \tfrac{|y(X,Y)|}{|X|}$.
    
    Last, and most important, are the recursive calls invoked by the {\bf decomposition procedure}. Here, for each set of parameters $\Delta, t$, the expected runtime by \Cref{lm::tree_decomposition} is 
    \[
    \tO\left(\tau + t_{GL} + \tfrac{1}{\delta \tau} \cdot \norm{t_f}_1\cdot \log^{O(\log_\beta(\tau))}(\beta) \right),
    \]
    
    where $\delta = 
    \tfrac{1}{c_g(\cdot)}$. 
    The Lemma assumption provides the bound 
    $$
    t_{GL} = \tO(m_g + n_g) = \tO(\tau + \sum_i |\cY_{i,\Delta}|) = \tO(\tau + |S|) = \tO(\tfrac{\tau}{\lambda}) \cdot \tfrac{|y(X,Y)|}{|X|},
    $$
    
    and hence the main quantity to analyze is $\tfrac{1}{\delta \tau} \cdot \norm{t_f}_1\cdot \log^{O(\log_\beta(\tau))}(\beta)$.
    
    Notice that computing each $f_i$ requires 
    $|\cY_{i,\Delta}|$ invocations of $\abl$, where for each $Y' \in \cY_{i,\Delta}$, we pay (expected) time $t \cdot c_s(\cdot) \cdot \tfrac{|y(X_i,Y')|}{|X_i|}$. Furthermore, note that candidate intervals in $\cY_{i,\Delta}$ are disjoint and contained in $Y$, and hence $\sum_{Y' \in \cY_{i,\Delta}} |y(X_i,Y')| \leq |y(X_i,Y)|$. Therefore:
    

    $$\E_T\left[\norm{t_f}_1\right]
    \leq 
    \sum_i \sum_{Y' \in \cY_{i,\Delta}} t \cdot c_s \cdot \tfrac{|y(X_i,Y')|}{|X_i|}
    \leq
    \sum_i t \cdot c_s \cdot \tfrac{|y(X_i,Y)|}{|X_i|}
    = t \cdot c_s \cdot \tfrac{|y(X,Y)| \tau}{|X|}. 
    $$
    
    Now, since $c \geq c_s \cdot c_g$, then, 
    
    $$\E_T\left[\tfrac{1}{\delta \tau} \cdot \norm{t_f}_1\right]
    =
    \tfrac{c_g}{\tau} \cdot \E[ \norm{t_f}_1]
    \leq
    t \cdot c_s \cdot c_g \cdot \tfrac{|y(X,Y)|}{|X|}
        \leq
    t \cdot c \cdot \tfrac{|y(X,Y)|}{|X|}.
    $$

    Counting the time taken by all procedures above, the overall expected runtime 
     is:
    $$\tO\left(\tfrac{\tau}{\lambda}
    + t \cdot c \cdot \log^{O(\log_\beta(\tau))}(\beta)\right) \cdot \tfrac{|y(X,Y)|}{|X|} = t \cdot c \cdot \log^{O(\log_\beta(\tau))}(\beta) \cdot \tfrac{|y(X,Y)|}{|X|},
    $$
    
    as needed.
    
%
    

\end{proof}

\subsection{Dyadic Construction for Candidate Intervals: Proof of \Cref{lm::int_cluster}}\label{sec::int_dyadic}

We now describe the algorithm for generating candidate intervals, proving \Cref{lm::int_cluster}. First, we show how to generate a small ``covering'' set of candidate intervals over some fixed set $[m]$.

\begin{claim}\label{cl::int_cluster}
	 For any $m \in \N$, there exists 
	 a clustering of $[m]$ into intervals $\{\cJ_{\Delta,1},\cJ_{\Delta,2},\ldots,\cJ_{\Delta,m/\Delta}\}_{\Delta \in E_2(m)}$ where each $\cJ_{\Delta,j} \subseteq \cI \cap 2^{[m]}$ 
	 and further:
	\begin{enumerate}
		\item $|\cJ_{\Delta,j}| = \Delta$ for all $\Delta,j$. 
		\item For any interval $I \in [m] \cap \cI$, there exists a set of intervals $\mathfrak{C}_I \subseteq \{\cJ_{\Delta,j}\}_{\Delta,j}$ of size $|\mathfrak{C}_I| = O(\log m)$, such that $I$ is precisely covered by $\mathfrak{C}_I$; i.e., $I = \cup_{\cJ \in \mathfrak{C}_I} \cJ$.
	\end{enumerate}
\end{claim}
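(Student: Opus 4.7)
\textbf{Proof plan for \Cref{cl::int_cluster}.}

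I will construct the $\cJ_{\Delta,j}$'s as the standard aligned dyadic intervals and then invoke the classical segment-tree covering argument for property (2). Concretely, for each $\Delta = 2^i \in E_2(m)$ and each $j \in [m/\Delta]$, set
\[
\cJ_{\Delta,j} \;:=\; [(j-1)\Delta + 1,\; j\Delta].
\]
(Assume $m$ is a power of two; otherwise round up to the next power of two and restrict the constructed intervals to $[m]$, which changes constants only.) Property (1) holds by construction, since $|\cJ_{\Delta,j}| = \Delta$.

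For property (2), fix $I = [a,b] \subseteq [m]$. I will cover $I$ by a greedy left-to-right procedure: maintain a pointer $a'$ initialized to $a$, and while $a' \le b$, output the interval $[a', a' + \Delta^* - 1]$, where $\Delta^*$ is the largest power of two satisfying both (i) $\Delta^* \mid (a'-1)$ (alignment), and (ii) $a' + \Delta^* - 1 \le b$ (fit). Then update $a' \gets a' + \Delta^*$ and repeat. By construction each output interval equals some $\cJ_{\Delta^*, (a'-1)/\Delta^* + 1}$, the outputs are pairwise disjoint and aligned, and together they tile $[a,b]$.

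It remains to bound the number of outputs by $O(\log m)$. I will split the run into a \emph{growing} phase (in which the chosen $\Delta^*$ is non-decreasing over successive steps) and a \emph{shrinking} phase (in which it is strictly decreasing). In the growing phase, once an interval of size $\Delta^*$ is emitted, the new pointer $a' \gets a' + \Delta^*$ has $(a'-1)$ divisible by $2\Delta^*$, so the alignment constraint doubles the permitted size; the only reason the next step does not pick $\ge 2\Delta^*$ is that the fit constraint (ii) fails, at which point the shrinking phase begins. In the shrinking phase, failure of (ii) for $2\Delta^*$ forces the next emitted size to be $\le \Delta^*$, and a short calculation using alignment shows it is at most $\Delta^*/2$, i.e.\ the size strictly halves at each step. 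Both phases therefore contain at most $\log_2 m + 1$ steps, giving $|\mathfrak{C}_I| = O(\log m)$ as required.

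The only mildly nontrivial step is the strict-halving argument for the shrinking phase; an equivalent and cleaner route, if the direct analysis feels awkward, is to observe that the collection $\{\cJ_{\Delta,j}\}_{\Delta,j}$ is exactly the node set of the canonical balanced segment tree over $[1,m]$, and then appeal to the textbook fact that the canonical cover of any interval on such a segment tree consists of $O(\log m)$ nodes (at most two per level). Either route completes the proof.
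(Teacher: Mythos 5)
Your proposal is correct and matches the paper's approach: both use the aligned dyadic intervals (equivalently, the node intervals of a balanced binary segment tree over $[m]$) and both establish the $O(\log m)$ cover property. The paper proves the cover bound by walking up from the two leaf endpoints $v_a,v_b$ and collecting siblings contained in $I$, whereas you give a greedy left-to-right tiling argument with growing/shrinking phases — a different but standard proof of the same segment-tree cover fact, which you also note explicitly as an alternative route.
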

\newcommand{\rightnode}[0]{\textrm{right}}
\begin{proof}
	Consider a binary tree $T_m$ with leaves indexed by integers in $[m]$ and internal nodes indexed by the union of its children. The set of clusters corresponding to the indexing of all nodes in each level of the tree provides the above guarantees. 
	For (1), 
	we have a node at each height $h \in [\log_2 m]$ of length $\Delta=2^h$ hence that bound is immediate.
	
	For (2), we reconstruct $I$ as follows: let $a \triangleq \min\{I\}$ and $b \triangleq \max\{I\}$. We start with $v_a$ 
	(the leaf with index $a$) and add it to $\mathfrak{C}_I$. Now we go up the tree level by level, each time adding the {\em sibling} of the current node if and only if 
	it is contained in $I$. Similarly, we do the same process with $v_b$. 
	We note that this process generates $\mathfrak{C}_I \subseteq \{\cJ_{\Delta, j}\}_{\Delta,j}$ 
	of size $O(\log m)$. It is left to show that $I = \cup_{\cJ \in \mathfrak{C}_I} \cJ$. 
	
	For this task, consider any $c \in [m]$. One one hand, if $c \notin I$, no ancestor of $c$ will be added to $\mathfrak{C}_I$ by construction. On the other hand, for $c \in I$, 
let $v$ be the ``lowest'' node such that 
	either $a,c$ or $b,c$ are common descendants of $v$. Assume w.l.o.g. that $a,c$ are the common descendants (otherwise consider the symmetric argument). Now, if $c = a$, then we add $v_a = v_c$ to $\mathfrak{C}_I$. 
	Otherwise, $c \in (a,b)$, then $c \in \rightnode(v) \subseteq I$, 
	and hence $\rightnode(v)$ will be added to $\mathfrak{C}_I$. We conclude that $c \in I$ 
	if and only if $c \in \cup_{\cJ \in \mathfrak{C}_I} \cJ$ as needed.
	
	
\end{proof}

Second, We prove \Cref{lm::int_cluster}, by simple ordering of the set $\cS$, and reducing any interval over $[m]$ to an interval over $\cS$. 

\begin{proof}[Proof of \Cref{lm::int_cluster}]
	Index all values by rank such that 
	$\cS_1 \leq \cS_2 \leq \ldots \leq \cS_m$. 
	We now construct a clustering of $[m]$ as per \Cref{cl::int_cluster}, and output $\mathfrak{C}^\cS_\Delta \triangleq \{[\cS_a,\cS_b] \mid [a,b] \in \{\cJ_{\Delta,i}\}_i\}$. Then all the above guarantees are immediate (the last guarantee is obtained by sub-additivity).
\end{proof}

\subsection{Analysis of \reslis invariants: Proof of \Cref{lm::reslis_invariants_tree}}\label{sec::resinv_proof}



To prove \Cref{lm::reslis_invariants_tree}, we state the following claim:

\begin{claim}\label{cl::sample_inv}
	The algorithm \textsc{SampleAndPartition} generates $\{(X_i, \cS_i)\}_i$ pairs which, with high probability, satisfy the following for all $i \in [\tau]$ and $Y' \in \cI$: 
	\begin{enumerate}
		\item $\sum_{w \in X_i} \1[y(w) \cap Y' \neq \emptyset] \leq \lambda r \cdot (|Y' \cap \cS_i| + 2)$
		\item $|y_{-\lambda/2}(X_i,Y')|  \leq 4\lambda r \cdot (|Y' \cap \cS_i| +2)$.
	\end{enumerate}
\end{claim}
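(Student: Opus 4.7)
The plan is to establish both invariants via a combination of a deterministic structural fact about the $\cS_i$ construction (namely, that $\cS_i$ contains every $\zeta$-th sampled value of $Y$ in sorted order from non-$B$ blocks of $X_i$) together with Chernoff-type concentration of block- and element-sampling counts, followed by a union bound over the (finitely many) equivalence classes of intervals $Y'$.

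First I would set $p := \zeta \cdot s/|X| = \zeta/(\lambda r)$ and, for a fixed interval $Y'$, introduce the block contributions $m_w := |y(w) \cap Y'|$, the block count $N_i(Y') := \sum_{w \in X_i} \mathbb{1}[m_w > 0]$, the sampled-element count from non-$B$ blocks $W_i^B(Y') := \sum_{w \in (S \cap X_i) \setminus B} m_w$, and the sampled-block count $\tilde N_i(Y') := |\{w \in S \cap X_i : m_w > 0\}|$. The key deterministic bound is $W_i^B(Y') \leq (|\cS_i \cap Y'| + 1)\zeta$: since the $p^i_j$ are sorted and $Y'$ is an interval, the set $\{j : p^i_j \in Y'\}$ is a contiguous range, and $\cS_i$ takes every $\zeta$-th value, so the size of that range exceeds $|\cS_i \cap Y'|\cdot \zeta$ by at most $\zeta$.

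For invariant (1), I would apply the multiplicative Chernoff bound to $\tilde N_i(Y') \sim \mathrm{Bin}(N_i(Y'), p)$: when $N_i(Y')\cdot p \geq \Omega(\zeta)$, $\tilde N_i(Y') \geq N_i(Y')\cdot p/2$ with probability $1 - \exp(-\Omega(\zeta))$. Combined with the upper bound $\tilde N_i(Y') \leq W_i^B(Y') + |B \cap X_i| \leq (|\cS_i \cap Y'|+1)\zeta + \zeta$, rearrangement yields $N_i(Y') = O(\lambda r \cdot (|\cS_i \cap Y'| + 2))$. In the complementary regime $N_i(Y')\cdot p < c\zeta$, the bound $N_i(Y') = O(\lambda r)$ is absorbed into the $+2$ slack of the target inequality.

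For invariant (2), I would run a case analysis on $M^* := \max_{w \in X_i \setminus H} m_w$, where $H$ is the $\lambda r/2$ heaviest blocks of $X_i$ ordered by $m_w$. In the small-$M^*$ regime ($M^* \leq 4\lambda\,(|\cS_i \cap Y'|+2)$), the trivial bound $|y_{-\lambda/2}(X_i, Y')| \leq r\cdot M^*$ already delivers invariant (2). In the large-$M^*$ regime, by definition of $H$ there are at least $\lambda r/2 + 1$ blocks of $X_i$ with $m_w > 4\lambda\,(|\cS_i \cap Y'|+2)$; Chernoff on the Binomial count of sampled such blocks (expected size $\geq \zeta/2$) gives at least $\zeta/4$ sampled heavy blocks w.h.p., while the deterministic bound $W_i^B(Y') \leq (|\cS_i \cap Y'|+1)\zeta$ forces all but $O(1)$ of them to lie in $B$; this contradicts $|B|=\zeta$ once the constants are chosen appropriately, completing the bound. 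The contribution of $B$-blocks and of sampled blocks in $H$ (whose expected count is $|H|\,p = \zeta/2$, hence $\leq \zeta$ w.h.p.) is absorbed into the $+2$ additive slack.

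Finally, I would union-bound over all $i \in [\tau]$ and over the $O((rK)^2)$ equivalence classes of intervals (parametrized by endpoint positions among values of $y(X_i)$), since only the intersection $y(X_i)\cap Y'$ matters for both quantities being bounded; choosing $\zeta = \Theta(\log n)$ as in the algorithm makes the cumulative failure probability $1/\mathrm{poly}(n)$. I expect the main obstacle to be the large-$M^*$ regime of invariant (2), where a direct Chernoff on $\sum_{w \notin H} m_w$ has a poor bounded-summand parameter; the fix is to use the deterministic $W_i^B$ bound to indirectly constrain the number of heavy blocks, together with the role of $B$ in removing the top $\zeta$ sampled blocks, rather than applying concentration to the element-sum directly.
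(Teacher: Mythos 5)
Your part (1) argument is essentially the paper's: exploit the deterministic relation between $|p^i_* \cap Y'|$ and $|\cS_i \cap Y'|$ (every $\zeta$-th sampled value enters $\cS_i$), apply a Chernoff bound to the number of sampled blocks meeting $Y'$ after subtracting $\zeta$ to account for the $B$-blocks, and union-bound over the $\poly(n)$ equivalence classes of $Y'$.

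Part (2), however, has a genuine gap, located exactly where you flagged the ``main obstacle.'' Write $\gamma := |\cS_i \cap Y'|+2$. Your case threshold $M^* \leq 4\lambda\gamma$ cannot be established by the sampling; the provable bound on the trimmed per-block maximum is $M^* = O(\gamma)$, which is weaker by a factor $1/\lambda$. Correspondingly, your large-$M^*$ contradiction does not close: with $\Omega(\zeta)$ sampled heavy blocks, each contributing more than $4\lambda\gamma$ to $W_i^B$ if it survives outside $B$, the constraint $W_i^B \leq \gamma\zeta$ permits up to $\zeta/(4\lambda)$ of them to lie outside $B$. For $\lambda \ll 1$ this vastly exceeds $\zeta$, so nothing is ``forced into $B$'' and there is no conflict with $|B| = \zeta$ --- and none should arise, since $M^* \approx \gamma$ is perfectly consistent with the algorithm's output. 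With the achievable bound $M^* = O(\gamma)$, the trivial estimate $|y_{-\lambda/2}(X_i,Y')| \leq r\, M^*$ gives only $O(r\gamma)$, missing the required $\lambda$ factor. The missing step is precisely the second Chernoff bound you chose to avoid: once $M^*=O(\gamma)$ is in hand, the element sum over sampled non-$B$ trimmed blocks has per-block summands bounded by $M^*$, so Chernoff gives $|p^i_* \cap Y'| \geq \tfrac{\zeta}{2\lambda r}\,|y_{-\lambda/2}(X_i,Y')| - O(\zeta M^*)$; combining this with $|p^i_* \cap Y'| \leq \gamma\zeta$ and $M^* = O(\gamma)$ yields $|y_{-\lambda/2}(X_i,Y')| = O(\lambda r\gamma)$. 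Your stated reason for avoiding this Chernoff --- a poor bounded-summand parameter --- is exactly the issue that the preliminary bound $M^*=O(\gamma)$ repairs. Both steps (bound the per-block maximum, then apply Chernoff with summands capped at that maximum) are required, and the single case split you propose cannot replace them.
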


We now prove \Cref{lm::reslis_invariants_tree} using \Cref{cl::sample_inv}.

\begin{proof}[Proof of~\Cref{lm::reslis_invariants_tree}]
	Fix $i, \Delta$. By \Cref{lm::int_cluster} (1), we have for all $Y' \in \cY_{i,\Delta}$ that $|Y' \cap \cS_i| = \Delta + 1$, and we use \Cref{cl::sample_inv} (2), to obtain the first invariant.
	
	For the second invariant, we bound $\E_{i \in [\tau]}\left[|\cY_{i,\Delta}|\right]$ by arguing about 
	the random sets $W,B,\cS_*$ computed in the \textsc{SampleAndPartition} algorithm. Let $\cS = \cup_i \cS_i$, we would like to show high probability bound on $|\cS|$. Note that $|\cS|\cdot \zeta =\Theta(|W \setminus (B \times [k])|)$. Now, each block is 
	subsampled independently with probability 
	$\left(\tfrac{\zeta \, \tau}{\lambda |X|}\right)$. Then, $\E[|W|] = 
	\tfrac{\zeta \, \tau \, |y(X,Y)|}{\lambda |X|}$. For our task, we invoke \Cref{ft::stat_max}, 
 which implies $|S| = O\left(\tfrac{\tau |y(X,Y)|}{\lambda |X|}\right)$ w.h.p. 
	We also invoke \Cref{lm::int_cluster} (2) to obtain $|\cY_{i,\Delta}| = \tfrac{|S_i|}{\Delta}$, which implies 
	\[\E_i\left[|\cY_{i,\Delta}|\right] = \E_i\left[\tfrac{|S_i|}{\Delta}\right] = \tfrac{|S|}{\Delta \tau} = O\left(\tfrac{|y(X,Y)|}{\lambda \Delta |X|}\right),
	\]
	 as needed. 

	For the third invariant, let $\OPT$ be (the indices of) an optimal increasing subsequence of $\blocklis(y,X,Y)$. Let $M_i \triangleq \max\{y(X_i \cap \OPT)\}$ be the maximal value of $y(\OPT)$ in $X_i$ (by convention, $M_i = \bot$ if $\OPT \cap X_i = \emptyset$). Let $U_i \triangleq (M_{j},M_i]$ for the largest $j<i$ such that $M_j$ is defined (similarly, let $U_i = \bot$ if $\OPT \cap X_i = \emptyset$). Last, define 
	\[
	R_i \triangleq \argmax_{[j,k] \subseteq U_i: j,k \in \cS_i}(k-j),
	\] i.e., the largest interval in $U_i$ using the values of $\cS_i$ only. We have the following bound on $\blocklis(y,X,Y)$:
	\[
	\blocklis(y,X,Y) = \sum_i \blocklis(y,X_i,U_i) \\
	\leq \sum_i \left(\blocklis(y,X_i, R_i) + \blocklis(y,X_i, U_i \setminus R_i) \right).
	\]
	Now, notice that by definition of $R_i$, $U_i \setminus R_i \cap \cS_i = \emptyset$. Notice also that $U_i \setminus R_i$ 
	consists of at most 2 intervals. We invoke \Cref{cl::sample_inv} (1) on each one of those intervals to obtain that $\sum_{w \in X_i} \1[y(w) \cap U_i \setminus R_i \neq \emptyset] \leq 4 \lambda r$, implying that $\blocklis(y,X_i, U_i \setminus R_i) \leq 4 \lambda r$ as well. We therefore have:
	\[
	\sum_i \blocklis(y,X_i, R_i) \geq \blocklis(y,X,Y) - 4 \lambda r \cdot \tau = \blocklis(y,X,Y) - 4 \lambda |X|
	\]

It remains to bound $\sum_i \blocklis(y,X_i, R_i)$. Here, we use \Cref{lm::int_cluster} (3), with $\psi(J) = \blocklis(y,X_i,J)$ and $I = R_i$, to obtain $\blocklis(y,X_i, R_i) \leq O(\log |\cS_i|) \cdot \blocklis(y,X_i, Y_i)$, for some $R_i \supseteq Y_i \in \cY_{i,*}$. Last, to avoid ``too large $\Delta$'', we simply do not count 
intervals 
with many sampled pivots. In particular, note that since $|\cS|\le 2k\tau/\lambda$ whp (by the immediate Chernoff bound), we have that at most $\lambda \tau$ blocks satisfy $|\cS_i|\ge 2k/\lambda^2$. 
Overall:
\begin{align*}
    \sum_{i: |S_i| \leq k/\lambda^2} \blocklis(y,X_i, Y_i) 
    &\geq \sum_{i: |\cS_i| \leq
2k/\lambda^2} \tfrac{\blocklis(y,X_i, R_i)}{O(\log |\cS_i|)}\\
    &\geq 
\tfrac{1}{O(\log (k/\lambda))}\left(\sum_{i \in [\tau]}\blocklis(y,X_i, R_i)- \lambda |X|\right) \\ 
    &=
\Omega \left(\tfrac{\blocklis(y,X,Y) - \lambda |X|}{\log (k/\lambda 
)}\right).
\end{align*}

We also note that since $\{(i,R_i)\}_i$ are monotone, so are  $Y_i$. This concludes the proof.
\end{proof}

It remains to prove \Cref{cl::sample_inv}.

\begin{proof}[Proof of~\Cref{cl::sample_inv}]
	Fix $Y' \in \cI$. Assume w.l.o.g $Y' \in \left\{(j,k) \mid \{j,k\} \in y(X_i) \cup \{-\infty, \infty\}\right\}$ (otherwise, consider the smallest interval in this set, which contains $Y'$). Notice that since there are at most $\poly(n)$ such intervals, it suffices to prove this holds w.h.p. for a single $Y'$ and the claim follows by the union bound.

	To prove the claim, we also define $\gamma \triangleq |Y' \cap \cS_i| + 2$ and use $\gamma$ to get our bounds.
	
	\paragraph{Concentration Analysis for $\gamma$.} First, we show concentration 
	for $\gamma$ using the quantities we care to bound. 
	We observe that $\gamma  - 2 \in \tfrac{|p^i_* \cap Y'|}{\zeta}+[0,1]$, and hence $\gamma \in \tfrac{|p^i_* \cap Y'|}{\zeta} + [2,3]$. Now, the quantity $|p^i_* \cap Y'|$ can be lower bounded by 
	$\Bin\left(\sum_{w \in X_i} \1[y(w) \cap Y' \neq \emptyset],\zeta\tfrac{\tau /\lambda}{|X|}\right) - \zeta$, which w.h.p. is at least:
\[
\tfrac{\tau \zeta}{\lambda} \cdot \tfrac{\sum_{w \in X_i} \1[y(w) \cap Y' \neq \emptyset]}{|X|} - 2\zeta = \zeta \cdot \left(\tfrac{\sum_{w \in X_i} \1[y(w) \cap Y' \neq \emptyset]}{\lambda r} - 2 \right).
\]
Hence $\gamma \geq \sum_{w \in X_i} \1[y(w) \cap Y' \neq \emptyset] \, / \, (\lambda r)$, 
and since $|Y' \cap \cS_i| + 2 = \gamma$, we obtain part (1) of the claim. 

To prove (2), we show that our sampling procedure guarantees that most blocks $w \in X_i$ will have bounded number of values 
in $Y'$.
	
	\paragraph{All blocks in $y(X_i)_{-\lambda/2,Y'}$ contain $O(\gamma)$ elements in $Y'$.}
	Let $T \triangleq \max_{w \in y(X_i)_{-\lambda/2,Y'}} \{|y(w) \cap Y'|\}$. We argue $T=O(\gamma)$. By construction, we have $\tfrac{\lambda r}{2}$ blocks in $y(X_i)$ with at least $T$ 
	integers in $Y'$. 
	So, the quantity $|p^i_* \cap Y'|$ is lower bounded by $T \cdot \Bin\left(\tfrac{\lambda r}{2},\zeta\tfrac{\tau/\lambda}{|X|}\right)$.
	Hence, by Chernoff bound, we have 
	\[
	\zeta \gamma \geq |p^i_* \cap Y'| \geq T \cdot \left(\tfrac{\zeta}{3} - O(\log n)\right) > \tfrac{\zeta T}{4}
	\]
	w.h.p. (for $\zeta$ with a high enough constant).
	This implies that $T \le 4\gamma$. 

Finally, observe that the quantity $|p^i_* \cap Y'|$ can also be lower bounded by viewing it as sampling blocks from 
$y(X_i)_{-\lambda/2,Y'}$, whose max-size (in $Y'$) is $T$. Hence, we have, by the Chernoff bound, w.h.p.:
\[
|p^i_* \cap Y'| \geq \zeta\tfrac{\tau/\lambda}{2|X|} \cdot |y_{-\lambda/2}(X_i,Y')| - \tfrac{1}{4} \, \zeta\, T.
\]
This implies that $ 
|y_{-\lambda/2}(X_i,Y')| / 2(\lambda r)\le \gamma+\gamma$,  
which further implies part (2) as well.

%
%
%
\end{proof}

\section{\genlis and \reslis Extensions: 
Proof of \Cref{lm::genlis_int_reduction,lm::reslis_int_reduction}}\label{sec:genlis_ext}


In this section we prove the reductions regarding the \genlis and \reslis extensions, namely \Cref{lm::genlis_int_reduction,lm::reslis_int_reduction}.

\subsection{\reslis extension}\label{sec::block_lis_extension}

For the \reslis problem, all we need to show is that one can ``ignore'' not too many exceptionally large blocks. The algorithm is the natural one:
\begin{enumerate}
	\item Using \Cref{cr::sampling_ds} and $T$, 
    sample blocks $S \subseteq X$
    with i.i.d. 
    probability $\tfrac{\zeta}{\lambda |X|}$, and let 
    $W \gets \{(i,j) \in S \times [k] \mid y_{i,j} \in Y\}$.
	\item Compute $\Upsilon \gets$ the $(1 - 0.6 \lambda)$-quantile 
	of $\{|\{(i,j) \in W\}|\}_{i \in S}$.
    \item Run $\cA$ with $\lambda' = 0.3\lambda$. Whenever $\cA$ tries to access a block $i$ using $T$, we check first if $|y_{i,*} \cap Y| > \Upsilon$, and if so, we treat the block as ``empty'' (say, by override).
   	\end{enumerate}

\begin{proof}[Proof of \Cref{lm::reslis_int_reduction}]
	First, we argue that each block in $y(X)_{-\lambda/2,Y}$ has at least $\Upsilon$ elements in $Y$. Indeed, that holds w.h.p.
	by the Chernoff bound. 
	Therefore, the instance we approximate is contained in $y_{-\lambda/2}(X,Y)$. It remains to show the additive error bounds. Here again, by the Chernoff bound, we ``omit'' at most $0.7\lambda n$ blocks, and hence the total additive error is at most $0.3\lambda n + 0.7\lambda n = \lambda n$ w.h.p. This concludes the proof.
	\ns{expand a bit...}
\end{proof}

\subsection{\genlis extensions}\label{sec::genlis_extension}

For the \genlis problem, we need to show two extensions:

\subsubsection{Sparse instances}
When the instance $g$ consists of $m \ll nk$ non-null elements\footnote{Note that unlike ``sparse genuine'' instances where we have few genuine elements, here we 
have few elements overall.}, we would like to improve the approximation and runtime bounds to be a function of  $m/n$ instead of $k$.

 To obtain the improved bound, we first exponentially discretize the blocks based on the number of non-null elements (denoted the {\em block size} $m_i$) and output the maximum over all instances. For each scale of block size $b \in E_2(k)$, we use the following extended version of \genlis:
 
 \begin{lemma}[$\estgenlis$ algorithm, sparse version]
\label{lm::gl_rec_ext}

    Suppose the algorithm $\estgenlis$ $(a(\lambda/k),\lambda n)$-approximates $\genlis$ for some function $a$ in time $t(n \cdot k)$. Then there exists an algorithm $\cA$ that $(O(a(\lambda n/\mathsf{d}k) 
    ),\lambda n)$-approximates $\genlis$ with $\mathsf{d}$ non-empty blocks. 
    The run-time is $t(\mathsf{d} k) + O(n)$.

\end{lemma}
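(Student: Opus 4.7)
My plan is to reduce the sparse \genlis instance on $n$ blocks (with only $\mathsf{d}$ non-null blocks) to a dense \genlis instance on exactly $\mathsf{d}$ blocks, and then invoke the assumed \estgenlis algorithm with an appropriately scaled error parameter. Concretely, first I would scan the input $g \in (\N \times \zo)^{n \times k}$ once in $O(n)$ time to identify the set $J \subseteq [n]$ of indices of non-null blocks, with $|J| = \mathsf{d}$. I would then form the compressed instance $g' \in (\N \times \zo)^{\mathsf{d} \times k}$ obtained by listing the blocks $\{g_i\}_{i \in J}$ in their original left-to-right order.

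Since null blocks contain no elements and therefore cannot participate in any (genuine) increasing subsequence, the compression preserves the objective exactly: $\genlis(g') = \genlis(g)$. The key observation is that if we now invoke the assumed \estgenlis algorithm on $g'$ with error parameter $\lambda' := \lambda n / \mathsf{d}$ (so that the additive error $\lambda' \cdot \mathsf{d}$ equals the required $\lambda n$), then its guarantee yields an $(a(\lambda'/k), \lambda' \cdot \mathsf{d})$-approximation, i.e., an $(a(\lambda n / (\mathsf{d} k)), \lambda n)$-approximation of $\genlis(g') = \genlis(g)$. The runtime is $t(\mathsf{d} \cdot k)$ for the subroutine plus $O(n)$ for the scan and compression, matching the claimed bound.

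There is one edge case requiring care: we must have $\lambda' \le 1$ for the subroutine to be meaningful, i.e., $\mathsf{d} \ge \lambda n$. If instead $\mathsf{d} < \lambda n$, then trivially $\genlis(g) \le \mathsf{d} < \lambda n$, so simply outputting $0$ is a valid $(\alpha, \lambda n)$-approximation for any $\alpha \ge 1$. The algorithm branches on this condition after the initial scan. A secondary issue to handle cleanly is that if the underlying \estgenlis uses a \prectree over its input, the \prectree should be built on $g'(*)_2 \in \{0,1\}^{\mathsf{d} \times k}$ rather than on $g(*)_2$; since compression is just a re-indexing of positions, this is straightforward and incurs no extra asymptotic cost.

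I do not expect any serious obstacle here: the whole point is that null blocks are ``free'' in the \genlis model (we do not pay a genuineness test to detect them), so compressing them away is lossless both algorithmically and in the objective. The only thing to argue carefully is that the approximation bound scales correctly under the renormalization $\lambda \mapsto \lambda n/\mathsf{d}$ and $n \mapsto \mathsf{d}$, which is the one-line calculation above, together with the boundary case $\mathsf{d} < \lambda n$ that I just handled.
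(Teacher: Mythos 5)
Your high-level reduction -- discard the null blocks, rescale to $\lambda' = \lambda n/\mathsf{d}$ over $n' = \mathsf{d}$, invoke the dense subroutine -- is the same idea the paper uses, and your edge case $\mathsf{d} < \lambda n$ is a nice touch. But the step you wave off is exactly the one that needs work: you assert that the \prectree ``should be built on $g'(*)_2 \in \{0,1\}^{\mathsf{d} \times k}$'' and that re-indexing ``incurs no extra asymptotic cost.'' In this paper's framework that step does not go through. The \prectree handed to $\estgenlis$ is not something you are free to regenerate: it is a sub-tree of the single global \prectree, sampled non-adaptively over $y$ at preprocessing time and carved out via the Tree Decomposition Lemma, and its leaves are indexed by positions in the \emph{original} $n$-block instance. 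Re-indexing the non-null positions $J$ to $[\mathsf{d}]$ destroys the balanced $\beta$-ary structure (the children of an internal node will generically contain very different numbers of non-null blocks), and building a fresh \prectree over $[\mathsf{d}]$ would require new samples of $y$, contradicting non-adaptivity and breaking the amortization that makes the whole sample budget work.

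The paper sidesteps this by \emph{not} compressing. It runs the same algorithmic steps on the uncompressed $n$-block instance with parameters $(\mathsf{d}, \lambda')$ and argues that this simulates what you want. For the dense estimator, sampling blocks from $[n]$ i.i.d.\ with probability $\zeta/(\lambda n) = \zeta/(\lambda' \mathsf{d})$ and simply not counting the null ones is distributionally identical to i.i.d.\ sampling the $\mathsf{d}$ non-null blocks with probability $\zeta/(\lambda' \mathsf{d})$, so no re-indexing and no new \prectree are needed. For the sparse estimator, the $\reslis$ approximation guarantee is a function of $\tfrac{|y(X,Y)|}{|X|}$, and taking $|X| = n$ instead of $\mathsf{d}$ only decreases that ratio (and gives more freedom in choosing $\tau \in E_\beta$), so leaving the null blocks in place can only help. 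To close your proof you would need to explain how to simulate \prectree access on $g'$ from the given \prectree on $g$, and the cleanest way I see to do that is to not compress at all -- i.e., to do what the paper does.
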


We adapt algorithm \estgenlis as follows. Let $D$ be the set of non-empty blocks. We proceed by following the same algorithmic steps, using parameters $n' \triangleq \mathsf{d}$ 
and $\lambda' \triangleq \lambda n / \mathsf{d}$. 

\begin{proof}

It suffices to show that the algorithm output guarantees do not deteriorate from the case where the input is $g \in \N^{\mathsf{d} \times k}$ with parameter $\lambda'$ as above and does not include the empty blocks. Note that the additive error is $\lambda n = \lambda' \mathsf{d}$.

Indeed, for the dense estimator, we only count $i \in D$ which are subsampled, and each one is independently subsampled with the same probability as above.

For the sparse estimator, we note that the guarantees of \Cref{thm::reslis_main} are with respect to $\tfrac{|Y \cap y|}{|X|}$, and increasing $n$ can only help, since it gives us greater flexibility for choosing $\tau \in E_\beta$.
\ns{this is not super-formal right now... but I think okish}

The runtime is immediate, as the only overhead is skipping empty blocks.
\end{proof}
 
 
\subsubsection{\genlis over interval space}
We also need to overcome the obstacle that the space $\IC$ admits a partial order relation only, as opposed to $\N$, which admits a total order relation. Our ideal 
solution involves a mapping $\varphi: \IC \rightarrow \N$ that approximately preserves the overall $\LIS$ over all subsequences (with probability 1), and therefore also preserves the overall $\LIS$ over the genuine elements. We first note that such a mapping is applied a priori on the first coordinates of the input, and is not affected by the genuineness flags, nor does it require access to any genuineness samples.

While we are unable to exhibit 
a single mapping for the entire space, we 
show that one can 
discretize an arbitrary \genlis instance over intervals into $\log (k/\lambda)$ ``nicer'' instances $G_1,\ldots,G_{\log (k/\lambda)}$, which are mutually disjoint and whose union 
covers most of the original $\genlis$ intervals (the unused intervals are 
discarded, incurring 
some additive error). 
Each such ``nicer" instance is then mapped to a \genlis instance over integers, incurring a constant factor approximation 
over all its subsequences. The final output is 
the maximum \genlis over all integral instances $\{\varphi_i(G_{i,1}),G_{i,2}\}_i$. This costs 
us another $\log (k/\lambda)$-factor approximation.

In particular, we use the following claim:

\begin{claim}\label{cl::intervals_similar}
	For $\ell \in \N$, define $\cI_\ell$ to be the collection of all intervals in $\cI$ whose length is in 
	$[\ell, 2\ell)$. 
	Then, there exists a mapping $\varphi_\ell: \cI_\ell \rightarrow \N$, which can be computed in constant time, 
   such that:
    \begin{enumerate}
    	\item For $I,J \in \cI_\ell$: $I < J \Rightarrow \varphi_\ell(I)<\varphi_\ell(J)$; and,
    	\item For all increasing sequences $y \in \N^m$ and all $q \in \cI_\ell^m$ 
    	such that $\varphi_\ell(q)=y$, $q$ must have an increasing subsequence of length $m/3$.
   \end{enumerate}
\end{claim}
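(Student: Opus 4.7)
The plan is to take the simple binning map
\[
\varphi_\ell(I) \;:=\; \lfloor \min(I)/\ell \rfloor,
\]
which is computable in $O(1)$ time. Intuitively, this partitions $\N$ into consecutive windows of width $\ell$ and labels each interval by the window containing its left endpoint. The key quantitative input is that every $I \in \cI_\ell$ satisfies $|I| \in [\ell, 2\ell-1]$, so $I$ can straddle at most two consecutive windows but never three. I will then verify the two required properties directly, and the factor $1/3$ will emerge from that length ratio.

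\textbf{Verifying property (1).} Assume $I < J$ in the partial order on intervals, i.e., $\max(I) < \min(J)$. Since $|I| \geq \ell$, we have $\min(J) \geq \max(I) + 1 \geq \min(I) + \ell$. Therefore
\[
\lfloor \min(J)/\ell \rfloor \;\geq\; \lfloor (\min(I)+\ell)/\ell \rfloor \;=\; \lfloor \min(I)/\ell \rfloor + 1,
\]
so $\varphi_\ell(I) < \varphi_\ell(J)$ strictly.

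\textbf{Verifying property (2).} Suppose $q=(q_1,\ldots,q_m) \in \cI_\ell^m$ with $y_i := \varphi_\ell(q_i)$ strictly increasing in $\N$. I will show that the subsequence $q_1, q_4, q_7, \ldots, q_{3t+1}, \ldots$, of length $\lceil m/3 \rceil \geq m/3$, is monotone in the partial order. Fix any two consecutive terms $q_i$ and $q_{i+3}$ of the subsequence. Since $y$ is strictly increasing over the integers, $y_{i+3} \geq y_i + 3$. By definition of $\varphi_\ell$, $\min(q_i) \leq (y_i+1)\ell - 1$ and $|q_i| \leq 2\ell - 1$, so
\[
\max(q_i) \;\leq\; (y_i+1)\ell - 1 + (2\ell - 2) \;=\; y_i\ell + 3\ell - 3,
\]
while $\min(q_{i+3}) \geq y_{i+3}\ell \geq y_i\ell + 3\ell$. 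Hence $\min(q_{i+3}) - \max(q_i) \geq 3 > 0$, i.e., $q_i < q_{i+3}$ in the partial order.

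\textbf{Main consideration.} The constant $3$ is essentially forced by the ratio $\sup_{I\in\cI_\ell} |I|/\ell < 2$: the floor values must differ by at least $3$ before we can certify that the gap between left endpoints strictly exceeds the maximum possible interval length, which in turn dictates stepping by $3$ through the subsequence. I do not expect any serious obstacle beyond this constant bookkeeping. One minor point worth noting is that if one adopts the convention $0 \notin \N$, then one can use $\varphi_\ell(I) := 1 + \lfloor \min(I)/\ell \rfloor$ instead, without affecting either property.
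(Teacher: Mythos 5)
Your proposal is correct and uses the same mapping $\varphi_\ell(I)=\lfloor \min(I)/\ell\rfloor$ as the paper, with property (1) verified identically and property (2) resting on the same key observation that when $\varphi_\ell$-values differ by at least $3$ the underlying intervals are strictly ordered. The only cosmetic difference is in extracting the length-$m/3$ subsequence: the paper partitions the positions into three classes by $y_w \bmod 3$ and invokes pigeonhole, while you step directly through indices $1,4,7,\ldots$; both are equally valid and yield the same bound.
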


\begin{proof}
    We use the mapping $\varphi_\ell(I) \triangleq \lfloor \min(I)/\ell \rfloor$. 
    First we show $(1)$. We have 
    \[
    I<J \Rightarrow \min(I)\leq \max(I) - \ell + 1 \leq \min(J) - \ell,
    \]
    which implies that $\varphi_\ell(I)<\varphi_\ell(J)$.
    
   For (2), fix $y$ as above, and for $i \in \{0,1,2\}$, let $y^i$ be the sequence $y$ restricted to integer values $y_w$ satisfying 
   $y_w \pmod 3 = i$. 
   Let 
   $q^i$ be the corresponding subsequences 
   of $q$. Then each subsequence $q^i$ is 
   increasing. To see this, consider two intervals $q^i_j$ and $q^i_{j+1}$. Then $\max(q^i_j) < \min(q^i_j) + 2\ell$. On the other hand, $\min(q^i_{j+1}) > \min(q^i_j) + 2\ell$, as $y^i_{j+1} \geq y^i_j + 3$. So, $\max(q^i_j) <\min(q^i_{j+1})$.
   
   Since $|q^0| + |q^1| + |q^2| = m$, 
   at least one of them satisfies the length promise as well. 
\end{proof}



\newcommand{\rank}[0]{{\text{rank}}}

\begin{figure}[ht]
    \setstretch{1.2}
  \begin{algorithm}[H]
    \caption{\genlis 
   extension reduction}
		\DontPrintSemicolon
		\SetNoFillComment
		\KwIn{$g \in (\cI \times \zo)^{n \times k},\lambda\in [1/n,1]$, algorithm $\cA$ for \genlis over integers.}
		\KwOut{A number $\widehat{G} \in \left[0,n\right]$.
		}
		Let $y \gets g(*)_1$ be the first coordinates of $g$. \\ 
		
		Compute the set of integers $Q \gets \cup_{i,j} \{\min(y_{i,j}),\max(y_{i,j})\}$.\\
		
		Compute $a \in \cI^{n \times k}$ such that $a_{i,j} \gets [\rank_Q(\min(y_{i,j})),\rank_Q(\max(y_{i,j}))]$ for all $i,j$.\\
		
		\For{$(\ell,i) \in E_2(2k/\lambda) \times [n]$}
		{
			Compute $U_\ell \gets \{(i,j) \mid |a_{i,j}| 
			\in [\ell, 2\ell)\}$.\\
			Compute $b^{(\ell)} \gets \varphi_\ell(a(U_\ell)) \oplus g(U_\ell)_2$, 
			where $\varphi_\ell$ is the transformation of \Cref{cl::intervals_similar}.\\ 
			\For{$e \in E_2(k)$}
			{
				Compute $B \gets \{i \mid |b_i| 
			\in [e,2e)\}$.\\
				$L_{\ell,e} \gets \tfrac{1}{3} \cdot $ \{ the output of algorithm from \Cref{lm::gl_rec_ext} with inputs $b^{(\ell)}_B, \lambda/2$ \}. 
			}
		}
		\Return $\max_{\ell,e} L_{\ell,e}$. 
\end{algorithm}
\caption{Description of the \genlis 
extension algorithm (\Cref{lm::genlis_int_reduction}). Here, $\oplus$ denotes direct sum.}
\label{fig:ALG_genlis_intervals}
\end{figure}

\begin{proof}[Proof of 
\Cref{lm::genlis_int_reduction}]
    Let $\cA$ be such that $\cA(b^{(\ell)})$ is a $(\alpha, \lambda n/2)$-approximation for $\genlis(b^{(\ell)})$ in time $t$. We need to show the reduction algorithm provides a $(O(\log (1/\lambda) \alpha),\lambda n)$-approximation for $g$.
    
            For the upper bound, we use 
    \Cref{cl::intervals_similar}, item (1), to map each interval 
    to an integer preserving its strict order relation with all other intervals. This implies that we always output an upper bound of the original interval $\LIS$, 
    as we only use subsets of elements of the original sequence, and do not add elements.
    
    For the lower bound, first consider intervals 
    of length at least 
    $2k/\lambda$, which do not participate in any $b^{(\ell)}$. Any increasing subsequence of intervals contained in $[1,2nk]$ can have at most $\lambda n / 2$ intervals of such magnitude, hence by ignoring such intervals we ``omit" 
    at most  $\lambda n / 2$ $\LIS$ elements (additive error). Now, consider all 
    intervals of length at most $2k/\lambda$. Each such interval 
    participates in some instance $b^{(\ell)}$, and hence there exists an instance 
    such that the intervals participating in it $(\log(k/\lambda),\tfrac{\lambda}{2} n)$-approximates the original instance. Now, using 
    \Cref{cl::intervals_similar}, item (2), this instance is 
    mapped to an integer instance which $(O(\log(k/\lambda)),\tfrac{\lambda}{2} n)$-approximates the original instance. Therefore, the lower bound holds for at least one $b^{(\ell)}$. 
    
    Next, we argue that partitioning $b^{(\ell)}$ into $\log(k)$ instances of similar block size guarantees that at least one instance has $1/\log(k)$ fraction of the $\LIS$. Finally, 
    we use the guarantees of \Cref{lm::gl_rec_ext},
   together with
    the above bounds on $b^{(\ell)}_B$, to conclude that $\cA(b^{(\ell)}_B)$ outputs a $(\alpha \cdot O(\polylog(k/\lambda)), \lambda n)$-approximation of the original $\LIS$.    
    This concludes the proof.
    
\end{proof}

\begin{flushleft}
\bibliography{main}{}

\newcommand{\etalchar}[1]{$^{#1}$}
\def\cprime{$'$} \def\cprime{$'$}
\begin{thebibliography}{BCLW19}

\bibitem[ACCL07]{ACCL04}
Nir Ailon, Bernard Chazelle, Seshadhri Comandur, and Ding Liu.
\newblock Estimating the distance to a monotone function.
\newblock {\em Random Structures and Algorithms}, 31:371--383, 2007.
\newblock Previously in RANDOM'04.

\bibitem[AD99]{AldousD99}
David Aldous and Persi Diaconis.
\newblock Longest increasing subsequences: from patience sorting to the
  baik–deift–johansson theorem.
\newblock {\em Bulletin of the American Mathematical Society}, 36(04):413--432,
  1999.

\bibitem[AIK09]{AIK-product}
Alexandr Andoni, Piotr Indyk, and Robert Krauthgamer.
\newblock Overcoming the $\ell_1$ non-embeddability barrier: Algorithms for
  product metrics.
\newblock In {\em Proceedings of the ACM-SIAM Symposium on Discrete Algorithms
  (SODA)}, pages 865--874, 2009.

\bibitem[AK12]{AK-smoothed}
Alexandr Andoni and Robert Krauthgamer.
\newblock The smoothed complexity of edit distance.
\newblock {\em {ACM} Transactions on Algorithms}, 8(4):44, 2012.
\newblock Previously in ICALP'08.

\bibitem[AKO10]{AKO-edit}
Alexandr Andoni, Robert Krauthgamer, and Krzysztof Onak.
\newblock Polylogarithmic approximation for edit distance and the asymmetric
  query complexity.
\newblock In {\em Proceedings of the Symposium on Foundations of Computer
  Science (FOCS)}, 2010.
\newblock Full version at \url{http://arxiv.org/abs/1005.4033}.

\bibitem[AKO11]{AKO-precision}
Alexandr Andoni, Robert Krauthgamer, and Krzysztof Onak.
\newblock Streaming algorithms from precision sampling.
\newblock In {\em Proceedings of the Symposium on Foundations of Computer
  Science (FOCS)}, 2011.
\newblock Full version at \url{http://arxiv.org/abs/1011.1263}.

\bibitem[AN10]{AN-ulam}
Alexandr Andoni and Huy~L. Nguyen.
\newblock Near-tight bounds for testing {Ulam} distance.
\newblock In {\em Proceedings of the ACM-SIAM Symposium on Discrete Algorithms
  (SODA)}, 2010.

\bibitem[BCLW19]{BCLW19}
O.~{Ben-Eliezer}, C.~{Canonne}, S.~{Letzter}, and E.~{Waingarten}.
\newblock Finding monotone patterns in sublinear time.
\newblock In {\em 2019 IEEE 60th Annual Symposium on Foundations of Computer
  Science (FOCS)}, pages 1469--1494, 2019.

\bibitem[BCS20]{BCS20}
Hadley Black, Deeparnab Chakrabarty, and C.~Seshadhri.
\newblock Domain reduction for monotonicity testing: A o(d) tester for boolean
  functions in d-dimensions.
\newblock In {\em Proceedings of the Thirty-First Annual ACM-SIAM Symposium on
  Discrete Algorithms}, SODA '20, page 1975–1994, USA, 2020. Society for
  Industrial and Applied Mathematics.

\bibitem[BD21]{BD21}
Karl Bringmann and Debarati Das.
\newblock {A Linear-Time $n^{0.4}$-Approximation for Longest Common
  Subsequence}.
\newblock In Nikhil Bansal, Emanuela Merelli, and James Worrell, editors, {\em
  48th International Colloquium on Automata, Languages, and Programming (ICALP
  2021)}, volume 198 of {\em Leibniz International Proceedings in Informatics
  (LIPIcs)}, pages 39:1--39:20, Dagstuhl, Germany, 2021. Schloss Dagstuhl --
  Leibniz-Zentrum f{\"u}r Informatik.

\bibitem[BS19]{BS19ulam}
Mahdi Boroujeni and Saeed Seddighin.
\newblock Improved mpc algorithms for edit distance and ulam distance.
\newblock In {\em Proceedings of the 31st ACM Symposium on Parallelism in
  Algorithms and Architectures}, SPAA '19, page 31–40, New York, NY, USA,
  2019. Association for Computing Machinery.

\bibitem[Cha12]{Chakrabarti2012_LIS}
Amit Chakrabarti.
\newblock A note on randomized streaming space bounds for the longest
  increasing subsequence problem.
\newblock {\em Inf. Process. Lett.}, 112(7):261–263, March 2012.

\bibitem[CS14]{CS17}
Deeparnab Chakrabarty and C.~Seshadhri.
\newblock An optimal lower bound for monotonicity testing over hypergrids.
\newblock {\em Theory of Computing}, 10(17):453--464, 2014.

\bibitem[DGL{\etalchar{+}}99]{DodisGLRRS99}
Yevgeniy Dodis, Oded Goldreich, Eric Lehman, Sofya Raskhodnikova, Dana Ron, and
  Alex Samorodnitsky.
\newblock Improved testing algorithms for monotonicity.
\newblock {\em Electron. Colloquium Comput. Complex.}, 6(17), 1999.

\bibitem[DS21]{DS21}
Debarati Das and Barna Saha.
\newblock Approximating lcs and alignment distance over multiple sequences,
  2021.

\bibitem[EJ08]{EJ08}
Funda Erg{\"u}n and Hossein Jowhari.
\newblock On distance to monotonicity and longest increasing subsequence of a
  data stream.
\newblock In {\em Proceedings of the ACM-SIAM Symposium on Discrete Algorithms
  (SODA)}, pages 730--736, 2008.

\bibitem[EKK{\etalchar{+}}00]{EKKRV00}
Funda Erg\"{u}n, Sampath Kannan, Ravi Kumar, Ronitt Rubinfeld, and Manesh
  Viswanathan.
\newblock Spot-checkers.
\newblock {\em J. Comput. Syst. Sci.}, 60(3):717--751, 2000.

\bibitem[Fis01]{Fischer01theart}
Eldar Fischer.
\newblock The art of uninformed decisions: A primer to property testing.
\newblock {\em Science}, 75:97--126, 2001.

\bibitem[Fis04]{Fischer_comparisons_testing}
Eldar Fischer.
\newblock On the strength of comparisons in property testing.
\newblock {\em Inf. Comput.}, 189(1):107–116, February 2004.

\bibitem[GG07]{GG07}
Anna G{\'a}l and Parikshit Gopalan.
\newblock Lower bounds on streaming algorithms for approximating the length of
  the longest increasing subsequence.
\newblock In {\em Proceedings of the Symposium on Foundations of Computer
  Science (FOCS)}, pages 294--304, 2007.

\bibitem[GJ21a]{GJ21_dynamic_LIS}
Pawe\l{} Gawrychowski and Wojciech Janczewski.
\newblock Fully dynamic approximation of lis in polylogarithmic time.
\newblock In {\em Proceedings of the 53rd Annual ACM SIGACT Symposium on Theory
  of Computing}, STOC 2021, page 654–667, New York, NY, USA, 2021.
  Association for Computing Machinery.

\bibitem[GJ21b]{GJ21_lower_bound_dynamic_LIS}
Paweł Gawrychowski and Wojciech Janczewski.
\newblock Conditional lower bounds for variants of dynamic lis, 2021.

\bibitem[GJKK07]{GJKK07}
Parikshit Gopalan, T.~S. Jayram, Robert Krauthgamer, and Ravi Kumar.
\newblock Estimating the sortedness of a data stream.
\newblock In {\em Proceedings of the ACM-SIAM Symposium on Discrete Algorithms
  (SODA)}, pages 318--327, 2007.

\bibitem[GKLS21]{GKLS21}
Arun Ganesh, Tomasz Kociumaka, Andrea Lincoln, and Barna Saha.
\newblock How compression and approximation affect efficiency in string
  distance measures, 2021.

\bibitem[Gol17]{goldreich2017introduction}
Oded Goldreich.
\newblock {\em Introduction to property testing}.
\newblock Cambridge University Press, 2017.

\bibitem[Gus97]{Gus-book}
Dan Gusfield.
\newblock {\em Algorithms on strings, trees, and sequences}.
\newblock Cambridge University Press, Cambridge, 1997.

\bibitem[Ham72]{Hammersley72}
John Hammersley.
\newblock A few seedlings of research.
\newblock {\em Proc. Sixth Berkeley Symp. Math. Statist. and Probability},
  pages 345--394, 1972.

\bibitem[HSSS19]{HSSS19}
MohammadTaghi Hajiaghayi, Masoud Seddighin, Saeed Seddighin, and Xiaorui Sun.
\newblock {\em Approximating LCS in Linear Time: Beating the $\sqrt{n}$
  Barrier}, pages 1181--1200.
\newblock 2019.

\bibitem[IMS17]{IMS17_MPC}
Sungjin Im, Benjamin Moseley, and Xiaorui Sun.
\newblock Efficient massively parallel methods for dynamic programming.
\newblock In {\em Proceedings of the 49th Annual ACM SIGACT Symposium on Theory
  of Computing}, STOC 2017, page 798–811, New York, NY, USA, 2017.
  Association for Computing Machinery.

\bibitem[KS21]{KS21}
Tomasz Kociumaka and Saeed Seddighin.
\newblock Improved dynamic algorithms for longest increasing subsequence.
\newblock In {\em Proceedings of the 53rd Annual ACM SIGACT Symposium on Theory
  of Computing}, STOC 2021, page 640–653, New York, NY, USA, 2021.
  Association for Computing Machinery.

\bibitem[Mal73]{Mallows73}
C.~L. Mallows.
\newblock Patience sorting.
\newblock {\em Bull. Inst. Math. Appl.}, 9:216--224, 1973.

\bibitem[MS20]{MS20_dynamic_LIS}
Michael Mitzenmacher and Saeed Seddighin.
\newblock Dynamic algorithms for lis and distance to monotonicity.
\newblock In {\em Proceedings of the 52nd Annual ACM SIGACT Symposium on Theory
  of Computing}, STOC 2020, page 671–684, New York, NY, USA, 2020.
  Association for Computing Machinery.

\bibitem[MS21]{MS21}
Michael Mitzenmacher and Saeed Seddighin.
\newblock Improved sublinear time algorithm for longest increasing subsequence.
\newblock In {\em Proceedings of the Thirty-Second Annual ACM-SIAM Symposium on
  Discrete Algorithms}, SODA '21, page 1934–1947, USA, 2021. Society for
  Industrial and Applied Mathematics.

\bibitem[Nos21]{Nosatzki2021ApproximatingTL}
Negev~Shekel Nosatzki.
\newblock Approximating the longest common subsequence problem within a
  sub-polynomial factor in linear time.
\newblock {\em ArXiv}, abs/2112.08454, 2021.

\bibitem[NS15]{NS15}
Timothy Naumovitz and Michael Saks.
\newblock A polylogarithmic space deterministic streaming algorithm for
  approximating distance to monotonicity.
\newblock In {\em Proceedings of the 2015 Annual ACM-SIAM Symposium on Discrete
  Algorithms (SODA)}, pages 1252--1262, 2015.

\bibitem[NSS17]{NSS17}
Timothy Naumovitz, Michael Saks, and C.~Seshadhri.
\newblock {\em Accurate and Nearly Optimal Sublinear Approximations to Ulam
  Distance}, pages 2012--2031.
\newblock 2017.

\bibitem[NV21]{NV20}
Ilan Newman and Nithin Varma.
\newblock {New Sublinear Algorithms and Lower Bounds for LIS Estimation}.
\newblock In Nikhil Bansal, Emanuela Merelli, and James Worrell, editors, {\em
  48th International Colloquium on Automata, Languages, and Programming (ICALP
  2021)}, volume 198 of {\em Leibniz International Proceedings in Informatics
  (LIPIcs)}, pages 100:1--100:20, Dagstuhl, Germany, 2021. Schloss Dagstuhl --
  Leibniz-Zentrum f{\"u}r Informatik.

\bibitem[P44]{P44}
List of open problems in sublinear algorithms: Problem 44.
\newblock \url{http://sublinear.info/44}.

\bibitem[PRR06]{PRR06}
Michal Parnas, Dana Ron, and Ronitt Rubinfeld.
\newblock Tolerant property testing and distance approximation.
\newblock {\em Journal of Computer and System Sciences}, 72(6):1012 -- 1042,
  2006.

\bibitem[PRW20]{PRW20}
Ramesh Krishnan~S. Pallavoor, Sofya Raskhodnikova, and Erik Waingarten.
\newblock {\em Approximating the Distance to Monotonicity of Boolean
  Functions}, pages 1995--2009.
\newblock 2020.

\bibitem[RSSS19]{rubinstein2019approximation}
Aviad Rubinstein, Saeed Seddighin, Zhao Song, and Xiaorui Sun.
\newblock Approximation algorithms for lcs and lis with truly improved running
  times.
\newblock In {\em 2019 IEEE 60th Annual Symposium on Foundations of Computer
  Science (FOCS)}, pages 1121--1145. IEEE, 2019.

\bibitem[SS13]{SS13_streaming_LIS}
Michael Saks and C.~Seshadhri.
\newblock Space efficient streaming algorithms for the distance to monotonicity
  and asymmetric edit distance.
\newblock In {\em Proceedings of the Twenty-Fourth Annual ACM-SIAM Symposium on
  Discrete Algorithms}, SODA '13, page 1698–1709, USA, 2013. Society for
  Industrial and Applied Mathematics.

\bibitem[SS17]{SaksS17}
Michael~E. Saks and C.~Seshadhri.
\newblock Estimating the longest increasing sequence in polylogarithmic time.
\newblock {\em {SIAM} J. Comput.}, 46(2):774--823, 2017.

\bibitem[SW07]{SunWoodruff_LCS_LIS}
Xiaoming Sun and David~P. Woodruff.
\newblock The communication and streaming complexity of computing the longest
  common and increasing subsequences.
\newblock In {\em Proceedings of the Eighteenth Annual ACM-SIAM Symposium on
  Discrete Algorithms}, SODA '07, page 336–345, USA, 2007. Society for
  Industrial and Applied Mathematics.

\end{thebibliography}
\bibliographystyle{alpha}
\end{flushleft}

\appendix
\section{Proof of \Cref{thm::reslis_main}}
\label{apx:proofRecursion}


The proof is by recursive application of 
\Cref{thm::gl_rec_tree,thm::sl_rec_tree} for solving $\reslis$, up to approximation $\left(a\left(|X|,\tau,\tfrac{\lambda \cdot |X|}{|y(X, Y)|}\right),\lambda |X|\right)$, with access to a Precision-Tree $T_{1/c}$, for $c=c\left(|X|,\tau,\lambda,\tfrac{\lambda \cdot |X|}{|y(X, Y)|}\right)$.

We will establish the recurrences for quantities  $a(\cdot,\cdot,\cdot)$ and $c(\cdot,\cdot,\cdot,\cdot)$. 

\begin{claim}\label{cl::recurse}
    Fix $\eps < 1$ and $\xi,\theta,n \in \N$. Consider the following recursive formulas:
    \begin{align*}
        a(m,\tau,1/L) &\leq \xi \cdot \max_{t\in [1,O(L)]} \max\left\{\gamma,a(\tau,\tau^\eps,\tfrac{\gamma}{ t})\right\}\cdot a(m/\tau,\tau,\Omega(t/L)); \\
        c(m,\tau,1/D,1/L) &\le \theta \cdot \tau \cdot D+ \max_{r \in [1,D]}\left(\tO(r) + c(\tau,\tau^\eps,1/r,\gamma/t)\right)\cdot c(m/\tau,\tau,\Omega(r/D),\Omega(t/L)). 
    \end{align*}
    With the following base cases:
    \begin{align*}
        a(\cdot,\cdot,1) &= 1
        &a(m,\cdot,1/m) &= 1 
        &a(1,\cdot,\cdot) &= 1 \\
        c(\cdot,\cdot,\cdot,1) &= 1 
        &c(1,\cdot,\cdot,\cdot) &= 1 
        &c(m,\cdot,\cdot,1/m) &= m  
     \end{align*}
     
    
    Then, there exists a function $\gamma(t,n,m,\eps)$, such that substituting $\gamma = \gamma(t,n,m,\eps)$ in the recursive formulas above yield the following upper bounds, for $H \triangleq  \eps^{\Theta(1/\sqrt{\eps}\cdot \log 1/\eps)}$:
    
    \begin{align*}
        a(n,n^\eps,1/L) &= L^{\sqrt{\eps}}\cdot  \xi^{O(H^2)} \\ 
        c(n,n^\eps,1/D,1/L) &=  D \cdot m^{O(\eps \log_{(1/\eps)}(H))}\cdot \theta^{O(H)}. \\
    \end{align*}  
\end{claim}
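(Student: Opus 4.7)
The plan is to bound $a$ and $c$ by simultaneous induction on a ``vertical recursion depth'' $d$, where the inductive step at depth $d$ estimates quantities of the form $a(n_d, n_d^{\eps}, 1/L)$ with $n_d := n^{\eps^d}$, by fully unrolling the horizontal recursion at that level. The horizontal recursion, which maps $(m, \tau, 1/L)$ to $(m/\tau, \tau, t/L)$ while keeping $\tau$ fixed, terminates after $1/\eps$ steps at the base case $a(1, \cdot, \cdot) = 1$. Unrolling yields
\[
a(n_d, n_d^{\eps}, 1/L) \;\leq\; \xi^{1/\eps} \cdot \prod_{i=0}^{1/\eps - 1} \max\bigl\{\gamma_i,\, a(n_{d+1}, n_{d+1}^{\eps}, \gamma_i/t_i)\bigr\},
\]
subject to $\prod_i t_i \leq L$. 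I will define $\gamma(t, \cdot)$ so as to \emph{balance} the max, i.e.\ $\gamma_i = a(n_{d+1}, n_{d+1}^{\eps}, \gamma_i/t_i)$; under the inductive hypothesis $a(n_{d+1}, n_{d+1}^{\eps}, 1/L) \leq L^{p_{d+1}}\,K_{d+1}$, this equation solves to $\gamma_i = t_i^{p_{d+1}/(1+p_{d+1})}\,K_{d+1}^{1/(1+p_{d+1})}$.

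Substituting and maximizing over $\{t_i\}$ subject to $\prod_i t_i \leq L$ gives the scalar recurrences
\[
p_d \;=\; \frac{p_{d+1}}{1+p_{d+1}}, \qquad \log K_d \;=\; \frac{\log \xi}{\eps} \;+\; \frac{\log K_{d+1}}{\eps(1+p_{d+1})}.
\]
The first has the closed form $1/p_d = 1/p_D + (D - d)$; taking the trivial base $p_D = 1$ (from $a \leq L$) and choosing $D = \Theta(1/\sqrt{\eps})$ yields $p_0 = \sqrt{\eps}$. Unrolling the second recurrence with $p_{d+1} = 1/(D-d)$ and telescoping the multiplicative factors, the dominant contribution comes from the deepest vertical level and gives $\log K_0 = O(\log \xi) \cdot (1/\eps)^{\Theta(D)}$, which is absorbed into $\xi^{O(H^{2})}$ for the stated $H$.

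The bound for $c$ proceeds by the same schema, with an extra parameter $r \in [1, D']$ (using $D'$ for the precision parameter to avoid clashing with the recursion depth $D$) to optimize at every horizontal step. The inductive hypothesis will take the form $c(m, m^\eps, 1/D', 1/L) \leq D' \cdot m^{q_d} \cdot \theta^{s_d}$; the factor $D'$ tracks the leading linear contribution (from the additive $\theta \tau D'$), while $r$ plays the analogue of $t$, balancing the vertical sub-call $c(\tau, \tau^\eps, 1/r, \cdot)$ against the horizontal one $c(m/\tau, \tau, r/D', \cdot)$. Setting $r$ optimally via the same argmax calculation and plugging in the value of $\gamma$ obtained from the $a$-analysis gives recurrences for $q_d, s_d$ of the same flavour as those for $p_d, K_d$, yielding $q_0 = O(\eps \log_{(1/\eps)}(H))$ and $s_0 = O(H)$.

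The main obstacle I anticipate is the joint bookkeeping in the $c$-recursion: unlike $a$, the complexity bound carries an \emph{additive} $\theta \tau D'$ term at each horizontal step, so each of the $1/\eps$ horizontal unrollings injects a fresh $\theta \cdot m^\eps \cdot D'$ contribution that must be absorbed into the $D' \cdot m^{q_d}\,\theta^{s_d}$ form. Ensuring that this additive term never dominates requires choosing $r$ large enough that the multiplicative recursion wins, while simultaneously keeping $\prod_j r_j \leq D'$ across all horizontal steps. A secondary subtlety is that during horizontal unrolling the ratio $\log \tau / \log m$ drifts from $\eps$ towards $\eps/(1-k\eps)$ after $k$ steps; I will circumvent this drift by only re-invoking the inductive hypothesis on the \emph{vertical} sub-calls $a(n_{d+1}, n_{d+1}^\eps, \cdot)$ and $c(n_{d+1}, n_{d+1}^\eps, \cdot, \cdot)$, which sit cleanly in the canonical $(m, m^\eps)$ form.
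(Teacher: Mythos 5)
Your proposal is correct in its essentials and follows the same global architecture as the paper's proof — unroll the horizontal recursion (fixed $\tau$, shrinking $m$) at each level, induct over the vertical recursion depth $d$ with $n_d = n^{\eps^d}$, and choose $\gamma$ so that the $\max$ collapses to a single term. Where you diverge is in \emph{how} $\gamma$ is chosen, and this is a genuinely different route. The paper fixes $\gamma = t^{\sqrt{\eps}}$ uniformly across depths (switching to $\gamma = t$ at the deepest level), and proves an inductive invariant of the form $q_h(l) \le (\text{additive}) + (1-\sqrt{\eps})^{H'-h}\,l$; it needs $H' = \Theta(\log(1/\eps)/\sqrt{\eps})$ vertical levels for the multiplicative coefficient $(1-\sqrt{\eps})^{H'}$ to decay to $\sqrt{\eps}$. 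You instead pick $\gamma$ depth-dependently so that the two arguments of the $\max$ exactly balance, yielding the clean scalar recurrence $p_d = p_{d+1}/(1+p_{d+1})$, equivalently $1/p_d = 1 + 1/p_{d+1}$. This harmonic recursion reaches $p_0 = \sqrt{\eps}$ in only $D = \Theta(1/\sqrt{\eps})$ vertical levels, so your induction is shorter and more transparent, and your $\log K_0 = O(\log\xi)\cdot(1/\eps)^{O(D)}$ is in fact \emph{tighter} than the paper's $\xi^{O(H^2)}$ (which it still satisfies since $(1/\eps)^{O(1/\sqrt{\eps})} \ll H^2 = (1/\eps)^{\Theta(\log(1/\eps)/\sqrt{\eps})}$).

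Two minor details you should make explicit before calling this a proof. First, the balancing formula $\gamma_i = t_i^{p'/(1+p')} K_{d+1}^{1/(1+p')}$ is only meaningful when it yields $\gamma_i \le t_i$ (otherwise the sub-call $a(\tau,\tau^\eps,\gamma_i/t_i)$ has third argument $> 1$); when $K_{d+1} > t_i$ you should clip $\gamma_i = t_i$ and use the base case $a(\cdot,\cdot,1)=1$, which in fact only helps. Second, the base-level constant is not $K_D = 1$: at depth $D$ you set $\gamma = t$ at every horizontal step, so unrolling gives $K_D = \xi^{\Theta(1/\eps)}$ (plus the $2^{O(1/\eps)}$ slack from the $O(\cdot)/\Omega(\cdot)$ constants); this is absorbed into $\log K_0 = O(\log\xi)(1/\eps)^{O(D)}$, so the final bound is unchanged, but the bookkeeping matters. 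Your $c$-recurrence outline is aligned with what the paper does (factoring out $D$, tracking $c/D$, and absorbing the additive $\theta \tau D$ term per horizontal step using the observation that $c(\tau,\tau^\eps,1/r,\cdot) \ge r$), but as written it is a plan rather than a proof; the telescoping product $\prod_j c(\tau,\tau^\eps,1/r_j,\cdot)/r_j$ over horizontal steps is the key manipulation you would need to carry out.
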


The proof of \Cref{cl::recurse} is deferred to \Cref{sec::recurse}.

To prove \Cref{thm::reslis_main}, we deduce the recurrences on approximation and \prectree complexity showing they match the ones from \Cref{cl::recurse}, from which we directly obtain the final bounds on approximation. Then we analyze the runtime, where we use the bound established on $c(\cdot,\cdot,\cdot,\cdot)$ as well.

\begin{proof}[Proof of \Cref{thm::reslis_main}]

To deduce the recurrences, we first note that for base cases, we have $a(\cdot, \cdot, 1)=1$, $c(\cdot, \cdot, \cdot,1)=1$ as in this case $\lambda=1$ and additive approximation $|X|$ is trivial. Similarly, for $|X|=1$ (i.e., 1 element string), we have $a(1, \cdot, \cdot)=1$, $c(1, \cdot, \cdot,\cdot)=1$, as we can compute if $\LIS$ is 1 or 0 with access to a 1 element 
string. Similarly, we need full access for exact approximation, and hence: $a(m, \cdot, 1/m)=1$, $c(m, \cdot,\cdot,1/m)=m$. For the other cases, we show the following:

\paragraph{Approximation.}  
For convenience, let $m\triangleq |X|$, $L \triangleq  \tfrac{|y(X, Y)|}{\lambda \cdot |X|}\le k/\lambda$.
We apply
\Cref{thm::sl_rec_tree}. Define $t\triangleq k_g/\lambda_g$, in which case $\tfrac{\lambda_s|X|}{|y(X)\cap Y|}=\Omega(t/L)$. Note that if $t>\Omega(L)$, then, we get (base case) $a(|X|,\cdot, 1)=1$, and hence $a(m,\tau,1/L)\le(\log k/\lambda)^{O(1)}\cdot \max_{t\le 1/\lambda} a_g(\tau,1/t)$ . Thus, including when $t$ is smaller, we have that:
	$$a(m,\tau,1/L) \le (\log k/\lambda)^{O(1)}\cdot \max_{t\in [1,O(L)]} a_g(\tau,1/t)\cdot a(m/\tau,\tau,\Omega(t/L)).$$
	
	Plugging in $a_g$ from 
	\Cref{thm::gl_rec_tree}, (with $\tau_s=n^\eps$), 
	we get, for our choice of $\gamma$:
	
	$$a(m,\tau,1/L) \le (\log k/\lambda)^{O(1)} \cdot \max_{t\in [1,O(L)]} 
	\max\left\{\gamma,a(\tau,\tau^{\eps},\tfrac{\gamma}{ t})\right\}\cdot a(m/\tau,\tau,\Omega(t/L)).$$

\paragraph{\prectree Complexity.} 
We now deduce the complexity bound $c$. Note that the recursion for $c$ will be using same parameters as above. We need to keep track of new parameters, in particular let $D=1/\lambda$. We have, where $\lambda_s,\lambda_g,k_g$ are as in approximation bound above, with $\lambda_s\lambda_g=\Omega(k_g\lambda)$:
$$
c(m,\tau,1/D,1/L) \le \polylog(n) \cdot  \tau D+ (c_g(\tau,\lambda_g,k_g)\cdot c(m/\tau,\tau,\Theta(\lambda_s/k_g),\Omega(t/L)) 
$$

Here we set $r=1/\lambda_g\in[1,D]$ and hence:
$$
c(m,\tau,1/D,1/L) \le \polylog(n) \cdot \tau L+ c_g(\tau,1/r,k_g)\cdot c(m/\tau,\tau,\Omega(r/D),\Omega(t/L)).
$$

Plugging in $c_g$ from \Cref{thm::gl_rec_tree} (again, with $\tau_s=n^\eps)$, we get:
$$
c(m,\tau,1/D,1/L)\le \polylog(n) \cdot \tau D +\max_{r \in [1,D]} \left(\tO(r) + c(\tau,\tau^{\eps},1/r,\gamma/t)\right)\cdot c(m/\tau,\tau,\Omega(r/D),\Omega(t/L)) ,
$$

We now apply \Cref{cl::recurse}, with $\theta = \polylog(n)$ and $\xi = \polylog (k/\lambda)$ to obtain the bounds we need.


\paragraph{Runtime complexity.}
The recurrence relation $t_s(n,n^\eps) = \kappa \cdot t_s(n^{1-\eps},n^{\eps})$, where $\kappa = \tO\left(\log^{O(\log_\beta(\tau))}(\beta)\right)$, with base case $t_s(n^\eps,n^\eps)=\kappa$, can be expanded to obtain $T(n,n^\eps) = \kappa^{1/\eps}$.


Therefore, the total expected runtime is $\kappa^{1/\eps} \cdot c(m,\tau,1/D,1/L) \cdot \tfrac{L}{D} = L \cdot m^{O(\sqrt{\eps}\log 1/\eps)}\cdot (\log n)^{2^{O(1/\sqrt{\eps}\cdot\log^2 1/\eps)}} \cdot \tO\left(\log^{O(\log_\beta(n^{\epsilon}))}(\beta)\right)^{1/\epsilon} = \log^{O(\log_\beta(n))}(\beta) \cdot \log(n)^{O(1/\epsilon)}$. The preprocessing of the \prectree takes time $c\cdot \tO\left(\log^{O(\log_\beta(n^{\epsilon}))}(\beta)\right)$, which is a lower 
order term.

\end{proof}

\subsection{Recursion Analysis}\label{sec::recurse}
We will prove each recursion separately. Let $b = n^{1/H}$. Our choice of $\gamma = t^{\sqrt{\eps}}$, except for $\tau=b$, then we set $\gamma = t$ and get the base cases        $a(m,b,1/L) \leq \xi \cdot \max_{t\in [1,O(L)]} t \cdot a(m/b,b,\Omega(t/L))$ and 
        $c(m,b,1/D,1/L) \leq \theta \cdot b \cdot D+ \max_{r \in [1,D]}\left(\tO(r) +  c(m/b,b,\Omega(r/D),\Omega(t/L))\right)$.
\begin{proof}[Proof of \Cref{cl::recurse}, function $a$]
Setting $\tau = m^\eps$, the recursion formula can be written as:
$$
a(m,m^\eps,1/L) \le \xi^{\Upsilon}\cdot \max_{\stackrel{t_1,\ldots,t_{1/\epsilon}\ge 1}{t_1 \cdot t_2 \cdot \ldots \cdot t_{1/\epsilon} \le 2^{O(\Upsilon)} L}} \prod_i\max\{\gamma_i,a(m^{\eps},m^{\epsilon^2},\gamma_i/t_i)\},
$$
where $\Upsilon=1/\eps$ and $\gamma_i = \gamma(t_i)$. Let $f=\Upsilon \cdot \log (\xi)$.
Setting $q_h(l)=\log a(m^{\eps^h},m^{\eps^{h+1}},1/2^l)$, we obtain from above,  and substituting $\gamma_i$ with $\log \gamma_i = \sqrt{\eps}t_i$: 
$$
q_h(l)\le f + \max_{\stackrel{t_1,\ldots,t_{\Upsilon}\ge 0}{t_1 + t_2 + \ldots + t_{\Upsilon} \le l+O(\Upsilon)}} \sum_i\max\{\gamma_i,q_{h+1}(t_i-\gamma_i)\}.
$$
	
Now let $H' = \log_{\Upsilon}(H)$ and hence $b = n^{1/H} = n^{\epsilon^{H'}}$. 
Then $q_{H'}(l)\le f+l+O(\Upsilon)$ by the base case. 

Now for $h \leq H'$ we prove, by induction, that
$$
q_h(l)\le (f+O(\Upsilon))\cdot \Upsilon^{2(H'-h)}+(1-\sqrt{\eps})^{H'-h}\cdot l.
$$
Indeed, we first check the base case for $h=H'$: $q_{H'}(l)\le f+l+\Upsilon$.
Now note that
$q_h(t_i-\gamma(t_i))\le f\cdot \Upsilon^{H'-h}+(1-\sqrt{\eps})^{H'-h+1} \cdot t_i$. Hence, we can verify the inductive step for $h<H'$:
\begin{align*}
q_h(l)
\le& 
f + \max_{\stackrel{t_1,\ldots,t_{\Upsilon}\ge 0}{t_1 + t_2 + \ldots + t_{\Upsilon} \le l+O(\Upsilon)}} \sum_i\max\{\gamma(t_i),q_{h+1}(t_i-\gamma(t_i))\}.
\\ \le &
f + \max_{\stackrel{t_1,\ldots,t_{\Upsilon}\ge 0}{t_1 + t_2 + \ldots + t_{\Upsilon} \le l+O(\Upsilon)}} \sum_i (f+O(\Upsilon))\cdot \Upsilon^{2(H'-h-1)}+(1-\sqrt{\eps})^{H'-h} \cdot t_i
\\ \le &
f+(f+O(\Upsilon))\cdot \Upsilon^{2(H'-h)-1}+(1-\sqrt{\eps})^{H'-h}(l+O(\Upsilon))
\\ \le &
(f+O(\Upsilon))\cdot \Upsilon^{2(H'-h)}+(1-\sqrt{\eps})^{H'-h} \cdot l
\end{align*}

where we omitted $\gamma(t_i)$ from the max in the 2nd inequality since $\sqrt{\eps} t_i$ is smaller than the inductive hypothesis on $q_{h+1}(t_i)$ (using the particular choice on the value of $H$).

In particular, this bound implies $q_0(l)=O(f)\cdot \Upsilon^{2H'}+\sqrt{\eps}l = f\cdot O(H^2\cdot \Upsilon) +\sqrt{\eps}l$. 

Substituting back, we obtain that $a(m,m^\eps, 1/L)\le 2^{O(H\Upsilon) \cdot f+\sqrt{\eps} \log (L)}\le L^{\sqrt{\eps}}\cdot \xi^{O(H^2)}$.
\end{proof}

\begin{proof}[Proof of \Cref{cl::recurse}, (function $c$)]

Similar to the previous proof, setting $\tau=m^\eps$ with $t_i, \Upsilon$ from before too:
$$
c(m,m^\eps,1/D,1/L) \le (\log(n))^{O(\Upsilon)} \cdot\left(\sum_{i=1}^{\Upsilon}\tfrac{m^\eps D/r_1..r_{i-1}}{\log n} \prod_{j<i} c(\tau,\tau^\eps,1/r_j,\gamma(t_j)/t_j)\right),
$$
where the $r_i's$ satisfy $r_1,\ldots r_\Upsilon\ge 1 $ and $r_1r_2\ldots r_\Upsilon= D$. We further can simplify by dividing by $D$:
$$
\tfrac{1}{D}c(m,m^\eps,1/D,1/L) \le (\log(n))^{O(\Upsilon)} \cdot\tfrac{m^\eps}{\log n}\cdot\left(\sum_{i=1}^{\Upsilon} \prod_{j<i} \tfrac{1}{r_j}c(\tau,\tau^\eps,1/r_j,\gamma(t_j)/t_j)\right).
$$
As before, let $c_h(D)=\tfrac{1}{D}c(m^{\eps ^h},m^{\eps^{h+1}},1/D,1/L)$, for which we get:
$$
c_h(D)=(\log n)^{O(\Upsilon)}\cdot m^{\epsilon^{h+1}}\cdot  \max_{i} \prod_{j<i} c_{h+1}(r_j),
$$
with the base case $c_{H'}(D)=(\log n)^{O(1)}$ (when we are taking the maximal $\gamma$). Hence we obtain that
$$
c_0(D)\le (\log n)^{\Upsilon^{O(H')}}\cdot m^{\eps\cdot H'}.
$$
Plugging back into $c$ we obtain that:
$$
c(m,m^\eps,1/D,1/L) \le D\cdot (\log n)^{\Upsilon^{O(H')}}\cdot m^{\eps\cdot H'}.
$$

\end{proof}



\end{document}